\documentclass[11pt]{article}

\usepackage{amsfonts,amsmath,amssymb,amsthm}
\usepackage{indentfirst,color}
\usepackage[a4paper,ignoreall]{geometry}
\usepackage{hyperref,enumerate}
\usepackage{authblk}
\usepackage{graphicx, xypic, tikz}
\usepackage{algorithm}
\usepackage{algorithmicx}
\usepackage{algpseudocode}
\usepackage{verbatim}
\usepackage{url,hyperref,enumerate}
\usepackage{mathtools}
\usepackage{tikz}
\usepackage{subcaption}
\usepackage{caption}
\usepackage{float}
\geometry{left=2cm, right=2cm, top=2cm, bottom=2cm}

\usepackage[T1]{fontenc}

\newcommand{\tr}{ \rm Tr}
\newcommand{\Tr}{ \rm Tr}
\newcommand{\gf}{ \mathbb{F}}
 
\newcommand{\Zz}{\mathbb {Z}}
\newcommand{\Z}{\mathbb {Z}}
\newcommand{\calD}{ \mathcal D}

\newcommand{\fqm}{ {\gf_{q^m}} }
\newcommand{\fq}{ {\gf_{q}} }
\newcommand{\fqmstar}{ {\gf_{q^m}^*} }
\newcommand{\fqstar}{ {\gf_{q}^*} }

\newtheorem{theorem}{Theorem}
\newtheorem{corollary}{Corollary}

\newtheorem{lemma}{Lemma}

\newtheorem{example}{Example}

\newtheorem{remark}{Remark}

\def\yin#1 {\fbox {\footnote {\ }}\ \footnotetext { From Yin: {\color{blue}#1}}}
\def\hyin#1 {}

\makeatletter
\newcommand{\figcaption}{\def\@captype{figure}\caption}
\newcommand{\tabcaption}{\def\@captype{table}\caption}
\makeatother

\begin{document}

\title{Six Constructions of Difference Families}
\author[1]{Cunsheng Ding, \thanks{Email: cding@ust.hk}}
\author[2]{Chik How Tan, \thanks{Email: tsltch@nus.edu.sg}}
\author[3]{Yin Tan, \thanks{Corresponding author. Email: yin.tan@uwaterloo.ca. Tel: +1 (519) 888-4567 EXT 31246.}}

\affil[1]{Department of Computer Science and Engineering \authorcr The 
Hong Kong University of Science and Technology\authorcr Clear Water Bay,
Kowloon, Hong Kong}
\affil[2]{Temasek Laboratories\authorcr  National University of Singapore, 117411 Singapore}
\affil[3]{Department of Electrical and Computer Engineering\authorcr University of Waterloo, Canada}

\renewcommand\Authands{ and }

\maketitle

\begin{abstract}
	In this paper, six constructions of difference families are presented. These constructions make 
use of difference sets, almost difference sets and disjoint difference families, and give new point 
of views of relationships among these combinatorial objects. Most of the constructions work for
all finite groups. Though these constructions look simple, they produce many difference families 
with new parameters. In addition to the six new constructions, new results about intersection 
numbers are also derived. 
  
{\bf{Keywords}}: Almost difference sets, difference sets, difference families, disjoint difference families, 
balanced incomplete block designs,
near-resolvable block designs, intersection numbers.

\bigskip
{\bf{MSC}}: 05B05, 05B10
\end{abstract}

\section{Introduction}

Let $G$ be a group of order $v$. A $(v, K, \gamma; u)$ \textit{difference family} (DF) in $G$ is 
a collection of subsets of $G$, $\calD=\{D_1, D_2, \ldots,D_u\}$, such that the multiset union
\begin{eqnarray*}
  \sum_{i=1}^u \{ xy^{-1}: x, y \in D_i, x \neq y \} = \gamma(G \setminus \{1\}),
\end{eqnarray*}
where $K$ is the set of cardinalities of all the base blocks $D_i$ and $1$ is the identity element of $G$. In the case of
$K=\{k\}$, it is called a $(v, k, \gamma; u)$ difference family.
If these $D_i$'s are pairwise disjoint, $\calD$ is called a {\em disjoint difference
family} (DDF). If a $(v, k, \gamma; u)$ difference family exists, we have
\begin{eqnarray}
uk(k-1)=\gamma (v-1).
\end{eqnarray}
Our definition of difference families here is taken from the monograph \cite{BJL}.

Difference families are well studied \cite{Bura,Cding,CD05,DR,DS,Miao,Mart,Wilson}.
A lot of information on difference families can be found in the monograph \cite{BJL}.
Difference families have applications in coding theory and cryptography \cite{Ogata}.
For various applications, it is necessary to develop constructions of difference families 
with flexible parameters.

Let $G$ be a group of order $v$ and $D$ a $k$-subset of
$G$. The set $D$ is a $(v, k, \lambda)$ {\em difference set} (DS) in $G$ if the differences
$$ d_1d_2^{-1},\ d_1,d_2\in D,d_1\ne d_2 $$
represent each nonidentity element of $G$ exactly $\lambda$ times.
We call $D$ an {\em Abelian (\emph{resp.} cyclic) difference set} if $G$ is an Abelian (resp. cyclic) group. 
Notice that $D$ is a $(v,k,\lambda)$ difference set if and only if 
$|D\cap Dw|=\lambda$ for all nonidentity element $w$ of $G$.
It follows from
\cite[Corollary 1.8]{BJL} that the form $d_1^{-1}d_2, d_1,d_2\in D$ may also represent each
nonidentity element $g$ exactly $\lambda$ times. The reader is referred to \cite{BJL} for 
further information on difference sets.

In this paper, we present four constructions (Theorems \ref{dsdf}, \ref{construction2}, \ref{BIBDconstruction}, 
\ref{thm-mainmain}) of difference families using difference sets.
Three of them work for all finite groups, while one requires that $G$ be Abelian. 
In Section \ref{sec-fourfour}, in order to obtain more general results, we use balanced incomplete block designs 
(defined in Section \ref{sec-fourfour}) 
to construct difference families (Theorem \ref{BIBDconstruction}). 
As an application, a new construction (Corollaries \ref{construction3}, \ref{construction5}) of difference families using difference sets is obtained. 
In Section \ref{sec-fivefive}, we present a construction (Theorem \ref{thm-mainmain}) using Abelian difference sets, and 
determine the intersection numbers (defined in Section \ref{sec-fivefive}) for several cases.
Moreover, we use Gauss sums to compute the intersection numbers for general cases of the Singer difference sets (Theorem \ref{semiprimitive}). It is noted that a particular case of the intersection number problem is solved in \cite{kestenband} using geometric arguments and this problem is also considered in a recent paper \cite{momihara}.

In Section \ref{sec-disj}, we first use near-resolvable block designs to construct difference families 
and then apply the construction to disjoint difference families to obtain a specific class of difference 
families.

Let $G$ be an Abelian group of order $v$. A $k$-subset $D$ of $G$
is a $(v, k, \lambda, t)$ {\em almost difference set} (ADS) in $G$ if
the difference function $d_D(w)=|D \cap Dw|$ takes on $\lambda$
altogether $t$ times and $\lambda +1$ altogether $v-1-t$ times when $w$
ranges over all the nonidentity elements of $G$. The reader is referred to
\cite{ADHKM} for a survey of almost difference sets. 
In Section \ref{sec-adscons}, we give a construction of difference families employing 
almost difference families (Theorem \ref{adsdf}). We make concluding remarks in Section 
\ref{sec-final}.

\section{The first construction with difference sets}\label{sec-twotwo}

In this section, we give the first construction of difference families using difference sets.

\begin{theorem}
  \label{dsdf}
  Let $D$ be a $(v,k,\lambda)$ difference set in a group $G$. For any nonidentity 
  element $x\in G$, define the set 
  $$
   D_x=D\cap Dx,
  $$
  where $Dx:=\{dx: d \in D\}$.
  Then $\mathcal{F}=\{ D_x: x\in G\setminus\{1\}\}$ is a $(v,\lambda,\lambda(\lambda-1); v-1)$ 
  difference family in G. 
\end{theorem}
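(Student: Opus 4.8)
The plan is to verify separately the two defining features of a $(v,\lambda,\lambda(\lambda-1);v-1)$ difference family: that every base block has size $\lambda$, and that the multiset of all nonidentity differences covers each nonidentity element exactly $\lambda(\lambda-1)$ times. The block-size claim is immediate from the remark in the introduction that $D$ is a difference set if and only if $|D\cap Dx|=\lambda$ for every nonidentity $x$; since $D_x=D\cap Dx$, each of the $v-1$ base blocks has cardinality $\lambda$. The number of blocks is $v-1$ by construction, and the consistency relation $uk(k-1)=\gamma(v-1)$ with $k=\lambda$, $\gamma=\lambda(\lambda-1)$, $u=v-1$ holds automatically, so it need not be imposed.

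The substance is the difference count. Fix a nonidentity element $g\in G$ and let $N(g)$ denote the total number of ordered pairs $(a,b)$ with $a,b\in D_x$, $a\neq b$ and $ab^{-1}=g$, summed over all nonidentity $x$. I would first unwind membership in $D_x$: since $ab^{-1}=g$ forces $a=gb$, the conditions $a,b\in D_x=D\cap Dx$ translate into the four membership requirements $b\in D$, $gb\in D$, $bx^{-1}\in D$ and $gbx^{-1}\in D$. Introducing the auxiliary element $c:=bx^{-1}$, so that $x=c^{-1}b$, these four conditions become exactly $b,gb\in D$ together with $c,gc\in D$; that is, both $b$ and $c$ lie in the set $S_g:=\{d\in D:\ gd\in D\}$. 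The constraint $x\neq 1$ is equivalent to $b\neq c$.

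The key step is then to observe that $(x,a,b)\mapsto(b,c)$ is a bijection between the triples contributing to $N(g)$ and the ordered pairs $(b,c)\in S_g\times S_g$ with $b\neq c$: the inverse sends $(b,c)$ to $x=c^{-1}b$, $a=gb$, and one checks that the four membership conditions are preserved in both directions. Hence $N(g)=|S_g|(|S_g|-1)$. Finally I would identify $|S_g|$ with the difference-set parameter: writing $d_1=gd$ and $d_2=d$ for $d\in S_g$ yields $d_1,d_2\in D$ with $d_1d_2^{-1}=g$, and this correspondence is a bijection onto the representations of $g$ as a difference from $D$, so $|S_g|=\lambda$. Therefore $N(g)=\lambda(\lambda-1)$ for every nonidentity $g$, which is precisely the difference-family property.

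I expect the main obstacle to be the careful, order-sensitive bookkeeping in the non-Abelian setting: one must consistently track left versus right translates and inverses when unwinding $D_x=D\cap Dx$ and when identifying $S_g$ with the difference count, and must confirm that $(x,a,b)\mapsto(b,c)$ is genuinely a bijection — in particular that distinct triples never collapse and that $b\neq c$ corresponds exactly to $x\neq 1$. Once the reparametrization through $c=bx^{-1}$ is set up correctly, the count reduces cleanly to counting ordered distinct pairs from a set of size $\lambda$.
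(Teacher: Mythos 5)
Your proof is correct and follows essentially the same route as the paper's: both arguments fix a nonidentity $g$, identify the $\lambda$ representations of $g$ as a difference from $D$, and then count the $\lambda-1$ admissible translates $x$ for each representation --- your bijection $(x,a,b)\mapsto(b,c)$ with $c=bx^{-1}$ is exactly this two-step count packaged as a single global correspondence. The explicit reparametrization makes the non-Abelian bookkeeping slightly more transparent than the paper's version, but the underlying idea is identical.
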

\begin{proof}
Let $G^*=G\setminus\{1\}$. Since $D$ is a $(v,k,\lambda)$ difference set,
  $|D_x|=\lambda$ for every $x \in G^*$.
  Now we need to compute the multiset $M:=\{ d_1^{-1}d_2: d_1,d_2\in D_x, x\in G^* | d_1\ne d_2 \}$.
  For each $g\in G^*$, there are $\lambda$ pairs $(d_1,d_2)\in D\times D$
  such that $g=d_1^{-1}d_2$. Let $(d_1,d_2)$ be such a pair, where $d_1 \in D$ and
  $d_2 \in D$. We compute the multiplicity of $g=d_1^{-1}d_2$ in $M$.
  For this fixed pair $(d_1, d_2)$, the number of $x \in G^*$ such that
  $$
   d_1 \in Dx \mbox{ and } d_2 \in Dx
  $$
  is equal to
  $$
   |Dd_1^{-1}\cap Dd_2^{-1}|-1= |D \cap Dd_2^{-1}d_1|-1=\lambda-1.
  $$
  Hence, the multiplicity of $g$ in $M$ is $\lambda(\lambda-1)$. This is true
  for every $g \in G^*$. The conclusion of this theorem then follows.   
\end{proof}

Applying Theorem \ref{dsdf} to known Abelian difference sets, we obtain
difference families with new parameters described in the following corollaries.

\begin{corollary}\label{cor-11}
Let $m\ge 3$ be an integer, and let $q$ be a power of a prime. If $D$ is the Singer difference set in
$(\gf_{q^m}^*/\gf_{q}^*, \times)$ with parameters \cite{Sing}
$$
\left(\frac{q^m-1}{q-1}, \frac{q^{m-1}-1}{q-1}, \frac{q^{m-2}-1}{q-1}\right),
$$
then $\{D_x: x \in \gf_{q^m}^*/\gf_{q}^* \setminus \{1\}\}$ is a
difference family in $(\gf_{q^m}^*/\gf_{q}^*, \times)$ with parameters
$$
\left(\frac{q^m-1}{q-1},  \frac{q^{m-2}-1}{q-1}, \frac{q^{m-2}-1}{q-1}\frac{q^{m-2}-q}{q-1}; \frac{q^m-q}{q-1}\right).
$$

If $D$ is the Singer difference set in
$(\gf_{q^m}^*/\gf_{q}^*, \times)$ with parameters \cite{Sing}
$$
\left(\frac{q^m-1}{q-1}, q^{m-1}, q^{m-2}(q-1)\right),
$$
then $\{D_x: x \in \gf_{q^m}^*/\gf_{q}^* \setminus \{1\}\}$ is a
difference family in $(\gf_{q^m}^*/\gf_{q}^*, \times)$  with parameters
$$
\left(\frac{q^m-1}{q-1},  q^{m-2}(q-1), q^{m-2}(q-1)(q^{m-2}(q-1)-1); \frac{q^m-q}{q-1}\right).
$$

\end{corollary}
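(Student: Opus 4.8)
The plan is to read Corollary \ref{cor-11} as a pure specialization of Theorem \ref{dsdf}. Each displayed triple is, by \cite{Sing}, a valid parameter set for a Singer difference set $D$ in the cyclic group $(\gf_{q^m}^*/\gf_q^*,\times)$, so the hypothesis ``$D$ is a $(v,k,\lambda)$ difference set'' of Theorem \ref{dsdf} is already met. That theorem then guarantees, with no extra assumptions, that $\{D_x : x\in \gf_{q^m}^*/\gf_q^*\setminus\{1\}\}$ is a $(v,\lambda,\lambda(\lambda-1);v-1)$ difference family. Hence the entire task reduces to substituting the explicit $v$, $k$, $\lambda$ and simplifying the three derived quantities: the block size $\lambda$, the replication number $\lambda(\lambda-1)$, and the number of base blocks $u=v-1$.

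For the first difference set I would take $\lambda=\tfrac{q^{m-2}-1}{q-1}$, so the block size is $\tfrac{q^{m-2}-1}{q-1}$ as displayed. Rewriting over the common denominator $q-1$, one has $\lambda-1=\tfrac{q^{m-2}-1}{q-1}-1=\tfrac{q^{m-2}-q}{q-1}$, whence $\lambda(\lambda-1)=\tfrac{q^{m-2}-1}{q-1}\cdot\tfrac{q^{m-2}-q}{q-1}$, matching the stated third parameter exactly. The block count is $v-1=\tfrac{q^m-1}{q-1}-1=\tfrac{q^m-q}{q-1}$, which is the last entry. For the second difference set the identical argument applies with $\lambda=q^{m-2}(q-1)$: the block size is $q^{m-2}(q-1)$, the replication number is $q^{m-2}(q-1)\bigl(q^{m-2}(q-1)-1\bigr)$, and again $u=v-1=\tfrac{q^m-q}{q-1}$, so no further simplification beyond leaving $\lambda(\lambda-1)$ in factored form is needed.

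Because both assertions are direct instances of Theorem \ref{dsdf}, there is no genuine obstacle here; the only care required is the arithmetic of clearing the denominator $q-1$ when rewriting $\lambda-1$ in the first case. As an independent sanity check I would verify the counting identity $uk(k-1)=\gamma(v-1)$ from (1) for the new family: with $u=v-1$, block size $\lambda$, and $\gamma=\lambda(\lambda-1)$, both sides equal $(v-1)\lambda(\lambda-1)$, confirming internal consistency of the parameters and closing the proof.
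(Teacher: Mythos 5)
Your proposal is correct and matches the paper's intent exactly: the corollary is stated as a direct application of Theorem \ref{dsdf} to the two Singer parameter sets, and the paper offers no further argument beyond the substitution and simplification you carry out. Your arithmetic for $\lambda-1=\tfrac{q^{m-2}-q}{q-1}$ and $v-1=\tfrac{q^m-q}{q-1}$ is right, and the consistency check $uk(k-1)=\gamma(v-1)$ is a sensible (if optional) addition.
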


\begin{corollary}\label{cor-12}
Let $q \equiv 3 \bmod{4}$ be a power of an odd prime. If $D$ is the set of all quadratic residues
in $(\gf_{q}, +)$, then $D$ is a $(q, (q-1)/2, (q-3)/4)$ difference set in
$(\gf_{q}, +)$ \cite{Pale},
and $\{D_x: x \in \gf_{q}^*\}$ is a $(q, (q-3)/4, (q-3)(q-7)/16; q-1)$
difference family in $(\gf_{q}, +)$.

If $D$ consists of all the quadratic nonresidues in $(\gf_{q}, +)$ and $0$, then
$D$ is a $(q, (q+1)/2, (q+1)/4)$ difference set in $(\gf_{q}, +)$,
and $\{D_x: x \in \gf_{q}^*\}$ is a $(q, (q+1)/4, (q+1)(q-3)/16; q-1)$
difference family in $(\gf_{q}, +)$.

\end{corollary}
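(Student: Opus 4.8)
The plan is to treat the two halves of the corollary uniformly: each asserts a classical Paley-type difference set, after which the advertised difference family is exactly what Theorem~\ref{dsdf} produces. Since Theorem~\ref{dsdf} turns any $(v,k,\lambda)$ difference set into a $(v,\lambda,\lambda(\lambda-1);v-1)$ difference family, the whole content reduces to (i) verifying the two difference sets and (ii) substituting $v=q$ and the appropriate $\lambda$ into the output parameters. The difference family parameters never depend on $k$, only on $v$ and $\lambda$, so most of the labor is in the difference-set step.

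For the first difference set I would recover the Paley result by a direct character-sum count. Writing $\chi$ for the quadratic character of $\gf_q$ and representing the indicator of the nonzero squares by $(1+\chi(\cdot))/2$, the number of ways to write a fixed nonzero $w$ as a difference $a-b$ of two quadratic residues is
$$
N(w)=\frac14\sum_{b\ne 0,\,-w}\bigl(1+\chi(b)\bigr)\bigl(1+\chi(b+w)\bigr).
$$
Expanding and using $\sum_{b\ne 0}\chi(b)=0$ together with the standard evaluation $\sum_{b}\chi\bigl(b(b+w)\bigr)=-1$ for $w\ne 0$ collapses this to $N(w)=\tfrac14\bigl(q-3-\chi(w)-\chi(-w)\bigr)$. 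The decisive point is the hypothesis $q\equiv 3\pmod 4$, which forces $\chi(-1)=-1$ and hence $\chi(-w)=-\chi(w)$; the two character terms cancel and $N(w)=(q-3)/4$ becomes independent of $w$. This is precisely the $(q,(q-1)/2,(q-3)/4)$ difference set claimed, matching \cite{Pale}.

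For the second difference set I would avoid repeating the computation and instead invoke complementation: the complement of a $(v,k,\lambda)$ difference set is a $(v,\,v-k,\,v-2k+\lambda)$ difference set. Since $\{0\}\cup\{\text{nonresidues}\}$ is exactly the complement of the quadratic-residue difference set in $(\gf_q,+)$, substituting $v=q$, $k=(q-1)/2$, $\lambda=(q-3)/4$ yields parameters $(q,(q+1)/2,(q+1)/4)$, as asserted. A direct character-sum count analogous to the first case is an equally valid route and again hinges on $\chi(-1)=-1$.

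Finally I would read off the difference family parameters from Theorem~\ref{dsdf}. With $v=q$ and $\lambda=(q-3)/4$ the output $\lambda(\lambda-1)$ equals $(q-3)(q-7)/16$ and $v-1=q-1$, giving the first family $(q,(q-3)/4,(q-3)(q-7)/16;q-1)$; with $\lambda=(q+1)/4$ the product $\lambda(\lambda-1)$ equals $(q+1)(q-3)/16$, giving the second family $(q,(q+1)/4,(q+1)(q-3)/16;q-1)$. I expect no genuine obstacle here: the only real subtlety is the role of $q\equiv 3\pmod 4$ in the difference-set step—without it the quadratic residues form an almost difference set rather than a difference set—after which the difference families follow by a one-line substitution into the already-proved Theorem~\ref{dsdf}.
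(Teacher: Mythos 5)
Your proposal is correct and follows the same route as the paper: the corollary is stated there as a direct application of Theorem~\ref{dsdf} to the Paley difference set (cited to \cite{Pale}) and its complement, exactly as you do, with the parameters obtained by substituting $\lambda=(q-3)/4$ and $\lambda=(q+1)/4$ into $(v,\lambda,\lambda(\lambda-1);v-1)$. Your character-sum verification of the Paley difference set and the complementation formula merely fill in details the paper delegates to the citation, and all the computations check out.
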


\begin{corollary}\label{cor-13}
Let $q =4t^2+1$ be a power of a prime, where $t$ is odd. If $D$ is the set of all biquadratic residues
in $(\gf_{q}, +)$, then $D$ is a $(q, (q-1)/4, (q-5)/16)$ difference set in
$(\gf_{q}, +)$ \cite{Chol},
and $\{D_x: x \in \gf_{q}^*\}$ is a $(q, (q-5)/16, (q-5)(q-21)/256; q-1)$
difference family in $(\gf_{q}, +)$.

If $D$ is the complement of the biquadratic-residue difference set, then
$D$ is a $(q, (3q+1)/4, (9q+3)/16)$ difference set in $(\gf_{q}, +)$,
and $\{D_x: x \in \gf_{q}^*\}$ is a $(q, (9q+3)/16, (9q+3)(9q-13)/256; q-1)$
difference family in $(\gf_{q}, +)$.

\end{corollary}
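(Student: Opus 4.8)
The plan is to obtain both difference families as direct instances of Theorem~\ref{dsdf} applied to the biquadratic-residue difference sets, so that the only substantive ingredient beyond Theorem~\ref{dsdf} is the difference-set property itself. The first ingredient is the classical result of \cite{Chol}: when $q = 4t^2+1$ with $t$ odd, the set $D$ of biquadratic residues in $(\gf_q,+)$ is a $(q,(q-1)/4,(q-5)/16)$ difference set. I would cite this directly, since establishing the difference-set property is not the content of the corollary; the corollary only asserts the resulting difference-family parameters.

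For the complement $\bar D$ I would invoke the standard fact that the complement of a $(v,k,\lambda)$ difference set is a $(v,\,v-k,\,v-2k+\lambda)$ difference set. Substituting $v=q$, $k=(q-1)/4$ and $\lambda=(q-5)/16$ gives $v-k=(3q+1)/4$ and, after reducing to the common denominator $16$, $v-2k+\lambda = (9q+3)/16$, which is precisely the stated complementary difference set.

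It then remains to feed each difference set into Theorem~\ref{dsdf}, which converts a $(q,k,\lambda)$ difference set into a $(q,\lambda,\lambda(\lambda-1);q-1)$ difference family. For the biquadratic residues, $\lambda=(q-5)/16$, so $\lambda-1=(q-21)/16$ and hence $\lambda(\lambda-1)=(q-5)(q-21)/256$, yielding the first family. For the complement, $\lambda=(9q+3)/16$, so $\lambda-1=(9q-13)/16$ and $\lambda(\lambda-1)=(9q+3)(9q-13)/256$, yielding the second. The only point needing attention is that these parameters are genuine integers; this follows from the congruence $q\equiv 5 \pmod{16}$ forced by $q=4t^2+1$ with $t$ odd, which makes $(q-5)/16$ an integer and hence renders $\lambda$, $\lambda-1$, and their product well-defined. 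There is no genuine obstacle here: the corollary is a routine specialization of Theorem~\ref{dsdf} to the known biquadratic-residue difference sets, and the only real ``work'' is the arithmetic simplification of $\lambda(\lambda-1)$ over the common denominator $256$.
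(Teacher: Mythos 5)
Your proposal is correct and matches the paper's treatment exactly: the paper states these corollaries as immediate applications of Theorem~\ref{dsdf} to the Chowla biquadratic-residue difference sets (and their complements via the standard $(v,v-k,v-2k+\lambda)$ complementation), with the only work being the arithmetic simplification of $\lambda(\lambda-1)$, which you carry out correctly. Your added remark on integrality via $q\equiv 5\pmod{16}$ is a harmless bonus the paper does not spell out.
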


\begin{corollary}\label{cor-14}
Let $q =4t^2+9$ be a power of a prime, where $t$ is odd. If $D$ consists of
all biquadratic residues and zero in $(\gf_{q}, +)$, then $D$ is a
$(q, (q+3)/4, (q+3)/16)$ difference set in
$(\gf_{q}, +)$ \cite{Lehm},
and $\{D_x: x \in \gf_{q}^*\}$ is a $(q, (q+3)/16, (q+3)(q-13)/256; q-1)$
difference family in $(\gf_{q}, +)$.

If $D$ is the set of all biquadratic nonresidues, then
$D$ is a $(q, (3q-3)/4, (9q-21)/16)$ difference set in $(\gf_{q}, +)$,
and $\{D_x: x \in \gf_{q}^*\}$ is a $(q, (9q-21)/16, (9q-21)(9q-37)/256; q-1)$
difference family in $(\gf_{q}, +)$.
\end{corollary}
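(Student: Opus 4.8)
The plan is to apply Theorem \ref{dsdf} directly to the two Lehmer biquadratic-residue difference sets, so the only real work is the bookkeeping of parameters. First I would recall from Lehmer \cite{Lehm} that when $q = 4t^2 + 9$ is a prime power with $t$ odd, the set $D$ consisting of all biquadratic residues together with $0$ is a $(q, (q+3)/4, (q+3)/16)$ difference set in $(\gf_q, +)$. This is precisely the hypothesis that licenses the invocation of Theorem \ref{dsdf}, with $v = q$ and with the index of the difference set playing the role of $\lambda$ in that theorem.

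For the first assertion I would set $\lambda = (q+3)/16$. Theorem \ref{dsdf} then outputs the collection $\{D_x : x \in \gf_q^*\}$ as a $(v, \lambda, \lambda(\lambda-1); v-1)$ difference family. The block size is immediately $\lambda = (q+3)/16$ and the number of base blocks is $v - 1 = q - 1$, so the sole computation is the index:
\[
\lambda(\lambda - 1) = \frac{q+3}{16}\left(\frac{q+3}{16} - 1\right) = \frac{q+3}{16}\cdot\frac{q-13}{16} = \frac{(q+3)(q-13)}{256},
\]
which matches the claimed parameters.

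For the second assertion I would first pass to the complement. It is a standard fact that the complement of a $(v, k, \lambda)$ difference set is a $(v,\, v-k,\, v-2k+\lambda)$ difference set; a short substitution gives $v - k = (3q-3)/4$ and $v - 2k + \lambda = (9q-21)/16$, which identifies the set of all biquadratic nonresidues (the nonzero non-fourth-powers, of which there are exactly $q - (q+3)/4 = (3q-3)/4$) as a $(q, (3q-3)/4, (9q-21)/16)$ difference set. Applying Theorem \ref{dsdf} once more, now with index $(9q-21)/16$, produces a difference family with block size $(9q-21)/16$, with $q-1$ base blocks, and with index
\[
\frac{9q-21}{16}\left(\frac{9q-21}{16} - 1\right) = \frac{9q-21}{16}\cdot\frac{9q-37}{16} = \frac{(9q-21)(9q-37)}{256},
\]
as claimed.

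There is no genuine obstacle here: the substance of the argument lives entirely in Theorem \ref{dsdf}, and the corollary is merely its specialization to the Lehmer difference sets. The only points requiring care are confirming the two elementary arithmetic identities above and verifying the complement-parameter computation; none of these presents any difficulty.
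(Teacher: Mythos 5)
Your proposal is correct and matches the paper's (implicit) argument exactly: the corollary is stated as a direct specialization of Theorem \ref{dsdf} to the Lehmer biquadratic-residue difference set and its complement, and your parameter computations, including the complement parameters $(q,(3q-3)/4,(9q-21)/16)$ and the indices $(q+3)(q-13)/256$ and $(9q-21)(9q-37)/256$, all check out.
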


\begin{corollary}
  \label{cor-15}
  Let $q =8t^2+1=64s^2+9$ be a power of a prime, where $t$ and $s$ are odd.
  If $D$ is the set of all octic residues
  in $(\gf_{q}, +)$, then $D$ is a $(q, (q-1)/8, (q-9)/64)$ difference set in
  $(\gf_{q}, +)$ \cite{Chol},
  and $\{D_x: x \in \gf_{q}^*\}$ is a $(q, (q-9)/64, (q-9)(q-73)/64^2; q-1)$
  difference family in $(\gf_{q}, +)$.

  If $D$ is the complement of the octic-residue difference set, then
  $D$ is a $(q, (7q+1)/8, (7q+1)/64)$ difference set in $(\gf_{q}, +)$,
  and $\{D_x: x \in \gf_{q}^*\}$ is a $(q, (7q+1)/64, (7q+1)(7q-63)/64^2; q-1)$
  difference family in $(\gf_{q}, +)$.
\end{corollary}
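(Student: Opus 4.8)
The plan is to read both halves of the corollary as specializations of Theorem~\ref{dsdf}, so that the entire arithmetic content is concentrated in verifying the underlying difference set property of the octic residues. First I would invoke the classical cyclotomic result attributed to~\cite{Chol}: under the twin Diophantine condition $q = 8t^2+1 = 64s^2+9$ with $t$ and $s$ both odd, the set $D$ of nonzero octic residues in $(\gf_q,+)$ is a $(q,\,(q-1)/8,\,(q-9)/64)$ difference set. I would cite this rather than reprove it, since it is exactly here that the hypotheses on $q$, $t$, and $s$ are consumed: the claim reduces to evaluating the cyclotomic numbers of order $8$, and the parity conditions together with the representation $q=8t^2+1=64s^2+9$ are what force these numbers into the configuration that makes $D$ a difference set.

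With $k=(q-1)/8$ and $\lambda=(q-9)/64$ in hand, the first difference family assertion is an immediate substitution into Theorem~\ref{dsdf}. That theorem, applied to $D$ in $G=(\gf_q,+)$, states that $\{D_x : x \in \gf_q^*\}$ is a $(v,\lambda,\lambda(\lambda-1);v-1)$ difference family. Plugging in gives block size $\lambda=(q-9)/64$, index $\lambda(\lambda-1)=\frac{q-9}{64}\cdot\frac{q-73}{64}=\frac{(q-9)(q-73)}{64^2}$, and $v-1=q-1$ base blocks, which is precisely the stated parameter set. No case analysis or estimation is required.

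For the complement I would run the same machine after first passing to the complementary difference set. I would use the standard counting identity, valid in any group of order $v$, that for a $k$-subset $D$ and any nonidentity $w$ one has $|\overline{D}\cap \overline{D}w| = v - 2k + |D\cap Dw|$, where $\overline{D}=\gf_q\setminus D$; since $D$ is a $(q,k,\lambda)$ difference set this value is the constant $v-2k+\lambda$, so $\overline{D}$ is a $(v,\,v-k,\,v-2k+\lambda)$ difference set. Specializing gives the complementary block size $v-k=q-(q-1)/8=(7q+1)/8$, and its index is obtained directly from $v-2k+\lambda$. Feeding this difference set back into Theorem~\ref{dsdf} then yields the claimed complementary difference family on $q-1$ base blocks.

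The only genuine obstacle is the cyclotomic difference set input of the first paragraph; the difference family step itself is mechanical. I would therefore treat the complement index with care, reading it off the formula $v-2k+\lambda$ and checking it against the stated value, since a mis-tracked constant factor in this one substitution is the easiest slip to make. As a sanity check on the whole statement, the smallest admissible prime is $q=73$ (so $t=3$, $s=1$), where $D$ is the $(73,9,1)$ planar difference set and Theorem~\ref{dsdf} returns the degenerate but consistent family $(73,1,0;72)$ of singletons; the complement is then the $(73,64,\,v-2k+\lambda)$ difference set, against which the complementary index coming out of $v-2k+\lambda$ can be explicitly validated before trusting the general formula.
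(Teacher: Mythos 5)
Your approach is exactly the paper's: the corollary is stated without proof as a direct application of Theorem~\ref{dsdf} to the octic-residue difference set of Chowla, with the second half obtained by first passing to the complementary difference set. The first half checks out verbatim: $\lambda=(q-9)/64$ gives $\lambda(\lambda-1)=(q-9)(q-73)/64^2$ on $q-1$ base blocks.

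Your insistence on verifying the complement index via $v-2k+\lambda$ is well placed, because carrying it out exposes a discrepancy with the statement. With $v=q$, $k=(q-1)/8$, $\lambda=(q-9)/64$ one gets
$$v-2k+\lambda=q-\frac{q-1}{4}+\frac{q-9}{64}=\frac{49q+7}{64}=\frac{7(7q+1)}{64},$$
not $(7q+1)/64$ as printed; indeed the printed triple $\left(q,\tfrac{7q+1}{8},\tfrac{7q+1}{64}\right)$ fails the identity $k(k-1)=\lambda(v-1)$ by a factor of $7$ (at $q=73$ the complement is a $(73,64,56)$ difference set, not $(73,64,8)$). Consequently the complementary difference family should have block size $(49q+7)/64$ and index $(49q+7)(49q-57)/64^2$. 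So your proof strategy is correct and, followed through as you describe, it corrects the second half of the corollary rather than merely reproducing it.
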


\begin{corollary}\label{cor-16}
Let $q =8t^2+49=64s^2+441$ be a power of a prime, where $t$ is odd and $s$
is even.
If $D$ consists of all the octic residues and zero
in $(\gf_{q}, +)$, then $D$ is a $(q, (q+7)/8, (q+7)/64)$ difference set in
$(\gf_{q}, +)$ \cite{Chol},
and $\{D_x: x \in \gf_{q}^*\}$ is a $(q, (q+7)/64, (q+7)(q-57)/64^2; q-1)$
difference family in $(\gf_{q}, +)$.

If $D$ is the set of all the octic nonresidues, then
$D$ is a $(q, (7q-7)/8, (49q-105)/64)$ difference set in $(\gf_{q}, +)$,
and $\{D_x: x \in \gf_{q}^*\}$ is a $(q, (49q-105)/64, (49q-105)(49q-169)/64^2; q-1)$
difference family in $(\gf_{q}, +)$.
\end{corollary}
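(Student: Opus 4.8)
The plan is to treat Corollary~\ref{cor-16} as two direct instances of Theorem~\ref{dsdf}. In each case I first pin down a $(q,k,\lambda)$ difference set in $(\gf_{q},+)$, and then Theorem~\ref{dsdf} immediately outputs the family $\{D_x : x \in \gf_{q}^*\}$ as a $(q,\lambda,\lambda(\lambda-1);q-1)$ difference family. Thus the whole statement reduces to (i) certifying the two underlying difference-set parameters and (ii) evaluating $\lambda(\lambda-1)$ in closed form.

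For the first part, the set $D$ of octic residues together with $0$ is a $(q,(q+7)/8,(q+7)/64)$ difference set in $(\gf_{q},+)$ precisely under the hypothesis $q=8t^2+49=64s^2+441$ with $t$ odd and $s$ even; this is one of the classical cyclotomic difference sets, and I would cite \cite{Chol} for it, taking $v=q$, $k=(q+7)/8$, and $\lambda=(q+7)/64$. Feeding these into Theorem~\ref{dsdf} gives a $(q,(q+7)/64,\lambda(\lambda-1);q-1)$ family. Since $\lambda-1=(q+7-64)/64=(q-57)/64$, I obtain $\lambda(\lambda-1)=(q+7)(q-57)/64^2$, the asserted index.

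For the second part I would first observe that the octic nonresidues form exactly the complement $\gf_{q}\setminus D$ of the set above: writing $C_0$ for the group of octic residues, one has $D=C_0\cup\{0\}$, so the seven remaining cyclotomic classes are precisely the nonresidues. I would then invoke the standard complementation rule, namely that the complement of a $(v,k,\lambda)$ difference set is a $(v,\,v-k,\,v-2k+\lambda)$ difference set. With $v=q$, $k=(q+7)/8$, $\lambda=(q+7)/64$ this yields $v-k=(7q-7)/8$ and, clearing the common denominator $64$, $v-2k+\lambda=(64q-16(q+7)+(q+7))/64=(49q-105)/64$, matching the stated parameters. Applying Theorem~\ref{dsdf} with this new $\lambda'=(49q-105)/64$ and using $\lambda'-1=(49q-169)/64$ gives the index $\lambda'(\lambda'-1)=(49q-105)(49q-169)/64^2$.

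There is no genuine obstacle here, as the argument is a mechanical application of Theorem~\ref{dsdf}; the only two points needing care are recognizing that the octic nonresidue set is the \emph{exact} additive complement of $D$ (so that the complementation rule applies without modification), and performing the simplifications of $\lambda-1$, $v-2k+\lambda$, and $\lambda'-1$ cleanly over the denominator $64$ so that the indices collapse to the displayed products divided by $64^2$.
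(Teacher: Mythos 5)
Your proposal is correct and is exactly the argument the paper intends: Corollary~\ref{cor-16} is stated as a direct application of Theorem~\ref{dsdf} to the octic-residue-plus-zero difference set of \cite{Chol} and to its complement, and your parameter computations ($\lambda-1=(q-57)/64$, $v-2k+\lambda=(49q-105)/64$, $\lambda'-1=(49q-169)/64$) all check out. The paper gives no separate proof beyond this, so there is nothing further to reconcile.
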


\begin{corollary}\label{cor-17}
Let $q$ and $q+2$ both be prime powers. If $D$ is the twin-prime-power difference set
in $(\gf_{q} \times \gf(q+2), +)$, then $D$ is a $(q(q+2), (q^2+2q-1)/2, (q^2+2q-3)/4)$
difference set in $(\gf_{q} \times \gf(q+2), +)$ \cite{Stor,Whit},
and
$$\{D_x: x \in \gf_{q}\times \gf(q+2) \setminus \{(0,0)\}\}$$
is a $(q(q+2), (q^2+2q-3)/4, (q^2+2q-3)(q^2+2q-7)/16; q^2+2q-1)$
difference family in $(\gf_{q} \times \gf(q+2), +)$.

If $D$ is the complement of the twin-prime-power difference set, then $D$ is a
$(q(q+2), (q^2+2q+1)/2, (q^2+2q+1)/4)$
difference set in $(\gf_{q} \times \gf(q+2), +)$ \cite{Whit},
and
$$\{D_x: x \in \gf_{q}\times \gf(q+2) \setminus \{(0,0)\}\}$$
is a $(q(q+2), (q^2+2q+1)/4, (q^2+2q+1)(q^2+2q-3)/16; q^2+2q-1)$
difference family in $(\gf_{q} \times \gf(q+2), +)$.
\end{corollary}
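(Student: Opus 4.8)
The plan is to obtain this corollary as a direct specialization of Theorem~\ref{dsdf}; consequently the substance of the argument lies entirely in supplying the correct difference-set parameters and performing two short algebraic simplifications. I would begin by invoking the twin-prime-power (Stanton--Sprott) construction from \cite{Stor,Whit}: when both $q$ and $q+2$ are prime powers, the additive group $(\gf_q \times \gf_{q+2}, +)$ contains a difference set $D$ with parameters $(v,k,\lambda) = \bigl(q(q+2),\,(q^2+2q-1)/2,\,(q^2+2q-3)/4\bigr)$. This is precisely the hypothesis needed to apply Theorem~\ref{dsdf}.

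With those parameters fixed, Theorem~\ref{dsdf} immediately yields that $\{D_x : x \in (\gf_q \times \gf_{q+2})\setminus\{(0,0)\}\}$ is a $(v,\lambda,\lambda(\lambda-1);v-1)$ difference family. Setting $v = q(q+2) = q^2+2q$ gives $v-1 = q^2+2q-1$ base blocks, each of size $\lambda = (q^2+2q-3)/4$, so the only remaining check is the value of $\gamma$:
$$
\lambda(\lambda-1) = \frac{q^2+2q-3}{4}\cdot\frac{q^2+2q-7}{4} = \frac{(q^2+2q-3)(q^2+2q-7)}{16},
$$
which is the claimed parameter. This settles the first assertion.

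For the complement I would use the standard fact that if $D$ is a $(v,k,\lambda)$ difference set in a group $G$, then $G\setminus D$ is a $(v,\,v-k,\,v-2k+\lambda)$ difference set. Applied to the twin-prime-power set this gives block size $(q^2+2q+1)/2$ and new index $1 + (q^2+2q-3)/4 = (q^2+2q+1)/4$, agreeing with the parameters attributed to \cite{Whit}. A second application of Theorem~\ref{dsdf}, now with $\lambda = (q^2+2q+1)/4$, together with the analogous simplification $\lambda(\lambda-1) = (q^2+2q+1)(q^2+2q-3)/16$, completes the proof.

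Since each step is either a literature citation or elementary arithmetic, there is no genuine obstacle. The only point demanding a little care is bookkeeping: keeping the two parameter sets distinct and confirming that in each case $\lambda(\lambda-1)$ collapses to the displayed product rather than being left in unsimplified form.
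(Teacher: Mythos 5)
Your proposal is correct and matches the paper's approach exactly: the paper presents Corollary~\ref{cor-17} as an immediate application of Theorem~\ref{dsdf} to the twin-prime-power difference set and its complement, with the same parameter substitutions and simplifications you carry out. Your extra verification of the complement's parameters via the standard $(v,\,v-k,\,v-2k+\lambda)$ formula is a harmless addition to what the paper simply cites from \cite{Whit}.
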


The following families of difference sets yield automatically difference
families with new parameters:
\begin{itemize}
\item Hadamard difference sets with parameters $(4s^2, 2s^2-s, s^2-s)$ \cite{BJL}.

\item Chen difference sets with parameters
$$
\left( 4q^{2t} \frac{q^{2t}-1}{q^2-1},  q^{2t-1} \left( \frac{2(q^{2t}-1}{q+1} +1\right),  q^{2t-1}(q-1) \frac{q^{2t-1}+1}{q+1}\right),
$$
where $q$ is a prime power and $t$ is any positive integer \cite{Chen}.
\end{itemize}

Next, we give an example using nonabelian difference sets.
\begin{example}
	In \cite[Example 2.1.8]{Pott}, there is a nonabelian $(100,45,20)$ difference set.
	By Theorem \ref{dsdf}, we can get a $(100,20,380;99)$ difference family in an nonabelian group.
	This example is interesting as Abelian $(100,45,20)$ difference sets do not exist by known results.
\end{example}

Table \ref{tab-111111}  summarizes the new parameters of difference families using Abelian difference sets obtained in this section. 

\begin{table}[ht]
  \label{newparabelian}
\centering
\caption{Parameters of the difference families obtained in Section \ref{sec-twotwo}, where $k$ and $\gamma$ are the block size and frequency respectively.}\label{tab-111111}
\begin{tabular}{|c|c|c|c|c|c|} \hline
$v$ & $k$ & $\gamma$ & u & $G$ & reference \\
\hline \hline
$\frac{q^m-1}{q-1} $ &  $ \frac{q^{m-2}-1}{q-1}$ & $k(k-1)$ & $v-1$ & $(\gf_{q^m}^*/\gf_{q}^*, \times)$ & Cor. \ref{cor-11}\\
\hline
$\frac{q^m-1}{q-1} $ &  $q^{m-2}(q-1) $ & $k(k-1)$ & $v-1$ & $(\gf_{q^m}^*/\gf_{q}^*, \times)$ & Cor. \ref{cor-11}\\
\hline
$q $ &  $\frac{q-3}{4} $ & $k(k-1)$ & $v-1$ & $(\gf_{q}, +)$ & Cor. \ref{cor-12}\\
\hline
$q $ &  $\frac{q+1}{4} $ & $k(k-1)$ & $v-1$ & $(\gf_{q}, +)$ & Cor. \ref{cor-12}\\
\hline
$q $ &  $\frac{q-5}{16} $ & $k(k-1)$ & $v-1$ & $(\gf_{q}, +)$ & Cor. \ref{cor-13}\\
\hline
$q $ &  $\frac{9q+3}{16} $ & $k(k-1)$ & $v-1$ & $(\gf_{q}, +)$ & Cor. \ref{cor-13}\\
\hline
$q $ &  $\frac{q+3}{16} $ & $k(k-1)$ & $v-1$ & $(\gf_{q}, +)$ & Cor. \ref{cor-14}\\
\hline
$q $ &  $\frac{9q-21}{16} $ & $k(k-1)$ & $v-1$ & $(\gf_{q}, +)$ & Cor. \ref{cor-14}\\
\hline
$q $ &  $\frac{q-9}{64} $ & $k(k-1)$ & $v-1$ & $(\gf_{q}, +)$ & Cor. \ref{cor-15}\\
\hline
$q $ &  $\frac{7q+1}{64} $ & $k(k-1)$ & $v-1$ & $(\gf_{q}, +)$ & Cor. \ref{cor-15}\\
\hline
$q $ &  $\frac{q+7}{64} $ & $k(k-1)$ & $v-1$ & $(\gf_{q}, +)$ & Cor. \ref{cor-16}\\
\hline
$q $ &  $\frac{49q-105}{64} $ & $k(k-1)$ & $v-1$ & $(\gf_{q}, +)$ & Cor. \ref{cor-16}\\
\hline
$q(q+2) $ &  $\frac{q^2+2q-3}{4} $ & $k(k-1)$ & $v-1$ & $(\gf_{q} \times \gf_{q+2},+)$ & Cor. \ref{cor-17}\\
\hline
$q(q+2) $ &  $\frac{q^2+2q+1}{4} $ & $k(k-1)$ & $v-1$ & $(\gf_{q} \times \gf_{q+2},+)$ & Cor. \ref{cor-17}\\
\hline
$q $ &  $\frac{q-3}{4} $ & $\frac{k(k-1)}{2}$ & $\frac{v-1}{2}$ & $(\gf_{q}, +)$ & Cor. \ref{cor-12}\\
\hline
\end{tabular}
\end{table}

\section{The second construction with difference sets}\label{sec-threethree}

Now we introduce the second construction of difference families using difference sets.
The following well-known result is useful in this section.

\begin{lemma}
  \cite{BJL}
  \label{dfstabilizer1}
  Let $G$ be a group of order $v$ and $S=\{B_1,\cdots,B_s\}$ be a family of base blocks whose $G$-stabilizer are
  all trivial. Then $S$ is a $(v,k,\lambda;s)$ difference family if and only if the design $\mathcal{S}$ generated
  by $S$ is a $2$-$(v,k,\lambda)$ design. 
\end{lemma}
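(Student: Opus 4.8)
The plan is to realize both sides of the equivalence as two ways of counting one and the same quantity: the number of incidences between a fixed pair of distinct points and the blocks of the design. Recall that the design $\mathcal{S}$ generated by $S$ has point set $G$ and has as its blocks all right translates $B_i g$ with $1 \le i \le s$ and $g \in G$. Since each base block $B_i$ has trivial $G$-stabilizer, the map $g \mapsto B_i g$ is injective, so the orbit of $B_i$ consists of exactly $v$ distinct blocks, each of size $k$. This is precisely where the stabilizer hypothesis enters, and it guarantees that counting parameters $(i,g)$ is the same as counting distinct blocks.

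The key step is a counting bijection. I would fix two distinct points $x, y \in G$ and put $w = xy^{-1}$, so that $w \neq 1$. For a fixed base block $B_i$, I set up a bijection between
$$
\{\, g \in G : x \in B_i g \text{ and } y \in B_i g \,\}
\quad\text{and}\quad
\{\, (b_1,b_2) \in B_i \times B_i : b_1 b_2^{-1} = w \,\}.
$$
In one direction, send $g$ to the pair $(b_1,b_2) = (xg^{-1}, yg^{-1})$; since $x,y \in B_i g$, both coordinates lie in $B_i$, and $b_1 b_2^{-1} = xg^{-1}(yg^{-1})^{-1} = xy^{-1} = w$. In the other direction, send a pair $(b_1,b_2)$ with $b_1 b_2^{-1} = w$ to $g = b_1^{-1} x$; a direct check gives $xg^{-1} = b_1 \in B_i$ and $yg^{-1} = yx^{-1}b_1 = w^{-1}b_1 = b_2 \in B_i$. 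These two maps are mutually inverse. Note that $w \neq 1$ forces $b_1 \neq b_2$, so the right-hand set is exactly the set of ordered representations of $w$ as a difference drawn from $B_i$.

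Summing over $i$, I would conclude that the number of blocks of $\mathcal{S}$ containing both $x$ and $y$ equals the multiplicity of $w = xy^{-1}$ in the difference multiset $\sum_{i=1}^{s}\{\, b_1 b_2^{-1} : b_1, b_2 \in B_i,\ b_1 \neq b_2 \,\}$. As $\{x,y\}$ ranges over all pairs of distinct points, $w$ ranges over all nonidentity elements of $G$, and every nonidentity element so arises. Hence the assertion that every nonidentity element of $G$ occurs exactly $\lambda$ times among the differences, which is the definition of a $(v,k,\lambda;s)$ difference family, is equivalent to the assertion that every pair of distinct points lies in exactly $\lambda$ blocks of $\mathcal{S}$. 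Together with the fact that $\mathcal{S}$ has $v$ points and that all of its blocks have size $k$, the latter is exactly the definition of a $2$-$(v,k,\lambda)$ design, which gives the equivalence.

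I expect the main obstacle to be careful bookkeeping rather than any deep idea: one must fix the convention for the group action (right translation here), verify that the bijection is genuinely well defined and inverse on both sides, and keep the ordered-versus-unordered distinction straight so that the replication number of a point-pair matches the multiplicity of a single difference rather than twice it. The triviality of the stabilizers must be invoked at exactly the right moment, namely to pass from counting parameters $(i,g)$ to counting distinct blocks, since without it the generated design could carry repeated blocks and the clean numerical equivalence would break down.
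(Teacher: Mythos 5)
Your argument is correct and is the standard double-counting proof of this fact; the paper itself gives no proof, citing it directly from the Beth--Jungnickel--Lenz monograph. The bijection $g \mapsto (xg^{-1},yg^{-1})$ identifying translates containing $\{x,y\}$ with ordered representations of $xy^{-1}$ as a difference from $B_i$, together with the trivial-stabilizer hypothesis to equate parameter pairs $(i,g)$ with blocks of the development, is exactly how the cited result is established.
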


We show the following result.

\begin{lemma}
    \label{shouldivede}
    Let $D$ be a $(v,k,\lambda)$ difference set in a group $G$ of order $v$. If there exist $x,g\in G^*\triangleq G\setminus\{1\}$ such that
    $(D_x)^g=(D\cap Dx)^g\triangleq Dg\cap Dxg=D_x$, then $\mbox{ord}(g)\mid \lambda$, where $\mbox{ord}(g)$ denotes the order of $g$ in the group $G$.
\end{lemma}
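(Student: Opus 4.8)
The plan is to exploit the one structural consequence of the hypothesis, namely that $D_x$ is fixed under right translation by $g$, and then read off a divisibility constraint from a coset count. First I would record the size of $D_x$. By the proof of Theorem~\ref{dsdf} (equivalently, by the coset characterization of difference sets recalled in the introduction, $|D\cap Dw|=\lambda$ for all $w\in G^*$), we have $|D_x|=|D\cap Dx|=\lambda$ for every $x\in G^*$. So $D_x$ is a $\lambda$-element subset of $G$, and it is here that the assumption $x\in G^*$ is used.

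Next I would unwind the hypothesis. By the definition given in the statement, $(D_x)^g$ denotes $Dg\cap Dxg=(D\cap Dx)g=D_xg$, so the assumption $(D_x)^g=D_x$ says precisely that $D_xg=D_x$, i.e. $D_x$ is invariant under right multiplication by $g$. Iterating, $D_xg^i=D_x$ for every integer $i$, so $D_x$ is invariant under the whole cyclic subgroup $\langle g\rangle$ acting on the right.

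The key observation is then that any subset of $G$ stable under right translation by $\langle g\rangle$ is a disjoint union of left cosets of $\langle g\rangle$. Concretely, for any $s\in D_x$ we have $sg^i\in D_x$ for all $i$, so the left coset $s\langle g\rangle$ is contained in $D_x$; since distinct left cosets of $\langle g\rangle$ are either equal or disjoint, $D_x$ partitions into left cosets of $\langle g\rangle$, each of cardinality $|\langle g\rangle|=\mathrm{ord}(g)$. Counting cardinalities gives $\mathrm{ord}(g)\mid |D_x|=\lambda$, which is the claim.

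There is no real obstacle here; the only points requiring care are the direction of the translation — that $(D_x)^g$ means the right multiple $D_xg$, as fixed by the definition $Dg\cap Dxg$, rather than a conjugate — and the elementary but essential fact that a set stable under right multiplication by $g$ must be a union of full left cosets of $\langle g\rangle$, which is exactly what forces $\mathrm{ord}(g)\mid\lambda$.
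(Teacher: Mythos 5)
Your proposal is correct and follows essentially the same route as the paper: the paper lets $N=\langle g\rangle$ act on $D_x$ by right multiplication and notes every orbit has full length $|N|$, while you phrase those orbits as left cosets $s\langle g\rangle$ contained in $D_x$; either way $\mathrm{ord}(g)$ divides $|D_x|=\lambda$. No substantive difference.
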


\begin{proof}
    Assume $x,g\in G^*$ such that $(D_x)^g=D_x$. Let $N=\langle g \rangle$ be the cyclic subgroup generated by $g$.
    Define the group $N$ acting on $D_x$ by $\alpha^{g^i}=\alpha g^i, \forall \alpha\in D_x, g^i\in N$. Clearly this action is faithful.
    Therefore, for each $\alpha\in D_x$, the length of its orbit is $|N|/|N_\alpha|=|N|$, which
    implies that $\mbox{ord}(g)=|N|$ divides the cardinality of $D_x$, which is $\lambda$.
\end{proof}

An immediate useful result is as follows.
\begin{lemma}
    \label{vlambdacoprime}
    Let $D$ be a $(v,k,\lambda)$ difference set in $G$ with order $v$. Assume that $\gcd(v,\lambda)=1$.
    Then the $G$-stabilizer of the block $D_x=D\cap Dx$ is $\{1\}$ for any $x\in G^*$.
\end{lemma}
\begin{proof}
    Since the order of any $g\in G^*$ is a divisor of $v$ and $\gcd(v,\lambda)=1$, we have $\gcd(\mbox{ord}(g), \lambda)=1$.
	By Lemma \ref{shouldivede}, we prove the result.
\end{proof}

We are interested in the difference set $D$ whose $G$-stabilizer\begin{color}{red}{s}\end{color} of all $D_x$ are trivial.
First, if the order $v$ of $G$ is even, then $G$ has involution\begin{color}{red}{s}\end{color}. 
Obviously any involution $g$ may fix the base block $D_g$. 
Indeed, $(D_g)^g=Dg\cap Dg^2=D_g$.
Therefore, we have the following corollary.
\begin{corollary}
Let $G$ be a group of order $v$ and $D$ a $k$-subset of $G$, where $v$ is even.
Assume that $D$ is a $(v,k,\lambda)$-difference set of $G$. Then $\gcd(v,\lambda)>1$.
\end{corollary}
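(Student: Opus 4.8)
The plan is to exploit the observation recorded immediately before the statement: when $v$ is even, $G$ possesses an involution, and such an involution fixes one of the constructed blocks. Concretely, I would first invoke Cauchy's theorem: since $2 \mid v = |G|$, the group $G$ contains an element $g$ of order $2$. This element lies in $G^* = G \setminus \{1\}$, so it is a legitimate index for the blocks $D_x = D \cap Dx$ introduced in Theorem \ref{dsdf}.

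Next I would examine the block $D_g = D \cap Dg$ and check that $g$ stabilizes it. Using the description of the $G$-action on blocks, $(D_g)^g = Dg \cap Dg^2$; since $g$ is an involution we have $g^2 = 1$, whence $Dg^2 = D$ and therefore $(D_g)^g = Dg \cap D = D_g$. Thus the pair $(x,g) = (g,g) \in G^* \times G^*$ satisfies the hypothesis $(D_x)^g = D_x$ of Lemma \ref{shouldivede}.

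With the hypothesis of Lemma \ref{shouldivede} in hand, the lemma yields $\mathrm{ord}(g) \mid \lambda$. Since $\mathrm{ord}(g) = 2$, this gives $2 \mid \lambda$, i.e.\ $\lambda$ is even. Combined with the standing assumption that $v$ is even, both $v$ and $\lambda$ are divisible by $2$, so $\gcd(v,\lambda) \geq 2 > 1$, which is exactly the claim.

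There is no serious obstacle here: the statement is essentially a one-line corollary of Lemma \ref{shouldivede}, and the only input beyond that lemma is the existence of an involution, supplied by Cauchy's theorem. The one point meriting a moment's care is confirming that the block $D_g$ is genuinely stabilized; the identity $g^2 = 1$ makes the stabilization transparent, and the difference-set property guarantees $|D_g| = \lambda$, so the orbit-counting argument underlying Lemma \ref{shouldivede} applies without degeneracy.
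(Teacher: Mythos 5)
Your proof is correct and follows exactly the paper's route: the paper likewise produces an involution $g$ (from $v$ even), notes $(D_g)^g = Dg \cap Dg^2 = D_g$, and then applies Lemma \ref{shouldivede} to conclude $2 \mid \lambda$, hence $\gcd(v,\lambda)>1$. Your explicit appeal to Cauchy's theorem merely makes precise the paper's remark that $G$ ``has involutions.''
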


Now a natural question arises: when $\gcd(v,\lambda)>1$, whether there exists any difference set $D$ of $G$
such that all base blocks $D_x=D\cap Dx$ have the trivial $G$-stabilizer under the action $D_x^g=Dg\cap Dxg$?
The following result gives a positive answer. A difference set $D$ is called a \textit{skew Hadamard difference
set} if $D\cap D^{(-1)}=\emptyset, D\cup D^{(-1)}=G\setminus\{1\}$.
A skew Hadamard difference set has always the parameters $(v,(v-1)/2, (v-3)/4)$. 
Skew Hadamard difference sets are related to planar mappings, pseudo-Paley graphs and presemifields, see \cite{weng, DY04}.
They are known to exist in Abelian groups (Corollary \ref{cor-faa}) and non-Abelian groups \cite{feng-skds}.

\begin{lemma}
  \label{skewcase}
  Let $D$ be a $(v,k,\lambda)$ skew Hadamard difference set in $G$ with order $v$. For any $x\in G^*$,
  the $G$-stabilizer of $D_x=D\cap Dx$ is $\{1\}$.
\end{lemma}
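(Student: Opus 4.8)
The plan is to combine the divisibility constraint of Lemma~\ref{shouldivede} with the rigid parameters of a skew Hadamard difference set, and then to finish the one remaining case by analysing the partition of $G$ into cosets of the stabilizing subgroup. Write $G^{*}=G\setminus\{1\}$ and suppose $g\in G^{*}$ fixes $D_x$, i.e.\ $D_xg=D_x$ (recall $D_x^{\,g}=Dg\cap Dxg=(D\cap Dx)g=D_xg$). By Lemma~\ref{shouldivede} we have $\mathrm{ord}(g)\mid\lambda$, while Lagrange gives $\mathrm{ord}(g)\mid v$. A skew Hadamard difference set has parameters $(v,\tfrac{v-1}{2},\tfrac{v-3}{4})$, so $v=4\lambda+3$ and any common divisor of $v$ and $\lambda$ divides $v-4\lambda=3$; hence $\mathrm{ord}(g)\mid\gcd(v,\lambda)\in\{1,3\}$. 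If $3\nmid v$ then $\gcd(v,\lambda)=1$ and Lemma~\ref{vlambdacoprime} already gives the claim, so the whole problem reduces to excluding a stabilizing element of order exactly $3$ (which forces $3\mid v$ and $3\mid\lambda$; I may also assume $\lambda\ge 1$, the block being nonempty).

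So assume $\mathrm{ord}(g)=3$ and set $H=\langle g\rangle=\{1,g,g^{2}\}$. First I would extract a sunflower. Since $D_x\subseteq D$ and $D_x=D_xg=D_xg^{2}$, right multiplication gives $D_x\subseteq D\cap Dg\cap Dg^{2}$; but $|D\cap Dg|=\lambda=|D_x|$ because $D$ is a difference set and $g\neq 1$, so in fact $D\cap Dg=D\cap Dg^{2}=D\cap Dg\cap Dg^{2}=D_x$, all of size $\lambda$. Feeding these equal pairwise and triple intersections into inclusion--exclusion yields $|D\cup Dg\cup Dg^{2}|=3k-3\lambda+\lambda=3k-2\lambda=v$, that is, $DH=D\cup Dg\cup Dg^{2}=G$.

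Next I would read this off the partition of $G$ into the left cosets $aH$, each of size $3$. The equality $DH=G$ says every coset meets $D$, while $D\cap Dg\cap Dg^{2}=D_x$ consists precisely of the cosets lying entirely inside $D$, so there are exactly $\lambda/3$ such ``type-$3$'' cosets. Comparing $\sum_a|aH\cap D|=|D|=k$ with the number $v/3$ of cosets then forces every coset to meet $D$ in either $1$ or $3$ points (a brief computation kills the type-$2$ count). The skew property $D^{(-1)}=G^{*}\setminus D$ is what should now produce the contradiction: the inversion map $h\mapsto h^{-1}$ sends a coset contained in $D$ to a coset contained in $D^{(-1)}$, hence disjoint from $D$, so the presence of a type-$3$ coset would force a type-$0$ coset, contradicting $DH=G$.

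The hard part will be exactly this last step in the nonabelian setting: inversion carries the \emph{left} coset $aH$ to the \emph{right} coset $Ha^{-1}$, so it pairs the type-$3$ left cosets with type-$0$ \emph{right} cosets, and the left- and right-coset type distributions need not coincide. When $G$ is Abelian (or, more generally, when $H$ is normal) the two coset partitions agree, inversion permutes the left cosets, and the pairing is immediate: $\#\{\text{type-}3\}=\#\{\text{type-}0\}=0$, giving the desired contradiction with $\lambda/3>0$; this already covers all the skew Hadamard families invoked later. For a fully general $G$ I expect one must supplement the counting with the left-translate difference identity $|D\cap gD|=\lambda$ (to control the right-coset partition as well), or with a representation-theoretic input from the integral group-ring relation $\underline{D}^{2}=\mu\,\underline{G}-\underline{D}-\mu$ with $\mu=k-\lambda$, whose nontrivial representations have $\underline{D}$-eigenvalues $(-1\pm i\sqrt{v})/2$; controlling how an order-$3$ element interacts with these eigenvalues is the delicate point I would focus my effort on.
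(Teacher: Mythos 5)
Your reduction to the case $\mathrm{ord}(g)=3$ is correct and in fact cleaner than the paper's: combining Lemma~\ref{shouldivede} with Lagrange and $v=4\lambda+3$ gives $\mathrm{ord}(g)\mid\gcd(v,\lambda)\mid 3$ at once, and the sunflower identity $D\cap Dg=D\cap Dg^{2}=D_x$, the inclusion--exclusion count $|D\cup Dg\cup Dg^{2}|=3k-2\lambda=v$, and the resulting left-coset type distribution ($n_0=n_2=0$, $n_3=\lambda/3$) are all valid. The gap is exactly where you place it, and it is fatal rather than merely delicate: the lemma is asserted for arbitrary $G$ --- the paper explicitly invokes non-Abelian skew Hadamard difference sets \cite{feng-skds}, and Theorem~\ref{construction2} is stated for general groups --- while your closing step needs inversion to permute the cosets of $H=\langle g\rangle$, i.e.\ $H$ normal. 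Moreover, the counting information you have assembled cannot be pushed further on its own: inversion sends the left-coset distribution $(n_0,n_1,n_2,n_3)=(0,(v-\lambda)/3,0,\lambda/3)$ to the right-coset distribution $(m_0,m_1,m_2,m_3)=(\lambda/3,\,1,\,(v-\lambda)/3-1,\,0)$, and one checks that this satisfies every constraint coming from $|D\cap wD|=\lambda$, $\sum_a|Ha\cap D|=k$ and $|HD|=3k-3\lambda+3m_3$; there is no numerical contradiction, so some genuinely element-level input is unavoidable in the non-normal case.

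The paper supplies that input by a different route: writing $D=D_x\cup\{b_1,\dots,b_{k-\lambda}\}$, it observes that the $\lambda$ representations of $g$ as a quotient $d_1^{-1}d_2$ with $d_1,d_2\in D$ are already exhausted inside $D_x$ (because $D_xg=D_x$), hence $b_jg,b_jg^{-1}\notin D$ for every $j$; the skew partition $G^{*}=D\sqcup D^{(-1)}$ then forces these $2(k-\lambda)=(v+1)/2>k=|D^{(-1)}|$ elements into $D^{(-1)}\cup\{1\}$, and the remainder of the argument is a case analysis of the resulting coincidences. Your $\gcd$ reduction and the identity $D\cup Dg\cup Dg^{2}=G$ would actually streamline that analysis, but as it stands your proof establishes the lemma only when $\langle g\rangle$ is normal in $G$ (in particular for all the Abelian applications), not in the stated generality.
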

\begin{proof}
    Fix an $x\in G^*$, assume that there exists $g\in G^*$ such that $(D_x)^g=Dg\cap Dxg=D_x$. 
	Clearly $|D_x|=\lambda$ and we write $D_x=\{a_1,a_2,\cdots,a_\lambda\}\subset D$.
	Now we have $\{a_1g,a_2g,\cdots,a_\lambda g\}=\{a_1,a_2,\cdots,a_\lambda\}$.
	By Lemma \ref{shouldivede}, we have $\mbox{ord}(g):=t\mid \lambda$.
	Assume that $\lambda=\ell t$ and now we may rewrite $D_x$ as follows (relabel the elements in $D_x$ if necessary):
	\begin{eqnarray*}
		D_x=\{
		    \underbrace{a_1,a_1g,\cdots,a_1g^{t-1}}_{t},\cdots,
	        \underbrace{a_{(\ell-1)t+1},a_{(\ell-1)t+1}g,\cdots,a_{(\ell-1)t+1}g^{t-1}}_t
			\}.
	\end{eqnarray*}
	Now we assume $D=\{a_1,\cdots,a_\lambda,b_1,\cdots,b_{k-\lambda}\}$.
	Notice that $D_x\subset D$ and $g$ can be represented by the form $(a_jg^{i-1})^{-1}\cdot a_jg^{i}$ for $1\le i\le t, 1\le j\le \ell$.
	Therefore, $g$ is represented $\ell t=\lambda$ times as the differences of elements in $D$. 
	This implies that $b_i^{-1}b_j\ne g$ for all $1\le i,j\le k-\lambda$, i.e., $b_jg, b_jg^{-1}\not\in D$ for all $1\le j\le k-\lambda$.
	Since $G^*$ is the disjoint union of $D$ and $D^{ (-1)}$, if all $b_jg, b_jg^{-1}\ne 1$,
	we have $|D^{ (-1)}|=2(k-\lambda)=(v+1)/2\ne (v-1)/2=k$.
	Now w.l.o.g we assume $b_1g=1$ (the case that $b_1g^{-1}=1$ is similar and we omit it), i.e., $g=b_1^{-1}$.
 Since 
	\begin{eqnarray}
		\label{dinverse2}
		D^{(-1)}=\{b_1g^{-1}, b_2g, b_2g^{-1}, \cdots, b_{k-\lambda}g, b_{k-\lambda}g^{-1} \},
	\end{eqnarray}
	we see that $g=b_ig^{-1}$ for some $1 \leq i \leq k-\lambda$ 
	as otherwise $g=b_ig$ for some $1 \leq i \leq k-\lambda$ which 
	would imply that $b_i=1$. Hence $g^2=b_i$ which means that $g^3=b_1^{-1}b_i$ and from this we conclude that $\mbox{ord}(g)=3$.
	On the other hand, $D^{(-1)}$ can be represented alternatively as:
	 \begin{eqnarray}
	  \label{dinverse1}
                     D^{(-1)}=\{
	             a_1^{-1}, g^{-1}a_1^{-1},ga_1^{-1},\cdots,
	             a_{(\ell-1)t+1}^{-1},g^{-1}a_{(\ell-1)t+1}^{-1}, ga_{(\ell-1)t+1}^{-1},
				 g, b_2^{-1},\cdots, b_{k-\lambda}^{-1} 
				 \}.
	\end{eqnarray}
	Observe that $D^{(-1)}$ satisfies the following property: If $x \in D^{(-1)}$ and $gx \in D^{(-1)}$, then $g^2x \in D^{(-1)}$.
	
	First, we show that it is impossible to have $b_rg=b_i^{-1}$ and $b_rg^{-1}=b_j^{-1}$ simultaneously
	for $1 \leq i, j, r \leq k-\lambda$. Suppose this is the case, then we have 
	$b_ig=g^{-1}b_r^{-1}g$, $b_ig^{-1}=g^{-1}b_r^{-1}g^{-1}$, $b_jg=gb_r^{-1}g$ 
	and $b_jg^{-1}=gb_r^{-1}g^{-1}$. Since $gb_r^{-1}g \in D^{(-1)}$ and $g(gb_r^{-1}g) \in D^{(-1)}$, 
	by the observation, we have $g^2(gb_r^{-1}g)=b_r^{-1}g \in D^{(-1)}$, similarly, $b_r^{-1}g^{-1} \in D^{(-1)}$. 
	Suppose $b_r^{-1}g^{-1}=b_ig^{-1}$ for some $1 \leq i \leq \lambda-k$, 
	then $b_r^{-1}=b_i$ which is a contradiction, hence $b_r^{-1}g^{-1}=b_ig$ for
	some $1 \leq i \leq \lambda-k$, and we obtain $b_r^{-1}g=b_i$ but $b_r^{-1}g \in D^{(-1)}$ 
	while $b_i \in D$. Hence, for  a fixed $r$, at least one of $b_rg$ and $b_rg^{-1}$ must take the form of 
	$a_i^{-1}$ for some $1\leq i \leq \lambda$.\\ 
	Next, we consider $b_rg=a_i^{-1}$ and $b_rg^{-1}=b_j^{-1}$. 
	Since $a_i^{-1}=b_rg$, $ga_i^{-1}=gb_rg \in D^{(-1)}$ and $g^{-1}a_i^{-1}=g^{-1}b_rg \in D^{(-1)}$. 
	On the other hand, since $b_j=gb_r^{-1}$ , we have $b_jg=gb_r^{-1}g \in D^{(-1)}$ and $b_jg^{-1}=gb_r^{-1}g^{-1} \in D^{(-1)}$. 
	Suppose $b_jg=a_p^{-1}$ for some $1\leq p \leq \lambda$, then $ga_p^{-1}=g^{-1}b_r^{-1}g \in D^{(-1)}$ 
	but its inverse $g^{-1}b_rg$ is also in  $D^{(-1)}$, so  $b_jg\neq a_p^{-1}$ for all $1\leq p \leq \lambda$. 
	Suppose $b_jg^{-1}=a_p^{-1}$ for some $1\leq p \leq \lambda$, 
	then $ga_p^{-1}=g^{-1}b_r^{-1}g^{-1}$ but its inverse $gb_rg$ is also in $D^{(-1)}$. 
	Similarly, we can rule out the possibility of the case of $b_rg=b_j^{-1}$ and $b_rg^{-1}=a_i^{-1}$ simultaneously.
	Hence for all $i>1$, $b_i^{-1}$ can neither be represented in the form of $b_j^{-1}g$ nor in the form of $b_j^{-1}g^{-1}$ 
	for any $1 \leq j \leq k-\lambda$. Hence, the G-stabilizer of $D_x=D \cap Dx$ is $\{1\}$.
\end{proof}

Now we are ready to describe the second construction of difference families.

\begin{theorem}
   \label{construction2}
   Let $D$ be a $(v,k,\lambda)$ difference set in $G$ with order $v$. Assume $D$ satisfies one of the following cases:
   \begin{equation}
           \label{twokinds}
	   \begin{array}{lll}
	   & (1)& v\ \mbox{is odd and}\ \gcd(v,\lambda)=1, \\
	   & (2)& D\ \mbox{is a skew difference set}.
	 \end{array}
   \end{equation}
   Then for any disjoint union $H_1\cup H_2\cup \{1\}, H_1=H_2^{(-1)}$ of $G$, the two collections 
   $\mathcal{F}_1,\mathcal{F}_2$ of subsets of $G$
   \begin{eqnarray*}
     \mathcal{F}_1=\{ D_x: x\in H_1\}\ \mbox{and}\ \mathcal{F}_2=\{ D_x: x\in H_2\} 
   \end{eqnarray*}
   are both $(v,\lambda,\lambda(\lambda-1)/2)$ difference families of $G$, where $D_x=D\cap Dx$.
\end{theorem}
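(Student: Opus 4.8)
The plan is to avoid a direct count of differences (which behaves badly for non-Abelian groups) and instead pass to the developed designs via Lemma \ref{dfstabilizer1}, exploiting the fact that $D_x$ and $D_{x^{-1}}$ generate the \emph{same} design. The crucial elementary observation is the identity
\[
  D_{x^{-1}} = D \cap Dx^{-1} = (D \cap Dx)x^{-1} = D_x x^{-1},
\]
where the middle equality holds because right multiplication by $x^{-1}$ is a bijection of $G$ and hence commutes with intersection. Thus $D_{x^{-1}}$ is simply a right translate of $D_x$, so the two base blocks develop to exactly the same multiset of translated blocks, $\{D_{x^{-1}}g : g\in G\} = \{D_x g : g\in G\}$.

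Next I would check that Lemma \ref{dfstabilizer1} is applicable to each of $\mathcal{F}$, $\mathcal{F}_1$, $\mathcal{F}_2$. All base blocks have cardinality $\lambda$ since $D$ is a $(v,k,\lambda)$ difference set, so the block size is constant. In case (1) the hypothesis $\gcd(v,\lambda)=1$ forces trivial block stabilizers by Lemma \ref{vlambdacoprime}, and in case (2) Lemma \ref{skewcase} supplies the same conclusion; hence Lemma \ref{dfstabilizer1} may be used in both directions. Now, since $H_1 = H_2^{(-1)}$ and the existence of such a partition of $G\setminus\{1\}$ already rules out involutions (so that $x\mapsto x^{-1}$ is a fixed-point-free bijection $H_2\to H_1$), the identity above shows that under this bijection $D_x$ and $D_{x^{-1}}$ develop identically. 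Consequently the developments of $\mathcal{F}_1$ and $\mathcal{F}_2$ coincide as multisets of blocks, and the development of $\mathcal{F}$ is their disjoint multiset union, i.e. precisely twice the development of $\mathcal{F}_1$.

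To finish, by Theorem \ref{dsdf} the full family $\mathcal{F}$ is a $(v,\lambda,\lambda(\lambda-1);v-1)$ difference family, so by Lemma \ref{dfstabilizer1} its development is a $2$-$(v,\lambda,\lambda(\lambda-1))$ design in which every pair of points lies in exactly $\lambda(\lambda-1)$ blocks. Because this development is exactly twice that of $\mathcal{F}_1$, every pair of points lies in $\lambda(\lambda-1)/2$ blocks of the development of $\mathcal{F}_1$ (and likewise of $\mathcal{F}_2$), so each development is a $2$-$(v,\lambda,\lambda(\lambda-1)/2)$ design; note $\lambda(\lambda-1)/2$ is automatically an integer. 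Applying Lemma \ref{dfstabilizer1} in reverse to $\mathcal{F}_1$ and $\mathcal{F}_2$, whose blocks have trivial stabilizers, then yields that both are $(v,\lambda,\lambda(\lambda-1)/2)$ difference families, as claimed (each with $u=(v-1)/2$ base blocks).

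\textbf{Main obstacle.} The step that must be handled with care is exactly the one where a naive difference-count breaks down: the difference multiset of $D_{x^{-1}}=D_x x^{-1}$ is the $x$-conjugate $x\,\Delta(D_x)\,x^{-1}$ of that of $D_x$, so blockwise the difference multisets of $\mathcal{F}_1$ and $\mathcal{F}_2$ need \emph{not} agree in a non-Abelian group, and the clean ``each half is symmetric'' argument only works when $G$ is Abelian. Routing the argument through the developed designs sidesteps this, since developments only see translate-classes of blocks, to which conjugation is invisible; the two facts that make this rigorous are the triviality of the block stabilizers (so every block develops to exactly $v$ distinct translates, guaranteeing the multiset equalities) and the absence of involutions (so the pairs $\{x,x^{-1}\}$ are genuinely split one-for-one between $H_1$ and $H_2$).
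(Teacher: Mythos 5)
Your proof is correct, and it follows the same overall framework as the paper's: both arguments pass to the developed designs and invoke Lemma \ref{dfstabilizer1} together with the stabilizer Lemmas \ref{vlambdacoprime} and \ref{skewcase}. Where you differ is in how the parameter $\lambda(\lambda-1)/2$ is obtained. The paper introduces the intersection design $\mathcal{D}^\cap$ with blocks $Dx\cap Dy$ ($x\ne y$), observes that $\mathcal{F}_1$ and $\mathcal{F}_2$ each develop to all of $\mathcal{D}^\cap$, and gets $\lambda(\lambda-1)/2$ by counting unordered pairs of blocks of the symmetric design through two points, i.e.\ $\binom{\lambda}{2}$. You instead start from Theorem \ref{dsdf}, which already gives that the full family $\mathcal{F}$ is a $(v,\lambda,\lambda(\lambda-1);v-1)$ difference family, and use the identity $D_{x^{-1}}=D_x x^{-1}$ to show that the developments of $\mathcal{F}_1$ and $\mathcal{F}_2$ coincide, so that $\mathrm{dev}(\mathcal{F})$ is exactly twice $\mathrm{dev}(\mathcal{F}_1)$; halving then gives the $2$-$(v,\lambda,\lambda(\lambda-1)/2)$ design and the reverse direction of Lemma \ref{dfstabilizer1} finishes. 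Your route has the merit of making explicit the two steps the paper dismisses as ``obvious'' and ``not difficult,'' and of reusing Theorem \ref{dsdf} rather than recounting. One side remark in your closing paragraph is wrong, though harmlessly so: with the paper's convention that differences are $xy^{-1}$, a right translate $Bc$ has \emph{exactly the same} difference multiset as $B$, since $(xc)(yc)^{-1}=xy^{-1}$; the conjugation phenomenon you describe occurs only for left translates or for $x^{-1}y$-type differences. Consequently the ``obstacle'' you route around is not actually present, and a direct count $\Delta(\mathcal{F}_1)=\Delta(\mathcal{F}_2)=\tfrac12\Delta(\mathcal{F})$ would also have worked; this does not affect the validity of the proof you gave.
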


\begin{proof}
  Let $\mathcal{D}=(G,\mathcal{B})$ be the block design generated by the difference set $D$, where
  $\mathcal{B}=\{ Dx: x\in G\}$. Clearly $\mathcal{D}$ is a symmetric $(v,k,\lambda)$ design.
  Now define the block design $\mathcal{D}^\cap=(G,\mathcal{B}^\cap)$ by 
  $\mathcal{B}^\cap=\{Dx\cap Dy: x, y\in G, x\ne y \}$,
  It is obviously that the collections $H_1,H_2$ are both base blocks for $\mathcal{D}^\cap$.
  It is not difficult to show that $D^\cap$ is a $2-(v,\lambda,\lambda(\lambda-1)/2)$ design.
  By Lemmas \ref{dfstabilizer1}, \ref{vlambdacoprime}, \ref{skewcase}, we see that $\mathcal{F}_1,\mathcal{F}_2$
  are both $(v,\lambda,\lambda(\lambda-1)/2)$ difference families.
\end{proof}

\begin{remark}
  For difference sets which are not of the two types (\ref{twokinds}), Theorem \ref{construction2} may not be true.
  We give an example here.
  Let $D=\{ 1, w, w^2, w^4, w^5, w^8, w^{10} \}$ be a $(15,7,3)$ Singer difference set of $\mathbb{F}_{2^4}^*$,
  where $w$ is a primitive element. Given a disjoint union of $G^*=H_1\cup H_2$,
  where $H_1=\{ w^{11}, w^{13}, w^{14}, w^3, w^5, w^7, w^9 \}, H_2=\{ w, w^{12}, w^2, w^4, w^6, w^8, w^{10} \}, H_1=H_2^{(-1)}.$
  Using MAGMA to compute the ordinary differences of the blocks in $\mathcal{F}_1,\mathcal{F}_2$ we obtain
  $\{2^{12}, 3^2\}$,  which means $12$ elements in $\mathbb{F}_{16}^*$ appears twice and $2$ elements appears three times. 
  Therefore, $\mathcal{F}_1,\mathcal{F}_2$ are not difference families. 
  Notice that here $D$ has the parameters of skew Hadamard difference sets but $D\cap D^{(-1)}\ne \emptyset$. 
\end{remark}

\begin{corollary}\label{cor-faa}
Let $q \equiv 3 \bmod{4}$ be a prime power. If $D$ is the set of all
quadratic residues in $\gf_{q}$, then $\{D_x: x\in D\}$ is a
$(q,(q-3)/4,(q-3)(q-7)/32; (q-1)/2)$ difference family in $(\gf_{q},+)$.
\end{corollary}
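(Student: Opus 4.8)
The plan is to realize Corollary \ref{cor-faa} as a direct instance of Theorem \ref{construction2}, case (2). The first task is therefore to verify that the quadratic residue set $D$ in $(\gf_q,+)$ is a skew Hadamard difference set. Recall (via the Paley construction \cite{Pale}, as used in Corollary \ref{cor-12}) that for $q\equiv 3\bmod 4$ the set $D$ of quadratic residues is a $(q,(q-1)/2,(q-3)/4)$ difference set, so $v=q$, $k=(q-1)/2$ and $\lambda=(q-3)/4$. In the additive group the inversion map is negation, so $D^{(-1)}=-D$. Since $q\equiv 3\bmod 4$, the exponent $(q-1)/2$ is odd, hence $-1$ is a quadratic nonresidue; consequently $-D$ is precisely the set of nonresidues, i.e. $\gf_q^*\setminus D$. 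This gives $D\cap D^{(-1)}=\emptyset$ and $D\cup D^{(-1)}=\gf_q^*=G\setminus\{0\}$, so $D$ is skew Hadamard and the hypotheses of Theorem \ref{construction2}(2) are met.

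Next I would choose the disjoint decomposition $G^*=H_1\cup H_2$ tailored to the statement by taking $H_1=D$ and $H_2=-D=D^{(-1)}$. These are disjoint (residues versus nonresidues), their union is $\gf_q^*$, and $H_2^{(-1)}=-(-D)=D=H_1$, so the requirement $H_1=H_2^{(-1)}$ of Theorem \ref{construction2} holds. With this choice $\mathcal{F}_1=\{D_x:x\in H_1\}$ is exactly the collection $\{D_x:x\in D\}$ appearing in the corollary.

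Applying Theorem \ref{construction2} then immediately yields that $\{D_x:x\in D\}$ is a $(v,\lambda,\lambda(\lambda-1)/2)$ difference family. It remains only to substitute $\lambda=(q-3)/4$ and simplify: here $\lambda-1=(q-7)/4$, so the frequency is $\lambda(\lambda-1)/2=(q-3)(q-7)/32$, while the number of base blocks is $u=|H_1|=|D|=(q-1)/2$. This produces the claimed parameters $(q,(q-3)/4,(q-3)(q-7)/32;(q-1)/2)$.

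In this argument there is no serious obstacle beyond careful bookkeeping; the only genuine content is confirming that $D$ satisfies the skew Hadamard property, which in turn rests on the fact that $-1$ is a nonresidue exactly when $q\equiv 3\bmod 4$. Everything else is the matching of $H_1$ with $D$ and the routine parameter computation, both of which Theorem \ref{construction2} reduces to triviality.
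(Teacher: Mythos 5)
Your proposal is correct and is precisely the derivation the paper intends: the corollary is stated as an immediate specialization of Theorem \ref{construction2}, case (2), with $H_1=D$ and $H_2=D^{(-1)}$ being the residues and nonresidues, the skew Hadamard property following from $-1$ being a nonresidue when $q\equiv 3\bmod 4$. The parameter substitutions $\lambda=(q-3)/4$, $\lambda(\lambda-1)/2=(q-3)(q-7)/32$ and $u=(q-1)/2$ all check out.
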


\begin{example}
Let $D$ be all the set of quadratic residues modulo $11$. Then $D$ is a
$(11, 5, 2)$ skew difference set in $(\Z_{11}, +)$. The following
blocks form a $(11, 2, 1; 5)$ difference family in $(\Z_{11}, +)$:
$$
\{ 4, 5 \}, \{ 1, 4 \}, \{ 5, 9 \}, \{ 3, 9 \}, \{ 1, 3 \}.
$$
\end{example}

There are several families of skew Hadamard difference sets (see \cite{DY04,DWX, feng-skds}).
They give difference families with similar parameters.

\section{The third construction with difference sets}\label{sec-fourfour}

We will give the third construction of different families in this section.
In order to obtain more general results, we will first use balanced incomplete block designs
for the construction. As an application, difference families are obtained using difference sets.

A $(v, k, \lambda, r, b)$ incidence structure $\mathcal{D}=(\mathcal{V},\mathcal{B})$ is called 
a \textit{balanced incomplete block design} (BIBD) if the vertex set $\mathcal{V}$ with $v$ points
is partitioned into a family $\mathcal{B}$ of $b$ subsets (blocks) in such a way that any two points 
determine $\lambda$ blocks with $k$ points in each block, 
and each point is contained in $r$ different blocks. 
For a BIBD $\mathcal{D}$, it is well known that $r,b$ can be determined by $v,k,\lambda$ as follows:
\begin{equation*}
  b=\frac{v(v-1)\lambda}{k(k-1)}\ \mbox{and}\ r=\frac{\lambda(v-1)}{k-1}.
\end{equation*}
Therefore, a $(v,k,\lambda, r,b)$-BIBD is commonly written as a $(v,k,\lambda)$-BIBD.
In the following, for a set $X$ and an integer $0\le s\le |X|$, the notation $\left(X \atop s\right)$
means the collection of the subsets of $X$ with cardinality $s$.

\begin{theorem}
  \label{BIBDconstruction}
  Let $\mathcal{D}=(\mathcal{V},\mathcal{B})$ be a $(v,b,r,k,\lambda)$-BIBD. Define
 \begin{eqnarray*}
   && \mathcal{B}_s^+=\{ B+X| B\in \mathcal{B}, X\in\left(V-B \atop{s}\right)\}\ \mbox{for}\ 0\le s\le v-k\ \mbox{and} \\
   && \mathcal{B}_s^-=\{ B-X| B\in\mathcal{B}, X\in \left(B \atop{s}\right)\}\ \mbox{for}\ 0\le s\le k.
 \end{eqnarray*}
 Then $(V,\mathcal{B}_s^+)$ is a $(v,k+s,\lambda_s^+)$-BIBD and $(V,\mathcal{B}_s^-)$ is a $(v,k-s,\lambda_s^-)$-BIBD, where
 \begin{eqnarray*}
   && \lambda_s^+=\lambda\left(v-k \atop{s}  \right)+2(r-\lambda)\left(v-k-1 \atop{s-1}\right)+(b-2r+\lambda)\left( v-k-2 \atop{s-2} \right)\ \mbox{and}\\ 
   && \lambda_s^-=\lambda\left(k-2 \atop{s}\right).
 \end{eqnarray*}
\end{theorem}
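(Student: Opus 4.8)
The plan is to verify the defining balance property of a BIBD directly by double counting. For an arbitrary fixed pair of distinct points $\{p,q\}\subseteq V$, I would count the number of blocks in $\mathcal{B}_s^{+}$ (respectively $\mathcal{B}_s^{-}$) that contain both $p$ and $q$, counting a block $B+X$ (resp. $B-X$) once for each pair $(B,X)$ that produces it, so that the resulting structures are understood as block designs possibly with repeated blocks. Since every $B+X$ has size $k+s$ and every $B-X$ has size $k-s$, the block sizes are already uniform; it then remains only to show that this pair-count equals $\lambda_s^{+}$ (resp. $\lambda_s^{-}$) independently of the chosen pair, which is precisely the balance condition and hence establishes that each structure is a BIBD with the claimed parameters.

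For $\mathcal{B}_s^{-}$ I would argue as follows. A block $B-X$ contains both $p$ and $q$ if and only if $\{p,q\}\subseteq B$ and $X\cap\{p,q\}=\emptyset$. There are exactly $\lambda$ base blocks $B$ with $\{p,q\}\subseteq B$, and for each such $B$ the admissible sets $X\in\binom{B}{s}$ avoiding both $p$ and $q$ are exactly the $s$-subsets of the remaining $k-2$ points, of which there are $\binom{k-2}{s}$. Multiplying gives $\lambda\binom{k-2}{s}=\lambda_s^{-}$, independently of $\{p,q\}$.

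For $\mathcal{B}_s^{+}$ the count must be split according to how many of $p,q$ already lie in the base block $B$, and I would partition the base blocks into three classes, counting the admissible $X\in\binom{V-B}{s}$ in each. First, if both $p,q\in B$, then there are $\lambda$ such blocks and every $X$ works, contributing $\binom{v-k}{s}$ choices each. Second, if exactly one of $p,q$ lies in $B$, then by symmetry there are $2(r-\lambda)$ such blocks (there being $r-\lambda$ blocks through $p$ but not $q$, and symmetrically for $q$), and for each the missing point is forced into $X$, leaving $\binom{v-k-1}{s-1}$ choices. Third, if neither $p$ nor $q$ lies in $B$, then by inclusion-exclusion there are $b-2r+\lambda$ such blocks, and both $p,q$ must be placed in $X$, leaving $\binom{v-k-2}{s-2}$ choices. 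Summing the three contributions gives exactly $\lambda_s^{+}$, again independently of $\{p,q\}$, which completes the argument.

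The proof is essentially routine double counting, so the only points requiring genuine care are getting the count of base blocks in each membership class right -- in particular using $r-\lambda$ for "through one point but not the other" and the inclusion-exclusion value $b-2r+\lambda$ for "through neither" -- together with the boundary conventions $\binom{n}{j}=0$ for $j<0$ or $j>n$, which are what make the single formula for $\lambda_s^{+}$ correct in the degenerate cases (for instance $s=0$ or $s=1$, where the later terms must vanish). I do not anticipate any serious obstacle beyond this bookkeeping: the uniformity of the block sizes and the independence of the pair-count from the chosen pair are both immediate once the case analysis is organized as above.
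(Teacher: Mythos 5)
Your proof is correct and follows essentially the same route as the paper: fix a pair of points, partition the base blocks into the three classes (containing both, exactly one, or neither of the points, with counts $\lambda$, $2(r-\lambda)$, and $b-2r+\lambda$), and count the admissible sets $X$ in each class. The only difference is that you write out the $\mathcal{B}_s^-$ count explicitly, which the paper omits as ``similar''; your value $\lambda\binom{k-2}{s}$ agrees with the theorem statement.
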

\begin{proof}
 First, we prove $\lambda_s^+$ is a $(v,k+s,\lambda_s^+)$-BIBD. Clearly we only need to prove for any two points 
 $x,y\in\mathcal{V}$, there are $\lambda_s^+$ blocks containing them.
 Now consider the following three cases of the block $B\in\mathcal{D}$:
 
Case $(1)$: the block $B$ contains both $x$ and $y$. Clearly for each such $B$, there are 
 $\left(v-k \atop{s}  \right)$ choices of $X$ such that $x,y\in B+X$. Since $\mathcal{D}$
 is a $(v,k,\lambda)$-BIBD, we have $\lambda\left(v-k \atop{s}  \right)$ blocks $B+X$ in this case.
 
Case $(2)$: the block $B$ contains one and only one of $x$ and $y$. Now assume $x\in B, y\not\in B$. 
Clearly there are $r-\lambda$ such blocks $B$.
 For each of such blocks $B$, we choose a subset $X$ of $V-B$ containing $y$, which have $\left( v-k-1, s-1\right)$
 choices. Therefore, we have altogether $(r-\lambda)\left( v-k-1 \atop{s-1}\right)$ blocks $B+X$. A calculation of the
 symmetric case $x\not\in B, y\in B$ contributes another $(r-\lambda)\left( v-k-1 \atop{s-1}\right)$ blocks.

Case $(3)$: the block $B$ does not contain any of $x$ and $y$. Now assume $x,y\not\in B$, then clearly there are $b-\lambda-2(r-\lambda)=b-2r+\lambda$
 such blocks $B$. For each such block $B$, there are $\left( v-k-2 \atop{s-2}\right)$ choices $X$ such that $x,y\in B+X$.
 Therefore, in this case there are altogether $(b-2r+\lambda)\left(v-k-2 \atop{s-2} \right)$ blocks $B+X$ such that $x,y$ are both in. 
 Combining cases $(1), (2) ,(3)$, we prove the result.
 Similar arguments may show $\mathcal{B}_s^-$ is a $(v,k-s,\lambda^-)$-BIBD, we omit it here.
\end{proof}

Now let $D$ be a $(v,k,\lambda)$ difference set of a group $G$. 
Then the block design $\mathcal{D}$ generated by $D$ is a $(v,k,\lambda)$-BIBD.
Apply Theorem \ref{BIBDconstruction} on $\mathcal{D}$, we have 
the following two constructions of difference families using
difference sets. Notice that $b=v, r=k$ for $\mathcal{D}$.

\begin{corollary}
  \label{construction3}
Let $G$ be a group of order $v$ and $D$  a $(v, k, \lambda)$ difference set in $G$.
Let $s$ be an integer with $1\le s\le v-k-1$.
For each $X \in \left(G-D \atop{s} \right)$, define
$$
D_X = D+X.
$$
Then $\{D_X: X \in \left(G-D \atop{s} \right) \}$ is a
$(v, k+s, \lambda_s^+; u)$ difference family in $G$,
where
$$ \lambda_s^+=\lambda\left(v-k \atop{s}  \right)+2(k-\lambda)\left(v-k-1 \atop{s-1}\right)+(v-2k+\lambda)\left( v-k-2 \atop{s-2} \right), 
   u=\left(v-k \atop{s} \right).  $$
\end{corollary}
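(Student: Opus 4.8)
The plan is to read this corollary off as the specialization of Theorem \ref{BIBDconstruction} to the symmetric design generated by $D$, and then to convert the resulting BIBD into a difference family via the standard base-block/translate correspondence encapsulated in Lemma \ref{dfstabilizer1}.

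First I would record that a $(v,k,\lambda)$ difference set $D$ in $G$ generates a symmetric $2$-$(v,k,\lambda)$ design $\mathcal{D}=(G,\mathcal{B})$ whose blocks are the $v$ distinct translates $\{Dg:g\in G\}$; symmetry forces $b=v$ and $r=k$. Substituting $b=v$ and $r=k$ into the formula for $\lambda_s^+$ in Theorem \ref{BIBDconstruction} reproduces exactly the expression claimed in the corollary, and the number of admissible $X$ is $u=\binom{v-k}{s}$. Thus Theorem \ref{BIBDconstruction} immediately gives that $(G,\mathcal{B}_s^+)$ is a $2$-$(v,k+s,\lambda_s^+)$ design. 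The restriction $1\le s\le v-k-1$ merely discards the degenerate endpoints $s=0$ (where $D_X=D$) and $s=v-k$ (where $D_X=G$).

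Next I would treat the enlarged blocks $D_X=D\cup X$, $X\in\binom{G-D}{s}$, as base blocks and show that their full $G$-orbit reproduces $\mathcal{B}_s^+$. The key observation is that translating $D\cup X$ by $g$ yields $(D\cup X)g=Dg\cup Xg$, where $Dg$ runs over all blocks of $\mathcal{D}$ and $Xg$ runs over all $s$-subsets of the complement $G-Dg$. Because $b=v$, the set $D$ has trivial translate-stabilizer, so the assignment $(X,g)\mapsto(Dg,Xg)$ is a bijection from the orbit index set $\binom{G-D}{s}\times G$ onto the index set $\{(B,Y):B\in\mathcal{B},\,Y\in\binom{G-B}{s}\}$ of $\mathcal{B}_s^+$. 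Hence the design generated by the base blocks is precisely $\mathcal{B}_s^+$, a $2$-design of constant index $\lambda_s^+$.

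Finally I would close the loop through Lemma \ref{dfstabilizer1}: the multiplicity of a nonidentity element $d$ in the difference multiset of the base blocks equals the number of orbit-blocks through the identity $1$ and $d$, which is the constant index $\lambda_s^+$, so $\{D_X\}$ forms a $(v,k+s,\lambda_s^+;u)$ difference family. The step I expect to be the main obstacle is the hypothesis of Lemma \ref{dfstabilizer1} that each base block have trivial $G$-stabilizer: although $D$ itself is stabilizer-free (since $b=v$), the enlarged block $D\cup X$ is not obviously so, because a translation might in principle permute $D\cup X$ without separately fixing $D$ or $X$. I would resolve this either by verifying directly that the two-point count (multiplicity of a difference $=$ number of translate-blocks through two fixed points) is valid when blocks are counted with multiplicity and therefore does not truly need trivial stabilizers, or by checking that for $1\le s\le v-k-1$ no nontrivial translation can fix $D\cup X$. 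The former route is cleaner and is the one I would pursue.
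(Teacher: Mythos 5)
Your proposal is correct and follows the paper's own route: the paper obtains this corollary simply by specializing Theorem \ref{BIBDconstruction} to the symmetric design developed from $D$ (so $b=v$, $r=k$) and offers no further argument. Your extra care in bridging back from the BIBD to the difference family --- in particular, noting that the trivial-stabilizer hypothesis of Lemma \ref{dfstabilizer1} is not obviously satisfied by $D\cup X$ and therefore counting blocks with multiplicity, so that the multiplicity of a difference $g$ equals the number of index pairs $(B,Y)$ with $1,g\in B\cup Y$ --- is warranted and correctly closes a step the paper leaves implicit.
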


\begin{corollary}
    \label{construction5}
    Let $G$ be a group of order $v$ and $D$ a $(v, k, \lambda)$ difference set.
    Let $s$ be an integer with $1\le s\le k-1$.
    For each $X\in\left( D \atop{s}\right)$, define
    $$
    D_X = D - X.
    $$
    Then $\{D_X: X \in \left( D \atop{s} \right) \}$ is a $(v, k-s, \lambda_s^-; u)$ difference family in $G$,
    where $$\lambda_s^-=\lambda\left(k-2 \atop{s-2} \right), u=\left(k \atop{s}\right).$$
\end{corollary}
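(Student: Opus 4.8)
The plan is to recognize the collection $\{D_X = D\setminus X : X\in\binom{D}{s}\}$ as a complete set of base blocks for the derived design $\mathcal{B}_s^-$ of Theorem \ref{BIBDconstruction}, and then to read off the difference-family parameters. Since $D$ is a $(v,k,\lambda)$ difference set, its $G$-development $\mathcal{D}=(G,\{Dg:g\in G\})$ is a symmetric $(v,k,\lambda)$-BIBD, so here $b=v$ and $r=k$. Substituting $b=v$, $r=k$ into Theorem \ref{BIBDconstruction} gives at once that $(G,\mathcal{B}_s^-)$ is a $2$-$(v,k-s,\lambda_s^-)$ design with $\lambda_s^-=\lambda\binom{k-2}{s}$; note that each $D\setminus X$ indeed has size $k-s$, and that the blocks of $\mathcal{B}_s^-$ are by construction obtained by deleting $s$-subsets of blocks of $\mathcal{D}$.

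Next I would make the passage from design to difference family. First, distinct $s$-subsets $X,X'\subseteq D$ yield distinct sets $D\setminus X$, so there are exactly $u=\binom{k}{s}$ base blocks. Second, every block of $\mathcal{B}_s^-$ is a right translate of some $D_X$: writing a generic block as $Dg\setminus X'$ with $X'\in\binom{Dg}{s}$ and putting $X:=X'g^{-1}\in\binom{D}{s}$, one gets $Dg\setminus X'=(D\setminus X)g$, so $\mathcal{B}_s^-$ is precisely the $G$-development of $\{D_X:X\in\binom{D}{s}\}$. Lemma \ref{dfstabilizer1} then converts the $2$-design property into the difference-family property, the design parameter $\lambda_s^-$ becoming the family's frequency.

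The most transparent route, and the one I would actually carry out, is to count the difference multiset directly, exactly as in the proof of Theorem \ref{dsdf}; this is unconditional and simultaneously pins down the frequency. Fix $g\in G^*=G\setminus\{1\}$. There are exactly $\lambda$ ordered pairs $(d_1,d_2)\in D\times D$ with $d_1\ne d_2$ and $d_1^{-1}d_2=g$. A fixed such pair contributes $g$ to the block $D\setminus X$ precisely when $d_1,d_2\notin X$, and the number of $X\in\binom{D}{s}$ avoiding these two points is $\binom{k-2}{s}$. Summing over the $\lambda$ pairs, $g$ occurs with multiplicity $\lambda\binom{k-2}{s}$, the same for every $g\in G^*$; hence $\{D_X\}$ is a $(v,k-s,\lambda\binom{k-2}{s};\binom{k}{s})$ difference family. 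This matches $\lambda_s^-$ of Theorem \ref{BIBDconstruction} and passes the basic consistency check $\binom{k}{s}(k-s)(k-s-1)=k(k-1)\binom{k-2}{s}=\lambda(v-1)\binom{k-2}{s}$, using the difference-set identity $k(k-1)=\lambda(v-1)$; moreover at the extreme $s=k-1$ the blocks become singletons and $\lambda\binom{k-2}{k-1}=0$, as it should be.

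I expect the only genuine obstacle to be the design-to-family translation via Lemma \ref{dfstabilizer1}, which requires every base block $D\setminus X$ to have trivial $G$-stabilizer; were some stabilizer nontrivial, the development would contain fewer than $v\binom{k}{s}$ blocks and the point-pair count would cease to coincide with the difference count. Since the statement imposes no such hypothesis, I would rely on the direct difference computation of the previous paragraph, which needs no stabilizer assumption and delivers all the parameters outright.
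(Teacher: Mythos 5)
Your proposal is correct, and it is both more careful and more self-contained than what the paper actually does. The paper's ``proof'' of Corollary \ref{construction5} consists entirely of the remark preceding it: the development $\mathcal{D}=(G,\{Dg:g\in G\})$ of a $(v,k,\lambda)$ difference set is a symmetric BIBD with $b=v$, $r=k$, and one specializes Theorem \ref{BIBDconstruction} to it. The paper never addresses the passage from the $2$-design property of $\mathcal{B}_s^-$ back to the difference-family property of the base blocks $\{D_X\}$ --- exactly the step you correctly identify as the only delicate point, since Lemma \ref{dfstabilizer1} requires trivial $G$-stabilizers and no such hypothesis is imposed here. Your direct count (for each $g\in G^*$ there are $\lambda$ pairs $(d_1,d_2)\in D\times D$ with $d_1^{-1}d_2=g$, and $\binom{k-2}{s}$ choices of $X$ avoiding both) settles the matter unconditionally, in the spirit of the proof of Theorem \ref{dsdf}, and is the argument the paper should have given. (In fact the stabilizer issue is harmless here if one counts the development as a multiset, but your route makes this irrelevant.)

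One point you should state explicitly rather than leave implicit: your computation shows that the frequency printed in the corollary, $\lambda_s^-=\lambda\binom{k-2}{s-2}$, is wrong as stated; the correct value is $\lambda\binom{k-2}{s}$, in agreement with Theorem \ref{BIBDconstruction} and with your consistency check $\binom{k}{s}(k-s)(k-s-1)=k(k-1)\binom{k-2}{s}=\lambda(v-1)\binom{k-2}{s}$. (The printed value fails already for the $(7,3,1)$ Fano difference set with $s=1$, where the three $2$-subsets form a $(7,2,1;3)$ difference family, whereas $\lambda\binom{k-2}{s-2}=\binom{1}{-1}=0$.) So your proof establishes the corollary with the corrected parameter, not the corollary as literally stated.
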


\section{The forth construction with difference sets}\label{sec-fivefive}

Now we present the final construction of difference families using difference sets.

\subsection{The construction}\label{sec-dsDF}
\begin{theorem}
  \label{thm-mainmain}
  Let $G$ be an Abelian group and $D$ a $(v,k,\lambda)$ difference set of $G$.
  Let $H$ be a subgroup of $G$ and $|G:H|=\ell$. Let $1=g_0,g_1,\ldots,g_{\ell-1}$
  be a complete set of coset representatives of $G$.
  For each $g_i$, define $D_ig_i=D\cap Hg_i$.
  Then $\{D_0,D_1,\ldots,D_{\ell-1}\}$ is a $(v,K,\lambda)$ difference family of $H$,
  where $K=\{|D_i|: 0\le i\le \ell-1\}$.
\end{theorem}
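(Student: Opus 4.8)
The plan is to prove the statement by showing directly that every nonidentity element of $H$ occurs exactly $\lambda$ times among the within-block differences of the family $\{D_0,\dots,D_{\ell-1}\}$; the block-size set $K=\{|D_i|\}$ and the number of blocks $\ell=|G:H|$ then record themselves, the ambient group being $H$ (of order $|H|=v/\ell$). First I would unwind the definition: from $D_ig_i=D\cap Hg_i$ we get $D_i=(D\cap Hg_i)g_i^{-1}$, and since each element of $D\cap Hg_i$ has the form $hg_i$ with $h\in H$, translation by $g_i^{-1}$ returns $h\in H$, so indeed $D_i\subseteq H$. Because the cosets $Hg_0,\dots,Hg_{\ell-1}$ partition $G$, we also have $\sum_i|D_i|=|D|=k$.

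The conceptual heart is a coset-separation observation: if $d_1,d_2\in D$ satisfy $d_1d_2^{-1}=h$ for some $h\in H$, then $d_1$ and $d_2$ lie in the same coset of $H$. Indeed $d_1d_2^{-1}\in H$ forces $Hd_1=Hd_2$, so both belong to a common $Hg_i$. Using this, I would set up a multiplicity-preserving bijection between, on the one hand, triples $(i,a,b)$ with $a,b\in D_i$, $a\neq b$, $ab^{-1}=h$, and, on the other hand, pairs $(d_1,d_2)$ with $d_1,d_2\in D$, $d_1\neq d_2$, $d_1d_2^{-1}=h$. Given $d_1,d_2\in D\cap Hg_i$, set $a=d_1g_i^{-1}$ and $b=d_2g_i^{-1}$, both in $D_i$; then $ab^{-1}=d_1g_i^{-1}\,(d_2g_i^{-1})^{-1}=d_1g_i^{-1}g_id_2^{-1}=d_1d_2^{-1}=h$, the representative $g_i$ cancelling. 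The coset-separation observation guarantees that every pair of the second kind arises in this way from a unique block $i$, and the map is visibly invertible, so the multiplicity of $h$ as a within-block difference equals its multiplicity as a difference of two elements of $D$.

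To finish I would invoke the hypothesis that $D$ is a $(v,k,\lambda)$ difference set in $G$: every nonidentity $h\in H$ is in particular a nonidentity element of $G$, hence is represented exactly $\lambda$ times as $d_1d_2^{-1}$ with $d_1,d_2\in D$. By the bijection, $h$ is represented exactly $\lambda$ times as a within-block difference of $\{D_0,\dots,D_{\ell-1}\}$, for every $h\in H\setminus\{1\}$; this is precisely the assertion that the family is a difference family of $H$ with frequency $\lambda$ and block sizes $K$. As a built-in consistency check one then reads off $\sum_i|D_i|(|D_i|-1)=\lambda(|H|-1)$, and any empty or singleton blocks contribute nothing and may be discarded without changing the parameters.

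The step I expect to be the main (though not deep) obstacle is the coset-separation observation together with the verification that the translation by $g_i^{-1}$ preserves differences: this is what localizes the counting so that all representations of a given $h\in H$ come from a single coset block, and it is the reason the global difference-set count descends intact to a difference family on $H$. Here commutativity of $G$ is convenient in that left and right cosets coincide, keeping the coset bookkeeping transparent.
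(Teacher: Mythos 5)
Your proof is correct and is in substance the same as the paper's: the paper writes $D=\sum_i D_ig_i$ in the group ring, expands $DD^{(-1)}=k-\lambda+\lambda G$, and compares the terms falling into $H$ to get $\sum_i D_iD_i^{(-1)}=k-\lambda+\lambda H$, which is exactly your observation that a difference $d_1d_2^{-1}\in H$ can only arise from a pair lying in a single coset $Hg_i$. Your coset-separation bijection is just the element-level version of that coefficient comparison, so the two arguments coincide.
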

\begin{proof}
  First notice that $D_i\subset H$.
  Now it follows from $D=D\cap G=\bigcup\limits_{i=0}^{\ell-1}(D\cap Hg_i)$ and $D$ is a $(v,k,\lambda)$ difference set that
  \begin{eqnarray*}
    DD^{(-1)}  &=& \left(\sum\limits_{i=0}^{\ell-1} D_ig_i\right)\left(\sum\limits_{j=0}^{\ell-1} D_i^{(-1)}g_i^{-1}\right) \\
    &=& \sum\limits_{i,j=0}^{\ell-1} D_iD_j^{(-1)} g_ig_j^{-1} \\
    &=& k-\lambda+\lambda\left(Hg_0+Hg_1+\cdots +Hg_{\ell-1} \right).
  \end{eqnarray*}
  Comparing the terms which fall into $H$ we have
  $$
  \sum\limits_{i=0}^{\ell-1} D_iD_i^{(-1)}=k-\lambda+\lambda H,
  $$
  which implies that $\{D_0,D_1,\cdots,D_{\ell-1}\}$ is a $(v,K,\lambda)$ difference family of $H$.
\end{proof}

This is a generic construction in the sense that it works for any Abelian difference set. 
The numbers $k_i=|D_i|$ for $0\le i\le \ell-1$ are called \textit{intersection numbers with respect to $H$} (see \cite[P. 331-332]{BJL}).
However, the cardinalities $k_i$ of the base blocks $D_i$
may be hard to determine in some cases. This will be seen in subsequent
sections when we deal with difference families obtained from specific
classes of cyclic difference sets.

\subsection{Determination of intersection numbers}

\subsubsection{General bounds}
We first give some basic properties of the intersection numbers $k_i$'s as following. A proof may be found in \cite[Lemma 5.4]{BJL}.
\begin{lemma}
  \label{thm-relations}
  Let the symbols and notations be the same as above. For the cardinalities $k_i$
  we have the following.
  \begin{enumerate}
     \item $\sum_{i=0}^{\ell-1} k_i = k$.
     \item $\sum_{i=0}^{\ell-1} k_i^2 = \lambda (n-1)+k$.
     \item $\sum_{i=0}^{\ell-1} k_i k_{(i+\tau_2) \bmod \ell} = \lambda n$ for
         each $\tau_2$ with $0 < \tau_2 < \ell$.
  \end{enumerate}
\end{lemma}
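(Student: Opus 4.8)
The plan is to establish the three identities by exploiting the group-ring equation for the difference set, exactly as in the proof of Theorem~\ref{thm-mainmain}, but now extracting more refined information from it. Recall that there $D = \sum_{i=0}^{\ell-1} D_i g_i$ in the group ring $\mathbb{Z}[G]$, and the defining relation of a $(v,k,\lambda)$ difference set reads
\begin{equation*}
  D D^{(-1)} = (k-\lambda)\cdot 1 + \lambda G = (k-\lambda)\cdot 1 + \lambda\bigl(Hg_0 + Hg_1 + \cdots + Hg_{\ell-1}\bigr).
\end{equation*}
In the proof of Theorem~\ref{thm-mainmain} only the component falling into $H$ was used. The key observation for this lemma is that comparing the coefficients that land in each individual coset $H g_\tau$ gives one identity for every $\tau$, and these coset-by-coset comparisons are precisely what yield the three statements. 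Since $n = k - \lambda$ throughout, I will use $n = k-\lambda$ freely.

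First I would dispose of statement~(1), which is immediate: since $D = \bigcup_{i}(D \cap Hg_i)$ is a disjoint union and $|D \cap Hg_i| = |D_i g_i| = |D_i| = k_i$, summing the sizes of the parts of the partition gives $\sum_{i=0}^{\ell-1} k_i = |D| = k$. Next, for statements~(2) and~(3) I would expand $DD^{(-1)} = \sum_{i,j} D_i D_j^{(-1)} g_i g_j^{-1}$ and sort the $\ell^2$ terms according to which coset of $H$ the element $g_i g_j^{-1}$ lies in. Because $G$ is Abelian, the coset of $g_i g_j^{-1}$ depends only on the difference of indices modulo $\ell$ (under a suitable indexing of the coset representatives), so for each fixed $\tau$ with $0 \le \tau < \ell$ the terms with $g_i g_j^{-1} \in H g_\tau$ can be collected. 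Taking the total coefficient (the sum of all coefficients of the group-ring element, obtained by applying the augmentation homomorphism $\mathbb{Z}[G] \to \mathbb{Z}$ restricted to each coset) on both sides: the right-hand side contributes $(k-\lambda) + \lambda|H| = n + \lambda|H|$ on the identity coset $\tau = 0$ and $\lambda|H|$ on each nontrivial coset. On the left-hand side, the coset $H g_\tau$ receives contributions $\sum_{i} |D_i|\,|D_{(i-\tau)\bmod \ell}| = \sum_i k_i k_{(i-\tau)\bmod \ell}$ from those index pairs with $i - j \equiv \tau$; after reindexing this matches $\sum_i k_i k_{(i+\tau)\bmod \ell}$. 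Comparing the $\tau = 0$ coset yields statement~(2) with $\sum k_i^2 = n + \lambda|H| = \lambda(n-1) + k$ after substituting $|H| = v/\ell$ and using $n = k-\lambda$ together with the design parameter relation, and comparing a nontrivial coset $\tau = \tau_2$ yields statement~(3) with value $\lambda|H| = \lambda n$.

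The step I expect to require the most care is the bookkeeping that turns the group-ring coefficient count into the stated numerical values: one must verify that $\lambda|H|$ simplifies to $\lambda n$, i.e. that $|H| = n = k-\lambda$, and likewise that $n + \lambda|H|$ reduces to $\lambda(n-1)+k$. These follow from the standard identities $k(k-1) = \lambda(v-1)$ and $n = k - \lambda$, but they are only consistent when $|H| = n$, so the indexing by $\mathbb{Z}/\ell\mathbb{Z}$ and the hypothesis relating $\ell$, $|H|$, and the parameters must be pinned down precisely; a cleaner route, and the one I would ultimately prefer, is to phrase the whole argument as comparing the coefficient of a fixed group element rather than summing over a coset, thereby avoiding any reliance on $|H|$ and reading off the constants $n + \lambda$, $\lambda$ directly. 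Since the result is attributed to \cite[Lemma~5.4]{BJL}, I would also note that the augmentation/coset-projection argument is the standard one and cite it, rather than reproving the parameter arithmetic from scratch.
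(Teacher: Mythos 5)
Your overall strategy---partition $D$ over the cosets of $H$, expand $DD^{(-1)}=(k-\lambda)\cdot 1+\lambda G$ in the group ring, and compare augmented coefficients coset by coset---is the standard argument for this lemma and is structurally sound; note that the paper itself gives no proof, only the citation to \cite[Lemma 5.4]{BJL}. There is, however, one concrete error that prevents your arithmetic from closing: you declare $n=k-\lambda$. In this paper $n$ is \emph{not} the order of the difference set; it is $n=|H|=v/\ell$. This is how $n$ is used everywhere else in Section \ref{sec-fivefive} (the Singer case writes $\frac{q^m-1}{q-1}=n\ell$; the Hadamard case has $v=4u^2$, $n=u^2$, $\ell=4$; and the proof of Theorem \ref{bound} substitutes $\lambda(n\ell-\ell)=\lambda(v-1)-\lambda(\ell-1)$, which forces $n\ell=v$). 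Under your reading, the identity-coset comparison gives $\sum_i k_i^2=(k-\lambda)+\lambda v/\ell$, and this equals $\lambda(n-1)+k$ with $n=k-\lambda$ only if $v/\ell=k-\lambda$, which is false in general; your own closing admission that consistency would require $|H|=n=k-\lambda$ is the symptom of this confusion. The repair is immediate and actually simplifies your proof: with $n=|H|$, the identity coset yields $(k-\lambda)+\lambda n=\lambda(n-1)+k$ and each nonidentity coset yields $\lambda n$, with no appeal to $k(k-1)=\lambda(v-1)$ whatsoever. One further small point: the indexing $k_{(i+\tau_2)\bmod\ell}$ in item (3) presupposes that $G/H$ is cyclic and that the coset representatives $g_i$ are indexed by powers of a fixed generator; you flag this as ``a suitable indexing,'' and it holds in all of the paper's applications, but it should be made an explicit hypothesis if the third identity is to be stated in exactly that form.
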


Note that the three relations among the cardinalities $k_i$ described in
Lemma \ref{thm-relations} are similar to those for perfect nonlinear
functions described in \cite{CD2004}. In addition, these $k_i$ are the
same as those developed by Baumert \cite{Baum}.

In view that it is very hard to determine these parameters $k_i$ in some cases, the following theorem is useful. The bounds on $k_i$ given in Theorem \ref{bound} are general ones. For specific difference sets $D$ and groups $G$, better bounds may exist. For instance in \cite[Lemma 2.2]{momihara}, the author gave the bounds for $k_i$ when $D$ is a cyclic difference set in $(\gf_{p^m},+)$.

\begin{theorem}
	\label{bound}
Let the symbols and notations be the same as above. Then
\begin{eqnarray*}
	\frac{k}{\ell}-\sqrt{\frac{(k-\lambda)(\ell-1)}{\ell}}\le k_i\le \frac{k}{\ell}+\sqrt{\frac{(k-\lambda)(\ell-1)}{\ell}},\ 0\le i\le \ell-1.
\end{eqnarray*}
\end{theorem}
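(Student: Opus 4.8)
The plan is to prove this as a ``variance'' bound built directly on the three identities of Lemma~\ref{thm-relations}. Write $\mu = k/\ell$ for the average of the intersection numbers and consider $V = \sum_{i=0}^{\ell-1}(k_i-\mu)^2$. The first step is a purely formal reduction: expanding the square and using identity (1), $\sum_i k_i = k$, the cross term and the constant term collapse, leaving $V = \sum_i k_i^2 - k^2/\ell = S_2 - k^2/\ell$, where $S_2 = \sum_i k_i^2$. Every $k_i$ enters symmetrically and only through the first two power sums, which is exactly what the lemma controls.

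The second step is to evaluate $V$ in closed form using identities (2) and (3). Squaring identity (1) gives $k^2 = \sum_i k_i^2 + \sum_{i\ne j} k_i k_j$, and summing identity (3) over the $\ell-1$ shifts $\tau_2 = 1,\ldots,\ell-1$ shows $\sum_{i\ne j} k_i k_j = (\ell-1)\lambda n$, since for each fixed $i$ the index $(i+\tau_2)\bmod\ell$ runs over all values $\ne i$. Hence $k^2 = S_2 + (\ell-1)\lambda n$. Substituting this into $\ell V = \ell S_2 - k^2$ yields $\ell V = (\ell-1)(S_2 - \lambda n)$, and identity (2), $S_2 = \lambda(n-1)+k$, gives $S_2 - \lambda n = k-\lambda$. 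Therefore $V = \frac{(\ell-1)(k-\lambda)}{\ell}$, which is exactly the square of the claimed half-width. I would note that this derivation uses only the three relations of the lemma and needs no separate appeal to $k(k-1)=\lambda(v-1)$ or to the value of $n$, as those are already encoded in Lemma~\ref{thm-relations}.

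The final step is the observation that makes the bound pointwise: $V$ is a sum of $\ell$ nonnegative terms, so any single term cannot exceed the whole sum, i.e. $(k_j-\mu)^2 \le V = \frac{(\ell-1)(k-\lambda)}{\ell}$ for every $j$. Taking square roots gives $\left|k_j - k/\ell\right| \le \sqrt{(\ell-1)(k-\lambda)/\ell}$, which is precisely the two-sided estimate in the statement.

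There is no serious obstacle here; the argument is the standard ``a single coordinate cannot exceed the total variance'' trick, and the only genuine work is the algebraic collapse in the second step. For completeness I would remark that applying Cauchy--Schwarz to the remaining $\ell-1$ terms instead, bounding $(k-k_j)^2 \le (\ell-1)(S_2 - k_j^2)$ and solving the resulting quadratic in $k_j$, yields the strictly sharper half-width $\frac{\ell-1}{\ell}\sqrt{k-\lambda}$; the weaker but cleaner form stated in the theorem then follows from $\frac{(\ell-1)^2}{\ell^2} \le \frac{\ell-1}{\ell}$, so either route establishes the claim.
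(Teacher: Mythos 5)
Your proof is correct and takes essentially the same route as the paper: both center the $k_i$ at $k/\ell$, show the sum of squared deviations equals $(k-\lambda)(\ell-1)/\ell$, and bound each individual squared deviation by that total. The only difference is cosmetic --- you evaluate the variance by summing identity (3) of Lemma \ref{thm-relations} over the shifts, whereas the paper uses identity (2) together with $k(k-1)=\lambda(v-1)$ and $v=n\ell$; your closing observation that Cauchy--Schwarz on the remaining $\ell-1$ terms sharpens the half-width to $\frac{\ell-1}{\ell}\sqrt{k-\lambda}$ is also correct.
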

\begin{proof}
	Let $k_i=\frac{k}{\ell}+\lambda_i$ for $0\le i\le \ell-1$. It is clear that $\sum\limits_{i=0}^{\ell-1}\lambda_i=0$
	as $\sum\limits_{i=0}^{\ell-1}k_i=k$. By Lemma \ref{thm-relations},
    \begin{eqnarray*}
		\begin{array}{lll}
		\lambda(n-1)+k &=& \sum\limits_{i=0}^{\ell-1}k_i^2 = \sum\limits_{i=0}^{\ell-1}(\frac{k^2}{\ell^2}+\lambda_i^2+\frac{2k\lambda_i}{\ell}) \\
					  &=& \frac{k^2}{\ell}+\sum\limits_{i=0}^{\ell-1}\lambda_i^2. \\
		\end{array}
    \end{eqnarray*}
	Therefore,
	\begin{eqnarray*}
		\begin{array}{lll}
           \sum\limits_{i=0}^{\ell-1}\lambda_i^2 &=& \lambda(n-1)+k-\frac{k^2}{\ell}=\frac{1}{\ell}(\lambda(n\ell-\ell)+k\ell-k^2) \\ [2ex]
		   &=&\frac{1}{\ell}(\lambda(v-1)-\lambda(\ell-1)+k\ell-k^2)=\frac{1}{\ell}(k(k-1)-\lambda(\ell-1)+k\ell-k^2)\\ [2ex]
		   &=&\frac{1}{\ell}(k-\lambda)(\ell-1). \\
		\end{array}
	\end{eqnarray*}
	Now we have
	$$ -\sqrt{\frac{(k-\lambda)(\ell-1)}{\ell}}\le \lambda_i\le \sqrt{\frac{(k-\lambda)(\ell-1)}{\ell}} $$
	and the proof is completed.
\end{proof}

\subsubsection{The case $\ell=2,3,4$}
Next we determine the intersection numbers $k_i$'s for $\ell=2,3,4$.
\begin{theorem}
	\label{lis2}
	If $\ell=2$, then $(k_0,k_1)=(\frac{k+\sqrt{k-\lambda}}{2}, \frac{k-\sqrt{k-\lambda}}{2})\ \mbox{or}\
	(\frac{k-\sqrt{k-\lambda}}{2}, \frac{k+\sqrt{k-\lambda}}{2})$.
\end{theorem}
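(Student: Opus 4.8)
The plan is to solve the small symmetric system in the two unknowns $k_0,k_1$ furnished by Lemma \ref{thm-relations} specialized to $\ell=2$. First I would record the two relevant identities. Relation (1) gives $k_0+k_1=k$. For a second equation it is cleanest to use relation (3) with $\tau_2=1$: for $\ell=2$ the sum $\sum_{i=0}^{1}k_ik_{(i+1)\bmod 2}$ collapses to $k_0k_1+k_1k_0=2k_0k_1$, so relation (3) reads $2k_0k_1=\lambda n$, where $n=|H|=v/2$ is the order of the subgroup. (Relation (2) would serve equally well.)

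Next I would pass to the difference of the two intersection numbers, which is where the target expression $\sqrt{k-\lambda}$ appears. Since $(k_0-k_1)^2=(k_0+k_1)^2-4k_0k_1=k^2-2\lambda n$ and $n=v/2$, this is $k^2-\lambda v$. The key substitution is the fundamental counting identity for difference sets, $\lambda(v-1)=k(k-1)$, equivalently $\lambda v=k^2-k+\lambda$; plugging it in gives $(k_0-k_1)^2=k^2-(k^2-k+\lambda)=k-\lambda$. Hence $k_0-k_1=\pm\sqrt{k-\lambda}$, and combining this with $k_0+k_1=k$ produces exactly the two ordered pairs in the statement. Alternatively, one can read the result straight off the proof of Theorem \ref{bound}: writing $k_i=k/\ell+\lambda_i$, that proof establishes $\sum_i\lambda_i^2=(k-\lambda)(\ell-1)/\ell$, which for $\ell=2$ is $\lambda_0^2+\lambda_1^2=(k-\lambda)/2$; since $\lambda_0+\lambda_1=0$ forces $\lambda_1=-\lambda_0$, we get $\lambda_0^2=(k-\lambda)/4$, i.e.\ $k_0=k/2\pm\sqrt{k-\lambda}/2$, the same conclusion.

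There is no genuine obstacle here, as the claim is only a two-variable instance of relations already proved; the one thing to watch is that the answer is phrased in terms of $k-\lambda$ rather than $n$ or $v$, so one must remember to invoke $\lambda(v-1)=k(k-1)$ together with $v=2n$ to eliminate $n$. I would also remark that the two displayed pairs form a single unordered solution set: the relations are symmetric in $k_0,k_1$ and do not by themselves pin down which coset carries the larger intersection number, so both orderings must be listed.
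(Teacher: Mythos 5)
Your proof is correct and is exactly the computation the paper has in mind: its entire proof of Theorem \ref{lis2} is the one-line remark that the result follows by setting $\ell=2$ in Lemma \ref{thm-relations}, and you have simply carried out that substitution, using $2k_0k_1=\lambda n$, $n=v/2$, and $\lambda(v-1)=k(k-1)$ to obtain $(k_0-k_1)^2=k-\lambda$. No further comment is needed.
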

\begin{proof}
	The proof is straightforward by replacing $\ell=2$ in Lemma \ref{thm-relations}.
\end{proof}

An immediate corollary on the existence of Abelian difference sets may be derived.
\begin{corollary}
   Let $D$ be a $(v,k,\lambda)$ difference set in $G$ with order $v$, where $v$ is an even integer.
   Then $k-\lambda$ is a square.
\end{corollary}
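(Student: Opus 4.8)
The plan is to apply Theorem~\ref{lis2} with a suitably chosen subgroup of index $2$ and then exploit the fact that the intersection numbers are, by their very definition, nonnegative integers. The evenness of $v$ will enter precisely through the existence of such a subgroup.

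First I would produce a subgroup $H$ of $G$ with $|G:H|=2$. In the ambient setting of Theorem~\ref{thm-mainmain} the group $G$ is Abelian, and when $v$ is even such a subgroup always exists: the image $G^2=\{x^2 : x\in G\}$ is a proper subgroup, since otherwise the squaring map would be a bijection and $G$ would contain no element of order $2$, contradicting Cauchy's theorem; consequently $G/G^2$ is a nontrivial elementary Abelian $2$-group, which has a subgroup of index $2$, and its preimage in $G$ is the desired $H$. I would then invoke Theorem~\ref{thm-mainmain} with this $H$ and $\ell=2$, obtaining intersection numbers $k_0=|D\cap H|$ and $k_1=|D\cap Hg_1|$, where $g_1$ represents the nontrivial coset. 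By construction both $k_0$ and $k_1$ are nonnegative integers.

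On the other hand, Theorem~\ref{lis2} forces $\{k_0,k_1\}=\{\frac{k+\sqrt{k-\lambda}}{2},\frac{k-\sqrt{k-\lambda}}{2}\}$, so that $|k_0-k_1|=\sqrt{k-\lambda}$. The conclusion is then immediate: $k_0-k_1$ is a difference of two integers, hence itself an integer, and it equals $\pm\sqrt{k-\lambda}$; therefore $\sqrt{k-\lambda}$ is a nonnegative integer and $k-\lambda$ is a perfect square. I expect no genuine obstacle here: the only point requiring a little care is the existence of the index-$2$ subgroup, which is exactly where the hypothesis that $v$ is even is used, and everything else is a single integrality observation layered on top of Theorem~\ref{lis2}.
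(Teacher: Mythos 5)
Your proof is correct and matches the paper's intended argument: the paper states this as an immediate consequence of Theorem~\ref{lis2} (in the Abelian setting of Theorem~\ref{thm-mainmain}), leaving implicit exactly the step you supply, namely that an Abelian group of even order has a subgroup of index $2$, so that the intersection numbers $k_0,k_1$ are integers with $k_0-k_1=\pm\sqrt{k-\lambda}$. Nothing further is needed.
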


As a corollary of Theorem \ref{lis2}, we have the following.
\begin{corollary}
If $D$ is a cyclic difference set with parameters $(4u^2, 2u^2-u, u^2-u)$ and $\ell =2$,
then we have
$$
(k_0,k_1)=(u^2, u^2-u) \mbox{ or } (k_0,k_1)=(u^2-u, u^2).
$$
\end{corollary}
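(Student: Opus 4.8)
The plan is to derive the result as a direct specialization of Theorem~\ref{lis2}. Since the parameters are $(v,k,\lambda)=(4u^2,\,2u^2-u,\,u^2-u)$ and we are in the case $\ell=2$, Theorem~\ref{lis2} tells us immediately that the two intersection numbers are
\[
(k_0,k_1)=\left(\frac{k+\sqrt{k-\lambda}}{2},\ \frac{k-\sqrt{k-\lambda}}{2}\right)
\]
or the reverse. So the entire content of the corollary reduces to substituting the given values of $k$ and $\lambda$ into these two expressions and simplifying.

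First I would compute the discriminant $k-\lambda=(2u^2-u)-(u^2-u)=u^2$, so that $\sqrt{k-\lambda}=u$ (taking $u$ positive, or noting that the square root here is meant as the nonnegative one, which is the only way the expression makes sense as a cardinality). This is the single observation that makes everything collapse cleanly. Next I would substitute into the two formulas: the larger value is $\frac{k+u}{2}=\frac{(2u^2-u)+u}{2}=\frac{2u^2}{2}=u^2$, and the smaller value is $\frac{k-u}{2}=\frac{(2u^2-u)-u}{2}=\frac{2u^2-2u}{2}=u^2-u$. Hence $(k_0,k_1)=(u^2,u^2-u)$ or $(k_0,k_1)=(u^2-u,u^2)$, exactly as claimed.

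There is essentially no obstacle here, as the proof is a one-line arithmetic verification once Theorem~\ref{lis2} is in hand; the only thing to be slightly careful about is the sign convention for the square root, but since $u^2-u\ge 0$ forces $u\ge 1$ and cardinalities are nonnegative, taking $\sqrt{u^2}=u$ is unambiguous. I would therefore write the proof simply as: by Theorem~\ref{lis2}, the pair $(k_0,k_1)$ equals $\bigl(\tfrac{k\pm\sqrt{k-\lambda}}{2}\bigr)$ in some order; substituting $k=2u^2-u$, $\lambda=u^2-u$ gives $k-\lambda=u^2$ and hence the two values $u^2$ and $u^2-u$.
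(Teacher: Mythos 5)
Your proof is correct and is exactly the computation the paper intends: the corollary is stated as an immediate consequence of Theorem~\ref{lis2}, and your substitution $k-\lambda=(2u^2-u)-(u^2-u)=u^2$, giving the values $\frac{k\pm u}{2}=u^2$ and $u^2-u$, is the whole argument. Nothing is missing.
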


The determination of the parameters $k_i$ for the case $\ell =3$ is nontrivial. The following lemma will be used for the case $\ell=3$.

\begin{lemma}
	\cite{nair}
	\label{nairs}
	Let $a$ be a positive integer. Then the equation $x^2+y^2+xy=a$ is solvable (has integer solutions) if and only if, for every
	prime $p$ dividing $a$ such that $p\ne 3$ and $p\not\equiv 1\bmod 6$, an even power of $p$ exactly divides $a$.
\end{lemma}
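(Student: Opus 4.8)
The plan is to recognize the binary quadratic form $x^2 + y^2 + xy$ as the norm form of the ring of Eisenstein integers and to read off the representable integers from the factorization of rational primes in that ring. Setting $\omega = \frac{-1+\sqrt{-3}}{2}$, so that $\omega^2 + \omega + 1 = 0$, the ring $\Z[\omega]$ is the full ring of integers of $\mathbb{Q}(\sqrt{-3})$, and the norm of $\alpha = u + v\omega$ is $N(\alpha) = \alpha\bar\alpha = u^2 - uv + v^2$. After the sign change $v \mapsto -v$ this becomes $u^2 + uv + v^2$, so the positive integers represented by $x^2 + y^2 + xy$ are exactly the values $N(\alpha)$ with $\alpha \in \Z[\omega]$. (The form is positive definite, since $x^2 + xy + y^2 = (x + y/2)^2 + 3y^2/4$, so only nonnegative integers occur, matching the hypothesis that $a$ is a positive integer.) Thus the lemma is equivalent to determining which positive integers are norms from $\Z[\omega]$.

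First I would record that $\Z[\omega]$ is a Euclidean domain with respect to $N$, hence a unique factorization domain, and list its primes together with their norms. Using multiplicativity of $N$ and the reciprocity identity $\left(\frac{-3}{p}\right) = \left(\frac{p}{3}\right)$, the rational primes behave as follows: the prime $3$ ramifies, with $3 = N(1-\omega)$; every rational prime $p \equiv 1 \pmod 3$ splits as $p = \pi\bar\pi$ (up to units) with $N(\pi) = p$; and every rational prime $p \equiv 2 \pmod 3$ (in particular $p = 2$) stays prime, with $N(p) = p^2$. The crucial point is that a prime is of this last, \emph{inert}, type precisely when $p \ne 3$ and $p \not\equiv 1 \pmod 6$: for odd $p \ne 3$ one has $p \equiv 1 \pmod 6 \Leftrightarrow p \equiv 1 \pmod 3$, while $p = 2$ satisfies $2 \ne 3$ and $2 \not\equiv 1 \pmod 6$ and is inert. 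So the exceptional primes named in the statement are exactly the inert primes.

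With this dictionary in hand, both implications reduce to bookkeeping on prime factorizations. For necessity, suppose $a = N(\alpha)$ and $p$ is inert; since $p$ is itself a prime of $\Z[\omega]$ with $N(p) = p^2$, and no prime of $\Z[\omega]$ lying over a different rational prime has norm divisible by $p$, factoring $\alpha$ and applying multiplicativity shows that $p$ enters $N(\alpha) = a$ only through factors $N(p) = p^2$; hence the exact power of $p$ dividing $a$ is even. For sufficiency, suppose every inert prime divides $a$ to an even power and write $a = 3^{c}\prod_i p_i^{e_i}\prod_j q_j^{2f_j}$, where the $p_i \equiv 1 \pmod 3$ are split and the $q_j$ are inert. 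Since $3 = N(1-\omega)$, $p_i = N(\pi_i)$, and $q_j^{2} = N(q_j)$, the element $\alpha = (1-\omega)^{c}\prod_i \pi_i^{e_i}\prod_j q_j^{f_j}$ satisfies $N(\alpha) = a$ by multiplicativity, so $a$ is represented.

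The main obstacle is assembling the arithmetic of $\Z[\omega]$ correctly rather than any single hard estimate: one must justify that $\Z[\omega]$ is a unique factorization domain, obtain the complete list of its primes and their norms, and verify the reciprocity computation that pins down the inert primes — including the borderline cases $p = 2$ and $p = 3$, which are exactly where a careless argument goes astray. Once the prime classification is established and matched to the condition ``$p \ne 3$ and $p \not\equiv 1 \pmod 6$'', the two directions of the equivalence are routine multiplicative bookkeeping.
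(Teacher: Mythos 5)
Your proof is correct, but note that the paper itself offers no proof of this lemma: it is stated verbatim with a citation to Nair, so there is no internal argument to compare against. Your route is the standard algebraic-number-theoretic one: identify $x^2+xy+y^2$ with the norm form of the Eisenstein integers $\Z[\omega]$, use that $\Z[\omega]$ is a Euclidean (hence unique factorization) domain, classify the rational primes as ramified ($p=3$), split ($p\equiv 1\bmod 3$) or inert ($p\equiv 2\bmod 3$, including $p=2$), and observe that the exceptional primes of the statement are exactly the inert ones, which can only enter a norm through $N(p)=p^2$. The cited source, as its title indicates, argues elementarily in the spirit of the two-squares theorem: multiplicativity of the form via a composition identity such as $(a^2+ab+b^2)(c^2+cd+d^2)=(ac-bd)^2+(ac-bd)(ad+bc+bd)+(ad+bc+bd)^2$, representability of primes $p\equiv 1\bmod 6$ by descent from the solvability of $x^2+x+1\equiv 0\bmod p$, and the parity condition at the remaining primes from the analogue of the fact (stated in the paper as Theorem \ref{thm-april1} for the form $a^2+b^2$) that such a prime dividing $x^2+xy+y^2$ must divide both $x$ and $y$. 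Your version is shorter and more conceptual at the price of invoking unique factorization in $\Z[\omega]$; the elementary version is longer but self-contained. You handle the two borderline primes $2$ and $3$ explicitly and correctly, which is precisely where such arguments usually go astray.
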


Let $\ell=3$. To determine $k_0,k_1,k_2$, we need to solve the following system of equations:
\begin{eqnarray}
	\label{eqlis3}
\left\{
    \begin{array}{ll}
	 & k_0+k_1+k_2=k, \\
	 & k_0^2+k_1^2+k_2^2=\lambda(n-1)+k, \\
	 & k_0k_1+k_1k_2+k_0k_2=\lambda n. \\
	\end{array}
\right.
\end{eqnarray}

Two solutions $(k_0,k_1,k_2)$ and $(k_0',k_1',k_2')$ of (\ref{eqlis3}) are  \textit{equivalent} if
$\{k_0, k_1, k_2\}=\{k_0', k_1', k_2'\}$ as multisets.
It follows from the second and the third equation of (\ref{eqlis3}) that
$$(k_0+k_1)^2+(k_1+k_2)^2+(k_0+k_2)^2=4\lambda n+2(k-\lambda).$$
Let $x=k_0+k_1=k-k_2, y=k_1+k_2=k-k_0$. We have then
\begin{eqnarray}
  \label{step1}
  x^2+y^2+(2k-x-y)^2=4\lambda n+2(k-\lambda).
\end{eqnarray}
Simplifying (\ref{step1}) gives that
\begin{eqnarray}
	\label{step2}
    x^2+y^2+xy-2k(x+y)=-\lambda n-k^2.
\end{eqnarray}

In the following we distinguish between the cases $k\equiv 0\bmod 3$ and $k\not\equiv 0\bmod 3$.

{\bf{Case 1}}. Assume that $k\equiv 0\bmod 3$. Setting $s=x-\frac{2k}{3}, t=y-\frac{2k}{3}$,
we obtain from (\ref{step2}) that
\begin{equation}
	\label{step13}
	s^2+t^2+ts=\frac{1}{3}(k-\lambda).
\end{equation}
The above computations use the fact that $k(k-1)=\lambda(v-1)$.
The following result is obvious from the above discussions.

\begin{lemma}
	\label{case1change}
	If $k\equiv 0\bmod 3$, then the system of equations (\ref{eqlis3}) is solvable if and only if
	(\ref{step13}) is solvable.
\end{lemma}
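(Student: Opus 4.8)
The plan is to show the equivalence between solvability of the system \eqref{eqlis3} and solvability of the single quadratic form equation \eqref{step13}, by carefully tracking the invertibility of the change of variables that has already been carried out in the text immediately preceding the statement. The key observation is that everything between \eqref{eqlis3} and \eqref{step13} is a sequence of \emph{reversible} algebraic substitutions, so the lemma is essentially a bookkeeping statement asserting that no information is lost along the way.

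First I would establish the forward direction. Starting from an integer solution $(k_0,k_1,k_2)$ of \eqref{eqlis3}, the quantities $x=k-k_2$ and $y=k-k_0$ are integers, and since $k\equiv 0\bmod 3$ the shifts $s=x-\tfrac{2k}{3}$ and $t=y-\tfrac{2k}{3}$ are again integers. The derivation of \eqref{step13} from \eqref{eqlis3} shown in the text then produces an integer solution $(s,t)$ of \eqref{step13}; this half requires only that I verify the computation already displayed actually preserves integrality, which it does precisely because $3\mid k$.

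The substantive direction is the converse: given an integer solution $(s,t)$ of \eqref{step13}, I must recover an integer solution of the original system. Setting $x=s+\tfrac{2k}{3}$ and $y=t+\tfrac{2k}{3}$ (integers since $3\mid k$), and then $k_2=k-x$, $k_0=k-y$, $k_1=k-k_0-k_2=x+y-k$, I obtain integers $k_0,k_1,k_2$ whose sum is $k$ by construction. The point to check is that these also satisfy the second and third equations of \eqref{eqlis3}. Here I would use the fact that \eqref{step2} is algebraically equivalent to the combination $(k_0+k_1)^2+(k_1+k_2)^2+(k_0+k_2)^2=4\lambda n+2(k-\lambda)$ of the last two equations, together with the linear constraint $k_0+k_1+k_2=k$; with the sum fixed at $k$, knowing this single symmetric combination pins down $\sum k_i^2$ and $\sum_{i<j}k_ik_j$ uniquely, recovering both remaining equations. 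The identity $k(k-1)=\lambda(v-1)$ (equivalently $\lambda n=\lambda v-k$, using $n=k-\lambda$) is what makes the constant on the right-hand side come out correctly, so I would invoke it explicitly at this step.

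The main obstacle I anticipate is not any deep number theory but rather the care needed in the converse: one must confirm that fixing the value of the symmetric combination $x^2+y^2+xy$ together with the linear sum genuinely forces \emph{both} of the remaining power-sum equations, rather than only their sum. I would address this by noting that two elementary symmetric functions $e_1=k_0+k_1+k_2$ and $e_2=k_0k_1+k_1k_2+k_0k_2$ determine $p_2=\sum k_i^2=e_1^2-2e_2$, so it suffices to show that the prescribed value of $x^2+y^2+xy$ together with $e_1=k$ recovers $e_2$; this is a short direct computation reversing the passage from \eqref{step1} to \eqref{step2}. Once that is verified, Lemma~\ref{nairs} is available to decide solvability of \eqref{step13} in concrete cases, but that application lies outside the present lemma, whose content is purely the reduction.
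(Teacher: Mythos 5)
Your proposal is correct and follows the same route as the paper, which simply declares the lemma ``obvious from the above discussions'' after deriving \eqref{step13} from \eqref{eqlis3}; you are just making the reversibility of that chain of substitutions explicit. Your key verification --- that fixing $e_1=k$ together with the value of $(k_0+k_1)^2+(k_1+k_2)^2+(k_0+k_2)^2$ recovers both remaining equations via $p_2=e_1^2-2e_2$ and the identity $k(k-1)=\lambda(v-1)$ with $v=3n$ --- is exactly the detail the paper leaves unstated, and it checks out.
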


By Lemmas \ref{case1change} and \ref{nairs}, we have the following.

\begin{theorem}
	\label{case1}
	Let the symbols and notations be as above. If $k\equiv 0\bmod 3$, then the system of equations (\ref{eqlis3}) is solvable if and only if the canonical factorization of the integer
	$\frac{1}{3}(k-\lambda)$ has the following form
	$$\frac{1}{3}(k-\lambda)=a^2\cdot 3^b\cdot p_1^{\alpha_1}\cdots p_t^{\alpha_t},$$
    where no prime divisors of $a$ are of the form $6k+1$, and $b,t,\alpha_i$ are integers, and
	$p_i$ are distinct primes with $p_i\equiv 1\bmod 6$ for $1\le i\le t$.
	Moreover, when the set of equations (\ref{eqlis3}) is solvable, all the inequivalent solutions of (\ref{eqlis3})
	are:
	$$\left(\frac{k}{3}-t', \frac{k}{3}+s'+t', \frac{k}{3}-s'\right),\ \left(\frac{k}{3}+t', \frac{k}{3}-t'-s', \frac{k}{3}+s'\right), $$
	where $s'^2+s't'+t'^2=\frac{1}{3}(k-\lambda)$ and $s'\ge t'$.
\end{theorem}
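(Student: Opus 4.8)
The plan is to prove the two halves of the statement separately: first the solvability criterion, which follows almost immediately from the two preceding lemmas plus an elementary sorting of prime factors, and then the explicit list of solutions, which comes from turning the substitution already set up in the text into an honest bijection and analysing its symmetry. Set $N := \frac{1}{3}(k-\lambda)$ for brevity.

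For the solvability criterion I would begin from Lemma \ref{case1change}, which reduces the solvability of the system (\ref{eqlis3}) to that of the single equation (\ref{step13}), namely $s^2 + t^2 + ts = N$. Applying Lemma \ref{nairs} (Nair's theorem) with right-hand side $N$, this equation is solvable if and only if every prime $p \mid N$ with $p \ne 3$ and $p \not\equiv 1 \bmod 6$ divides $N$ to an even power. The only remaining work is number-theoretic sorting: the primes allowed to occur to an arbitrary exponent are exactly $3$ and those $\equiv 1 \bmod 6$, while all others (the prime $2$ and the primes $\equiv 5 \bmod 6$) must occur to an even power. Collecting the unrestricted primes into the factors $3^b$ and $\prod_i p_i^{\alpha_i}$ with $p_i \equiv 1 \bmod 6$, and the even-exponent primes into a perfect square whose base $a$ then has no prime divisor $\equiv 1 \bmod 6$, reproduces precisely the canonical factorization asserted in the theorem. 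This direction is routine once the two lemmas are invoked.

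For the explicit solutions I would promote the text's substitution to a bijection. Writing $s = \frac{k}{3} - k_2$ and $t = \frac{k}{3} - k_0$ gives $k_2 = \frac{k}{3} - s$, $k_0 = \frac{k}{3} - t$, and $k_1 = k - k_0 - k_2 = \frac{k}{3} + s + t$, so the map $(s,t) \mapsto (k_0,k_1,k_2) = \left(\frac{k}{3}-t,\ \frac{k}{3}+s+t,\ \frac{k}{3}-s\right)$ is a bijection between integer solutions of (\ref{step13}) and integer triples solving (\ref{eqlis3}) (one checks the three equations of (\ref{eqlis3}) are satisfied by reversing the derivation of (\ref{step13}), using $k(k-1)=\lambda(v-1)$ and $v=3n$). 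Feeding a solution $(s',t')$ of (\ref{step13}) through this map yields the first displayed triple directly, and replacing $(s',t')$ by $(-s',-t')$, which again solves (\ref{step13}) because the form $s^2+st+t^2$ is even, yields the second displayed triple; the normalisation $s' \ge t'$ merely fixes a representative under the $s \leftrightarrow t$ symmetry of the form.

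The delicate point, and the step I expect to be the main obstacle, is controlling the equivalence so as to justify that these expressions exhaust the inequivalent solutions. Since (\ref{eqlis3}) is symmetric in $k_0,k_1,k_2$, two triples are equivalent exactly when they coincide as multisets, so I must track how the six coordinate permutations act on $(s,t)$. A direct computation shows they induce six linear maps preserving $s^2+st+t^2$ (for example the $3$-cycle sends $(s,t)\mapsto(-s-t,\,s)$ and the transposition of $k_0,k_2$ sends $(s,t)\mapsto(t,s)$), forming a copy of $S_3$ of order $6$ inside the order-$12$ automorphism group of this binary quadratic form of discriminant $-3$. The single missing coset is realised by the negation $(s,t)\mapsto(-s,-t)=-I$, which is \emph{not} induced by any permutation and hence produces a genuinely second multiset, namely the second displayed triple. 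Thus inequivalent solutions of (\ref{eqlis3}) correspond to orbits of solutions of (\ref{step13}) under this order-$6$ group, and within each full orbit the two cosets give the two listed triples. Making this correspondence rigorous, pinning down the exact normalisation so that the parametrisation by $(s',t')$ with $s' \ge t'$ neither over- nor under-counts, and handling the degenerate coincidences (such as $s'=t'$ or $s'=0$, where the two triples may collapse) is where the real care is required.
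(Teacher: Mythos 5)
Your proposal is correct, and for the solvability criterion it coincides with the paper's argument, which consists solely of the sentence ``By Lemmas \ref{case1change} and \ref{nairs}, we have the following'': reduce (\ref{eqlis3}) to (\ref{step13}) via the substitution already in the text, invoke Nair's criterion, and sort the primes of $\frac{1}{3}(k-\lambda)$ into the square part $a^2$ and the unrestricted part $3^b\prod p_i^{\alpha_i}$. Where you genuinely diverge is the second half: the paper offers no proof whatsoever of the enumeration of inequivalent solutions, whereas you supply one by making the substitution $(s,t)=(k/3-k_2,\,k/3-k_0)$ an explicit bijection (checking invertibility via $k(k-1)=\lambda(v-1)$ and $v=3n$) and then computing how the $S_3$ of coordinate permutations sits as an index-two subgroup of the order-$12$ automorphism group of $s^2+st+t^2$, with the missing coset represented by $-I$; this is exactly the content needed to see why each solution $(s',t')$ of the form equation yields precisely two multisets $\{k_0,k_1,k_2\}$. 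Your argument buys a rigorous justification the paper lacks, and it also exposes a point the theorem statement glosses over: when $\frac{1}{3}(k-\lambda)$ admits essentially distinct representations by the form (e.g.\ $91=9^2+9\cdot1+1^2=6^2+6\cdot5+5^2$), the inequivalent solutions of (\ref{eqlis3}) are indexed by those representations, so the displayed pair must be read as ranging over all such $(s',t')$ with $s'\ge t'$; and in degenerate cases (for instance $s'=t'$) the two displayed triples can coincide. You correctly flag both issues; spelling them out would complete the proof rather than reveal a gap in it.
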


{\bf{Case 2}}. If $k\not\equiv 0\bmod 3$,  multiply $9$ on both sides of (\ref{step2}) and set
$s=3x-2k, t=3y-2k$. By (\ref{step2}) we have
\begin{eqnarray*}
    (s+2k)^2+(t+2k)^2+(s+2k)(t+2k)-6k(s+t+4k)=-9(\lambda n+k^2).
\end{eqnarray*}
Simplifying the above equation yields that
\begin{eqnarray}
	\label{step23}
    s^2+t^2+st=3(k-\lambda).
\end{eqnarray}

For an arbitrary solution $(s,t)$ of (\ref{step23}), if (\ref{eqlis3}) has a solution $(k_0,k_1,k_2)$,
then $(s+2k\bmod 3, t+2k\bmod 3)=(0,0)$. This is equivalent to saying that $(s\bmod 3, t\bmod 3)=(1,1)$ if $k\equiv 1\bmod 3$;
and $(s\bmod 3, t\bmod 3)=(2,2)$ if $k\equiv 2\bmod 3$. We have the following result.

\begin{theorem}
    \label{case2}
	Let the symbols and notations be as above. If $k\not\equiv 0\bmod 3$, then the system of equations (\ref{eqlis3}) is solvable if
	$3(k-\lambda)$ is a positive integer with the following canonical factorization
	$$3(k-\lambda)=a^2\cdot 3^b\cdot p_1^{\alpha_1}\cdots p_t^{\alpha_t},$$
    where no prime divisors of $a$ are of the form $6k+1$, and $b,t,\alpha_i$ are integers with $b\ge 1$, and
	$p_i$ are distinct primes with $p_i\equiv 1\bmod 6$ for $1\le i\le t$.
	Moreover, assume that the system of equations (\ref{eqlis3}) are solvable. Let $(s',t')$ be a solution with
	$(s'\bmod 3, t'\bmod 3)=(1,1)\ (\mbox{resp.}\ (2,2) )$ when $k\equiv 1\ (\mbox{resp}\ 2)\bmod 3$, then
	all the inequivalent solutions of (\ref{eqlis3})
	are:
	$$\left(\frac{k+t'}{3}, \frac{k+s'+t'}{3}, \frac{k-s')}{3}\right),\ \left(\frac{k-t'}{3}, \frac{k-s'-t'}{3}, \frac{k+s'}{3}\right),$$
	where $s'^2+s't'+t'^2=3(k-\lambda)$ and $s'\ge t'$.
\end{theorem}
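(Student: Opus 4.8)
The plan is to turn the reduction already prepared before the theorem into an explicit reversible bijection, feed its existence part into Nair's criterion (Lemma \ref{nairs}), and then do the enumeration by exploiting symmetry; the real work is a congruence analysis modulo $3$. First I would make the correspondence between solutions of (\ref{eqlis3}) and of (\ref{step23}) precise. To a triple $(k_0,k_1,k_2)$ I associate $s=3x-2k=k-3k_2$ and $t=3y-2k=k-3k_0$, and conversely to a pair $(s,t)$ I associate $k_0=(k-t)/3$, $k_1=(k+s+t)/3$, $k_2=(k-s)/3$. Reversing the manipulations leading from (\ref{eqlis3}) to (\ref{step23}) --- using $k(k-1)=\lambda(v-1)$ together with $v=\ell n$ (established in the proof of Theorem \ref{bound}), here with $\ell=3$, to recover the dependent third equation from the first two --- shows that $(k_0,k_1,k_2)$ solves (\ref{eqlis3}) if and only if $(s,t)$ solves (\ref{step23}) and in addition $s\equiv t\equiv k\pmod 3$, which is exactly the condition making $k_0,k_1,k_2$ integral.

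For the existence (``if'') part I would apply Lemma \ref{nairs} with $a=3(k-\lambda)$: the stated factorization $3(k-\lambda)=a^2\cdot 3^{b}\cdot p_1^{\alpha_1}\cdots p_t^{\alpha_t}$, in which every prime that is neither $3$ nor $\equiv1\pmod 6$ occurs to an even power (packaged into $a^2$), is precisely Nair's criterion, so (\ref{step23}) has an integer solution $(s_0,t_0)$. The remaining point is to force the congruence $s\equiv t\equiv k\pmod 3$. Here I would first record the residue table for $s^2+st+t^2\bmod 3$, which vanishes exactly on $(s,t)\equiv(0,0),(1,1),(2,2)$; since $3(k-\lambda)\equiv0\pmod 3$ every solution lies in one of these classes. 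The key arithmetic observation is that $3\mid v$ (because $\ell=|G:H|=3$), so $k(k-1)=\lambda(v-1)$ gives $k-\lambda\equiv k^2\pmod 3$, and in Case 2 ($k\not\equiv0$) this forces $k-\lambda\equiv1\pmod 3$. Thus $3\nmid(k-\lambda)$, which excludes the class $(0,0)$ (that class would force $9\mid 3(k-\lambda)$); every solution is therefore of type $(1,1)$ or $(2,2)$. Since $(s,t)\mapsto(-s,-t)$ preserves (\ref{step23}) and interchanges these two types, negating $(s_0,t_0)$ if necessary produces $(s',t')$ with $s'\equiv t'\equiv k\pmod 3$, and pushing it through the inverse map yields integral $k_i$, establishing solvability.

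For the ``moreover'' part I would argue in two steps. Direct substitution verifies that each displayed triple satisfies all three equations of (\ref{eqlis3}) whenever $s'^2+s't'+t'^2=3(k-\lambda)$; this is routine. For completeness I would use that (\ref{eqlis3}) is symmetric in $k_0,k_1,k_2$, so ``inequivalent'' means distinct as multisets, and that under the bijection a permutation of the $k_i$ corresponds exactly to the order-$6$ subgroup of automorphisms of the form $s^2+st+t^2$ generated by the swap $(s,t)\mapsto(t,s)$ and the rotation $(s,t)\mapsto(-s-t,s)$ (the swap induces a transposition of the $k_i$ and the rotation a $3$-cycle). Crucially this is precisely the part of the symmetry group that \emph{preserves} the residue type, while the negation half reverses it. Hence distinct solution-multisets correspond to type-compatible solutions of (\ref{step23}) modulo this symmetry, and the normalization $s'\ge t'$ selects the representatives giving the displayed families.

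I expect the congruence bookkeeping to be the main obstacle, for two reasons. First, Nair's lemma only guarantees \emph{some} solution, and the whole argument hinges on transporting it into the class $(k,k)\bmod 3$; this is exactly where the difference-set relation $k(k-1)=\lambda(v-1)$ together with $3\mid v$ is indispensable, and it is what makes Case 2 behave differently from Case 1, where $2k/3$ is integral, negation preserves integrality, and the two triples are genuine negation-partners. Second, matching the symmetry orbits of $s^2+st+t^2=3(k-\lambda)$ to the multisets $\{k_0,k_1,k_2\}$ --- and pinning down exactly which normalized $(s',t')$ survive --- needs care, since in Case 2 negation is no longer an admissible (integrality-preserving) symmetry, and the precise number of inequivalent solutions genuinely reflects the number of representations of $3(k-\lambda)$ by the form.
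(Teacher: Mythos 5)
Your overall route coincides with the paper's: Theorem \ref{case2} carries no separate proof there, its justification being precisely the reduction you reconstruct (pass to $x=k-k_2$, $y=k-k_0$, substitute $s=3x-2k$, $t=3y-2k$ to reach $s^2+t^2+st=3(k-\lambda)$, invoke Lemma \ref{nairs} for representability, and impose $(s,t)\equiv(k,k)\bmod 3$ for integrality). Your explicit inverse map $k_0=(k-t)/3$, $k_1=(k+s+t)/3$, $k_2=(k-s)/3$, and the identification of the $S_3$-action on $\{k_0,k_1,k_2\}$ with the type-preserving index-two subgroup of the automorphism group of the form, are both consistent with that setup. You also add something the paper omits: since $\ell=3$ divides $v$, the relation $k(k-1)=\lambda(v-1)$ gives $k-\lambda\equiv k^2\equiv 1\bmod 3$, so $3(k-\lambda)$ is divisible by $3$ exactly once, the residue class $(s,t)\equiv(0,0)$ is excluded, and a solution produced by Lemma \ref{nairs} can always be moved by negation into the required class. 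The paper never argues that a representation of the correct type exists, so this genuinely closes the ``if'' direction.

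The step you describe as routine, however, is not. Substituting the first displayed triple as printed gives
\begin{equation*}
\frac{k+t'}{3}+\frac{k+s'+t'}{3}+\frac{k-s'}{3}=k+\frac{2t'}{3}\neq k,
\end{equation*}
so the formulas cannot literally satisfy the first equation of (\ref{eqlis3}); comparison with Theorem \ref{case1} indicates the signs on the first coordinates of the two triples have been transposed. More substantively, after that correction the two triples are exactly the images of $(s',t')$ and $(-s',-t')$ under your bijection, and by your own observation negation reverses the residue type while only the type $(k,k)\bmod 3$ yields integral $k_i$ in Case 2. Hence for a fixed admissible $(s',t')$ at most one of the two displayed triples is an integral solution, which contradicts your assertion that direct substitution verifies both. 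Your proposal correctly notes that ``negation is no longer an admissible symmetry'' here, but does not reconcile this with the claimed list of two inequivalent solutions; to finish the enumeration you must either keep the single integral triple per automorphism orbit of the form, or parametrize the inequivalent solutions by genuinely distinct representations $s'^2+s't'+t'^2=3(k-\lambda)$ lying in the fixed residue class, not by the pair $\pm(s',t')$.
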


\begin{remark}
Usually it is difficult to find out the solutions of the equation $x^2+xy+y^2=a$ for arbitrary $a$ and
there does not exist an unifying method (see \cite{nair}). However, for specific $a$, it is possible to determine the solutions.
For example, when $a=3^r$, the solutions can be found in \cite{lll}.
\end{remark}

We now deal with the case that $\ell =4$, and will need the following lemma proved by Fermat in the sequel \cite{Niven}.

\begin{lemma}\label{lem-Fermat}
Write the canonical factorization of $N$ in the form
$$
N=2^\alpha \prod_{p \equiv 1 \bmod{4}} p^{\beta_p}  \prod_{q \equiv 3 \bmod{4}} q^{\gamma_q}.
$$
Then $N$ can be expressed as a sum of two squares of integers if and only if all the exponents $\gamma_q$
are even.

Also, $N$ is a sum of two relatively prime squares if and only if it is not divisible by 4 and not divisible by any prime congruent to 3, modulo 4.

\end{lemma}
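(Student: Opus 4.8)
The plan is to reduce the statement to the prime case via multiplicativity and then settle the three prime types ($2$, $p\equiv 1$, $q\equiv 3 \bmod 4$) separately. The key structural fact is that the set $\Sigma$ of integers expressible as a sum of two squares is closed under multiplication, via the Brahmagupta--Fibonacci identity
\[
(a^2+b^2)(c^2+d^2)=(ac-bd)^2+(ad+bc)^2,
\]
which is just multiplicativity of the norm $N(a+bi)=a^2+b^2$ on the Gaussian integers $\mathbb{Z}[i]$. Since $2=1^2+1^2\in\Sigma$ and every perfect square $m^2=m^2+0^2\in\Sigma$, the ``if'' direction of the first assertion follows once I know that each prime $p\equiv 1\bmod 4$ lies in $\Sigma$: then $2^\alpha$, each $p^{\beta_p}$, and each $q^{\gamma_q}=(q^{\gamma_q/2})^2$ (with $\gamma_q$ even) are all in $\Sigma$, and so is their product.

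The heart of the argument, and the step I expect to be the main obstacle, is showing that every prime $p\equiv 1\bmod 4$ is a sum of two squares. Here I would use that $-1$ is a quadratic residue modulo such $p$ (by Euler's criterion, since $(-1)^{(p-1)/2}=1$), fix $u$ with $u^2\equiv -1\bmod p$, and apply Thue's pigeonhole lemma to the $(\lfloor\sqrt p\rfloor+1)^2>p$ pairs $(a,b)$ with $0\le a,b\le\lfloor\sqrt p\rfloor$: two of them agree under $(a,b)\mapsto a-ub\bmod p$, giving integers, not both zero, with $|a|,|b|<\sqrt p$ and $a\equiv ub\bmod p$. Then $a^2+b^2\equiv (u^2+1)b^2\equiv 0\bmod p$ while $0<a^2+b^2<2p$, forcing $a^2+b^2=p$. (Euler's infinite descent or Zagier's involution would serve equally well, but are heavier.)

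For the ``only if'' direction I would exploit the opposite behaviour of primes $q\equiv 3\bmod 4$. Modulo $4$ every square is $0$ or $1$, so no such prime is itself a sum of two squares; more to the point, if $q$ divides $a^2+b^2$ and $q\nmid b$, then $(ab^{-1})^2\equiv -1\bmod q$ makes $-1$ a quadratic residue modulo $q$, which is false. Hence $q\mid a$ and $q\mid b$, so $q^2\mid N$ and $N/q^2\in\Sigma$; iterating shows that the exponent $\gamma_q$ of each such $q$ in any $N\in\Sigma$ must be even, completing the first assertion.

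The second assertion (primitive representations) I would handle by the same dichotomy. For necessity, if $N=a^2+b^2$ with $\gcd(a,b)=1$, then the argument above shows no $q\equiv 3\bmod 4$ divides $N$ (else $q\mid a,b$), and a direct check mod $4$ gives $4\nmid N$ (two odd legs give $N\equiv 2$, one odd and one even give $N$ odd). For sufficiency, when $4\nmid N$ and no prime $q\equiv 3\bmod 4$ divides $N$, I would factor in $\mathbb{Z}[i]$: each $p\equiv 1\bmod 4$ splits as $p=\pi\bar\pi$ with $\pi$ and $\bar\pi$ non-associate Gaussian primes, so choosing either $\pi^{\beta_p}$ or $\bar\pi^{\beta_p}$ for each $p$ (and absorbing at most one factor $1+i$ when $\alpha=1$) yields an element $a+bi$ sharing no Gaussian prime with its conjugate $a-bi$, which is exactly the condition $\gcd(a,b)=1$. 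The only obstruction is the ramified prime $1+i$: since $1-i=-i(1+i)$ is an associate of $1+i$, a factor $(1+i)^2$ (that is, $4\mid N$) cannot be separated from its conjugate, which is precisely why $4\nmid N$ is required.
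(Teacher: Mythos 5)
Your proof is correct, but note that the paper does not actually prove this lemma: it is stated as a classical result of Fermat and simply cited to Niven--Zuckerman--Montgomery, so there is no in-paper argument to compare against. What you have written is the standard textbook proof that the citation points to: multiplicativity via the Brahmagupta--Fibonacci identity, Thue's pigeonhole lemma (together with $-1$ being a quadratic residue) to represent each prime $p\equiv 1\bmod 4$, the descent on primes $q\equiv 3\bmod 4$ for necessity (which, incidentally, is exactly the content of the paper's unproved Theorem~\ref{thm-april1}), and the Gaussian-integer factorization for the primitive-representation clause. The only imprecision is your phrase that the constructed $a+bi$ ``shares no Gaussian prime with its conjugate, which is exactly the condition $\gcd(a,b)=1$'': when $\alpha=1$ the element $(1+i)\prod\pi^{\beta_p}$ does share the ramified prime $1+i$ with its conjugate (since $1-i$ is an associate of $1+i$), yet $\gcd(a,b)=1$ still holds because no \emph{rational} prime divides $a+bi$; the correct criterion is that $(1+i)^2$ does not divide $a+bi$ and no split prime contributes both $\pi$ and $\bar\pi$. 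You essentially acknowledge this in your closing sentence about the obstruction at $4\mid N$, so the argument as a whole stands; it just deserves a cleaner statement of the equivalence. In short: your proposal supplies a complete and correct proof of a fact the paper outsources to the literature.
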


We need also the following.

\begin{theorem} \label{thm-april1}
If a prime $p \equiv 3 \bmod{4}$ divides $a^2+b^2$ for two integers $a$ and $b$, then $p$ divides both $a$ and $b$.
\end{theorem}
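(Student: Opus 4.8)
The plan is to argue by contradiction, reducing to the statement that $-1$ is a quadratic non-residue modulo a prime $p\equiv 3\bmod 4$. First I would dispose of the easy implication: if $p$ divides exactly one of $a$ and $b$, say $p\mid a$ but $p\nmid b$, then $p\mid a^2+b^2$ forces $p\mid b^2$, hence $p\mid b$, a contradiction. Thus it suffices to rule out the case $p\nmid a$ and $p\nmid b$, since once we know $p$ cannot divide just one of them, the only remaining possibilities are that it divides both or neither.

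So assume $p\nmid a$ and $p\nmid b$. Working in the field $\mathbb{Z}/p\mathbb{Z}$, the element $b$ is invertible, and the hypothesis $p\mid a^2+b^2$ reads $a^2\equiv -b^2\pmod p$. Setting $c\equiv a b^{-1}\pmod p$, this becomes
\begin{equation*}
  c^2\equiv -1\pmod p,
\end{equation*}
so $-1$ is a square modulo $p$. I would then invoke Fermat's little theorem: since $p\nmid c$, we have $c^{p-1}\equiv 1\pmod p$. On the other hand, because $p\equiv 3\bmod 4$, the exponent $(p-1)/2$ is odd, whence
\begin{equation*}
  c^{p-1}=(c^2)^{(p-1)/2}\equiv (-1)^{(p-1)/2}\equiv -1\pmod p.
\end{equation*}
Combining the two evaluations gives $1\equiv -1\pmod p$, i.e. $p\mid 2$, which is impossible for the odd prime $p$. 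This contradiction shows that the case $p\nmid a$, $p\nmid b$ cannot occur, completing the proof.

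There is no serious obstacle here; the only point requiring care is the bookkeeping in the opening reduction, so that the three mutually exclusive cases (divides both, divides exactly one, divides neither) are handled without gaps, and the cheap observation that $p\equiv 3\bmod 4$ guarantees $p$ is odd (so that $p\mid 2$ is genuinely absurd). One could equally phrase the last step through Euler's criterion, stating directly that $(-1)^{(p-1)/2}=-1$ when $p\equiv 3\bmod 4$ means $-1$ is a non-residue; I prefer the Fermat form above since it avoids quoting Euler's criterion as a black box and keeps the argument self-contained.
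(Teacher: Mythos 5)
Your proof is correct. Note that the paper itself states this result without proof, treating it as a classical fact from elementary number theory (it is the standard corollary of the first supplement to quadratic reciprocity, that $-1$ is a quadratic non-residue modulo a prime $p\equiv 3\bmod 4$). Your argument --- reducing to $c^2\equiv -1\pmod p$ with $c\equiv ab^{-1}$ and deriving the contradiction $1\equiv c^{p-1}=(c^2)^{(p-1)/2}\equiv(-1)^{(p-1)/2}\equiv -1\pmod p$ via Fermat's little theorem and the oddness of $(p-1)/2$ --- is exactly the standard proof, and your case analysis at the outset (both, exactly one, neither) is handled without gaps.
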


Let $v=4u^2$, and let $D$ be a cyclic Hadamard difference set in $\Z_{4u^2}$ with parameters $(4u^2, 2u^2-u^2, u^2-u; u^2)$.
It is known that $u$ must be odd.
As before, we use $k_i$ to denote the cardinalities of the base blocks $D_i$ obtained from $D$ when $n=u^2$ and $\ell=4$.
In this case, the relations among these $k_i$ in Theorem \ref{thm-relations} become
\begin{eqnarray}\label{eqn-4lrela}
\left\{
 \begin{array}{l}
k_0+k_1+k_2+k_3=2u^2-u, \\
k_0^2+k_1^2+k_2^2+k_3^2=u^4-u^3+u^2, \\
k_0k_2+k_1k_3=\frac{(u^2-u)u^2}{2}.
\end{array}
\right.
\end{eqnarray}
Combining the second and the third equations in (\ref{eqn-4lrela}) yields
\begin{eqnarray}\label{eqn-babaya}
u^2=(k_0-k_2)^2+(k_1-k_3)^2.
\end{eqnarray}

By Theorem \ref{thm-april1}, Lemma \ref{lem-Fermat} and (\ref{eqn-babaya}),  let $u=u_1(r^2+s^2)$ for an odd integer $u_1$,
two integers $r$ and $s$ with $r>s$ and $\gcd(r, s)=1$, where $u_1$ contains only prime factors
congruent to 1 modulo 4, and the factorization of $r^2+s^2$ has only prime factors congruent to 3 modulo 4.
Then it is well known that all the solutions of (\ref{eqn-babaya}) are of the form
\begin{eqnarray}\label{eqn-april11}
\left\{
\begin{array}{l}
k_0-k_2=u_1(r^2-s^2), \\
k_1-k_3=2u_1rs, \\
u=u_1(r^2+s^2).
\end{array}
\right.
\end{eqnarray}

Combining (\ref{eqn-4lrela}) and (\ref{eqn-april11} ), we have proved the following.

\begin{theorem}
Let $v=4u^2$, and let $D$ be a cyclic Hadamard difference set in $\Z_{4u^2}$ with parameters $(4u^2, 2u^2-u^2, u^2-u, u^2)$.
As before, we use $k_i$ to denote the cardinalities of the base blocks $D_i$ obtained from $D$ when $n=u^2$ and $\ell=4$.
Then we have
$$
k_0=\frac{u^2+u - 2u_1s^2}{2},  k_2=\frac{u^2-u + 2u_1s^2}{2},
k_1=\frac{u^2-u + 2u_1rs}{2},  k_3=\frac{u^2-u - 2u_1rs}{2}.
$$
\end{theorem}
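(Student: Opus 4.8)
The plan is to take the relations (\ref{eqn-4lrela}), (\ref{eqn-babaya}) and the parametrisation (\ref{eqn-april11}) as already established, and to recover the four numbers $k_0,k_1,k_2,k_3$ by a finite elimination organised around the two pairs $(k_0,k_2)$ and $(k_1,k_3)$ that are singled out by the autocorrelation relation of Lemma \ref{thm-relations}. Set $P=k_0+k_2$ and $Q=k_1+k_3$. The first equation of (\ref{eqn-4lrela}) gives $P+Q=2u^2-u$, and the key observation is that (\ref{eqn-4lrela}) also determines the product $PQ$. Indeed,
\begin{equation*}
P^2+Q^2=\big(k_0^2+k_1^2+k_2^2+k_3^2\big)+2\big(k_0k_2+k_1k_3\big)=\big(u^4-u^3+u^2\big)+\big(u^4-u^3\big)=2u^4-2u^3+u^2,
\end{equation*}
using the second and third equations of (\ref{eqn-4lrela}), and hence $2PQ=(P+Q)^2-(P^2+Q^2)=2u^4-2u^3$, so that $PQ=u^4-u^3$. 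Thus $P$ and $Q$ are the two roots of $z^2-(2u^2-u)z+(u^4-u^3)=0$, whose discriminant is $u^2$, giving $\{P,Q\}=\{u^2,\,u^2-u\}$.

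It remains to decide which sum belongs to which pair, and this is the one genuinely delicate point; I would settle it by a parity argument. Since $u$ is odd, $u^2$ is odd while $u^2-u$ is even, and within any pair $k_i+k_j$ and $k_i-k_j$ share the same parity. From (\ref{eqn-april11}), the difference $k_1-k_3=2u_1rs$ is even, whereas $k_0-k_2=u_1(r^2-s^2)$ is odd: indeed $u_1$ is odd, and since $u=u_1(r^2+s^2)$ is odd the quantity $r^2+s^2$ must be odd, forcing exactly one of $r,s$ to be even and hence $r^2-s^2$ to be odd. Matching parities therefore forces $k_0+k_2=u^2$ (the odd sum) to be paired with $k_0-k_2$, and $k_1+k_3=u^2-u$ (the even sum) to be paired with $k_1-k_3$; the signs within each pair are fixed by the convention $r>s$ in (\ref{eqn-april11}). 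This parity bookkeeping is the main obstacle, since everything surrounding it is linear.

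Finally I would solve the two resulting $2\times 2$ linear systems, obtaining
\begin{align*}
k_0&=\tfrac12\big(u^2+u_1(r^2-s^2)\big), & k_2&=\tfrac12\big(u^2-u_1(r^2-s^2)\big),\\
k_1&=\tfrac12\big((u^2-u)+2u_1rs\big), & k_3&=\tfrac12\big((u^2-u)-2u_1rs\big).
\end{align*}
The expressions for $k_1$ and $k_3$ are already in the claimed form. For $k_0$ and $k_2$ I would substitute the relation $u=u_1(r^2+s^2)$ from (\ref{eqn-april11}), which yields $u_1(r^2-s^2)=u-2u_1s^2$; this converts the two entries into $\tfrac12(u^2+u-2u_1s^2)$ and $\tfrac12(u^2-u+2u_1s^2)$ respectively, exactly as asserted. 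Once the correct sums and differences are matched by the parity step, the remainder of the argument is elementary, completing the proof.
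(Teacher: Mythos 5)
Your proof is correct and follows essentially the same route as the paper, which simply asserts that the theorem follows by ``combining'' the system of relations (\ref{eqn-4lrela}) with the parametrisation (\ref{eqn-april11}); you have carried out that elimination explicitly via the sums $k_0+k_2$ and $k_1+k_3$. The one step the paper leaves entirely implicit --- deciding that $k_0+k_2=u^2$ rather than $u^2-u$ --- is settled cleanly by your parity argument, and all of the arithmetic checks out.
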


For any given $u$, $u_1$ is fixed, but the factorization of $u/u_1$ into $r^s+s^2$ may not be unique. The system of equations in
(\ref{eqn-4lrela}) may have more than one solutions $(k_0, k_1, k_2, k_3)$.

\subsubsection{Specific constructions with Singer difference sets}

In this section, we will determine the intersection numbers in the case that $D$ is the
Singer difference sets in $(\gf_{q^m},+)$. Fist, we present results on group characters, cyclotomy
and Gaussian sums which will be needed in the sequel.

Let $\tr_{q/p}$ denote the trace function from $\gf_q$ to $\gf_p$.
An {\em additive character} of $\gf_q$ is a nonzero function $\chi$
from $\gf_q$ to the set of complex numbers such that
$\chi(x+y)=\chi(x) \chi(y)$ for any pair $(x, y) \in \gf_q^2$.
For each $b\in \gf_q$, the function
\begin{eqnarray}\label{dfn-add}
\chi_b(c)=e^{2\pi \sqrt{-1} \tr_{q/p}(bc)/p} \ \ \mbox{ for all }
c\in\gf_q
\end{eqnarray}
defines an additive character of $\gf_q$. When $b=0$,
$\chi_0(c)=1 \mbox{ for all } c\in\gf_q,
$
and is called the {\em trivial additive character} of
$\gf_q$. The character $\chi_1$ in (\ref{dfn-add}) is called the
{\em canonical additive character} of $\gf_q$.

A {\em multiplicative character} of $\gf_q$ is a nonzero function
$\psi$ from $\gf_q^*$ to the set of complex numbers such that
$\psi(xy)=\psi(x)\psi(y)$ for all pairs $(x, y) \in \gf_q^*
\times \gf_q^*$.
Let $g$ be a fixed primitive element of $\gf_q$. For each
$j=0,1,\ldots,q-2$, the function $\psi_j$ with
\begin{eqnarray}\label{dfn-mul}
\psi_j(g^k)=e^{2\pi \sqrt{-1} jk/(q-1)} \ \ \mbox{for } k=0,1,\ldots,q-2
\end{eqnarray}
defines a multiplicative character of $\gf_q$. When $j=0$,
$ \psi_0(c)=1 \mbox{ for all }
c\in\gf_q^*,
$
and is called the {\em trivial multiplicative
character} of $\gf_q$.

Let $q$ be odd and $j=(q-1)/2$ in (\ref{dfn-mul}), we then get a
multiplicative character $\eta$ such that $\eta(c)=1$ if $c$ is
the square of an element and $\eta(c)=-1$ otherwise. This $\eta$
is called the {\em quadratic character} of $\gf_q$.

Let $r$ be a prime power and $r-1=nN$ for two positive integers $n>1$ and $N>1$, and let
$\alpha$ be a fixed primitive element of $\gf_r$.
Define $C_{i}^{(N,r)}=\alpha^i \langle \alpha^{N} \rangle$ for $i=0,1,...,N-1$, where
$\langle \alpha^{N} \rangle$ denotes the
subgroup of $\gf_r^*$ generated by $\alpha^{N}$. The cosets $C_{i}^{(N,r)}$ are
called the {\em cyclotomic classes} of order $N$ in $\gf_r$.

The \textit{cyclotomic number} of order $N$, denoted $(i,j)_N$, are defined as
$$
(i,j)_N=|(C_i^{(N,r)}+1)\bigcap C_j^{(N,r)}|,
$$
where $0\le i, j\le N-1$ and $|A|$ denotes the number of elements in the set $A$.

The {\em Gaussian periods} are defined by
$$
\eta_i^{(N,r)} =\sum_{x \in C_i^{(N,r)}} \chi(x), \quad i=0,1,..., N-1,
$$
where $\chi$ is the canonical additive character of $\gf_r$.

The values of the Gaussian periods are in general very hard to compute.
However, they can be computed in a few cases.
The following is the classical result of uniform cyclotomy.
\begin{lemma}\label{lem-semipri}
	\cite{baumert}
	Assume that $p$ is a prime, $e\ge 2$ is a positive integer, $r=p^{2j\gamma}$,
	where $N|(p^j+1)$ and $j$ is the smallest such positive integer. Then the
	cyclotomic periods are given by

	{\bf{Case A.}} If $\gamma, p, \frac{p^j+1}{N}$ are all odd, then
	$$\eta_{N/2}=\frac{(N-1)\sqrt{r}-1}{N},\ \eta_{i}=-\frac{1+\sqrt{r}}{N},\ \mbox{for all}\ i\ne\frac{N}{2}.$$

    {\bf{Case B.}} In all the other cases,
	$$\eta_0=-(-1)^\gamma\sqrt{r}+\frac{(-1)^\gamma\sqrt{r}-1}N,\ \eta_i=\frac{(-1)^\gamma\sqrt{r}-1}{N},\ \mbox{for all}\ i\ne 0.$$
\end{lemma}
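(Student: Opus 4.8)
The plan is to route everything through Gauss sums, because in the semiprimitive regime the relevant Gauss sums become \emph{pure} (essentially real) and collapse to the single value $\pm\sqrt r$. Write $G(\psi)=\sum_{x\in\gf_r^*}\psi(x)\chi(x)$ for the Gauss sum of a multiplicative character $\psi$ against the canonical additive character $\chi$. The first step is the standard Fourier expansion of the indicator of a cyclotomic class: since $C_i^{(N,r)}$ is a coset of the index-$N$ subgroup $\langle\alpha^N\rangle$, orthogonality of the $N$ characters $\psi$ with $\psi^N=\psi_0$ yields $\mathbf 1_{C_i}(x)=\frac1N\sum_{\psi^N=\psi_0}\bar\psi(\alpha^i)\psi(x)$, whence
\[
\eta_i=\sum_{x\in C_i^{(N,r)}}\chi(x)=\frac1N\sum_{\psi^N=\psi_0}\bar\psi(\alpha^i)\,G(\psi).
\]
Separating off the trivial character, for which $G(\psi_0)=\sum_{x\in\gf_r^*}\chi(x)=-1$, reduces the whole problem to evaluating $G(\psi)$ for the nontrivial $\psi$ of order dividing $N$.

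The second step, which is the heart of the matter, is to prove these Gauss sums are pure. The hypothesis $N\mid(p^j+1)$ says $p^j\equiv-1\pmod N$, so raising to the $p^j$-th power acts on the order-$N$ characters exactly as conjugation $\psi\mapsto\bar\psi$. Combining the Galois-equivariance of Gauss sums under the Frobenius $x\mapsto x^{p^j}$ (together with the relation $\overline{G(\psi)}=\psi(-1)G(\bar\psi)$ and $|G(\psi)|^2=r$) with the Hasse--Davenport lifting relation from $\gf_{p^{2j}}$ up to $\gf_{p^{2j\gamma}}=\gf_r$, one deduces that each nontrivial $G(\psi)$ is real; since $|G(\psi)|=\sqrt r=p^{j\gamma}$, necessarily $G(\psi)=\varepsilon_\psi\sqrt r$ with $\varepsilon_\psi=\pm1$.

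The third step fixes the signs and reassembles. The base Gauss sum over $\gf_{p^{2j}}$ has a classical closed form (via Stickelberger's congruence or a direct pure-Gauss-sum evaluation), and Davenport--Hasse contributes the factor $(-1)^{\gamma-1}$ on lifting, while the quadratic character, when present, is controlled separately by the quadratic Gauss-sum formula. The three parities in the statement --- $\gamma$, $p$, and $(p^j+1)/N$ all odd, versus not --- are exactly what decide whether the quadratic character's sign agrees with or opposes the others. In Case B all nontrivial $G(\psi)$ share the value $-(-1)^\gamma\sqrt r$; substituting into the display and using $\sum_{\psi\neq\psi_0}\bar\psi(\alpha^i)=N\delta_{i,0}-1$ collapses the sum directly to the stated $\eta_0$ and $\eta_i$ ($i\neq0$). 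In Case A the order-$2$ character carries the opposite sign, and the same orthogonality relations single out the index $N/2$, producing the distinguished value $\eta_{N/2}=\frac{(N-1)\sqrt r-1}{N}$ while every other period equals $-\tfrac{1+\sqrt r}{N}$.

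The main obstacle is precisely the sign determination underlying the second and third steps: establishing purity rigorously requires the Galois action on Gauss sums alongside Hasse--Davenport, and nailing the exact signs --- including the delicate $p=2$ behaviour and the bookkeeping around the quadratic character --- is the genuinely hard part. This is exactly the classical content of \cite{baumert}, which is why the lemma is quoted rather than reproved here.
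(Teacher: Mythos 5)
The paper gives no proof of this lemma at all: it is quoted as a known result with a citation to Baumert--Mills--Ward's work on uniform cyclotomy, exactly as you anticipate in your closing paragraph. Your sketch is a correct outline of the standard Gauss-sum argument behind that citation (Fourier expansion of the periods $\eta_i=\frac1N\sum_{\psi^N=\psi_0}\bar\psi(\alpha^i)G(\psi)$, purity of the nontrivial Gauss sums from $p^j\equiv-1\pmod N$ via the Frobenius action and Hasse--Davenport, then the sign bookkeeping that separates Cases A and B), so it matches the intended source; but since you, like the paper, ultimately defer the sign determination to \cite{baumert}, it should be regarded as a correct plan rather than a self-contained proof.
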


In this section we consider the cyclic difference families obtained from the Singer difference sets with
parameters $\left(\frac{q^m-1}{q-1}, \frac{q^{m-1}-1}{q-1}, \frac{q^{m-2}-1}{q-1}\right)$. Our
task is to determine the cardinalities $k_i$ of the base blocks $D_i$.

First we give a construction of cyclic difference sets with Singer parameters.
Let $q$ be a power of a prime $p$, and let $m$ be a positive integer. Let $g$ be a generator of $\fqm$.
Define $\alpha=g^{q-1}$ and
$$
D_b^{(m,q)}= \left\{0 \le i < \frac{q^m-1}{q-1}: \Tr(\alpha^i)=b \right\}, \ b \in \fq,
$$
where $\Tr(x)$ is the trace function from $\fqm$ to $\fq$.

We will need the following lemma.

\begin{lemma}
	\label{pre}
    Let notations be as above. Then
	
	$(1)$ $\gcd(q-1,m)=\gcd(q-1,\frac{q^m-1}{q-1})$;

	$(2)$ $\fqm^*=H\times\fq^*$ if and only if $\gcd(q-1,m)=1$, where
	$H=\langle \alpha \rangle$ is the cyclic group generated by $\alpha$ in the multiplicative group of $\fqm$;

	$(3)$ $|D_0^{(m,q)}|=\frac{q^{m-1}-1}{q-1}$ if and only if $\gcd(q-1,m)=1$.
\end{lemma}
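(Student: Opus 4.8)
The plan is to take the three parts in order and use part (1) as the arithmetic engine for the other two. For part (1), I would first observe that $\frac{q^m-1}{q-1}=1+q+\cdots+q^{m-1}\equiv m \pmod{q-1}$, since each $q^i\equiv 1\pmod{q-1}$; because $\gcd(q-1,a)$ depends only on $a\bmod(q-1)$, this gives $\gcd(q-1,\frac{q^m-1}{q-1})=\gcd(q-1,m)$ in one line. For part (2), the idea is that $\fqm^*$ is cyclic of order $q^m-1$, and since $(q-1)\mid(q^m-1)$ the element $\alpha=g^{q-1}$ has order $\frac{q^m-1}{q-1}$, so $|H|=\frac{q^m-1}{q-1}$, while $\fq^*$ is the unique subgroup of order $q-1$. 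As $|H|\cdot|\fq^*|=q^m-1=|\fqm^*|$, the product $H\fq^*$ fills $\fqm^*$ exactly when $H\cap\fq^*=\{1\}$, and in that case $\fqm^*=H\times\fq^*$ is automatic (abelian, trivial intersection, correct order). Since $H\cap\fq^*$ is the unique subgroup of order $\gcd(|H|,|\fq^*|)=\gcd(\frac{q^m-1}{q-1},q-1)$, which equals $\gcd(m,q-1)$ by part (1), the intersection is trivial iff $\gcd(q-1,m)=1$, proving the equivalence.

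For part (3), note first that $\{\alpha^i:0\le i<\frac{q^m-1}{q-1}\}=H$ bijectively, so $|D_0^{(m,q)}|=|\{h\in H:\Tr(h)=0\}|$. The forward direction I would obtain cleanly from part (2): writing each $x\in\fqm^*$ uniquely as $x=hc$ with $h\in H$, $c\in\fq^*$, the $\fq$-linearity of $\Tr$ gives $\Tr(x)=c\,\Tr(h)$, so $\Tr(x)=0\iff\Tr(h)=0$. Thus the $q^{m-1}-1$ nonzero trace-zero elements split into $\fq^*$-translates of the trace-zero elements of $H$, giving $|D_0|(q-1)=q^{m-1}-1$, i.e. $|D_0|=\frac{q^{m-1}-1}{q-1}$.

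The converse is the real content, and here I would pass to additive characters. With $\psi$ the canonical additive character of $\fq$ and $\chi(x)=\psi(\Tr(x))$ that of $\fqm$, orthogonality gives
\begin{equation*}
|D_0|=\frac1q\sum_{h\in H}\sum_{y\in\fq}\psi(y\,\Tr(h))=\frac1q\Big(|H|+\sum_{y\in\fq^*}\sum_{h\in H}\chi(yh)\Big).
\end{equation*}
Since $H=C_0^{(q-1,q^m)}$, the inner sum $\sum_{h\in H}\chi(yh)$ is the Gaussian period $\eta_j^{(q-1,q^m)}$, where $j$ indexes the cyclotomic class containing $y$. As $y$ runs over $\fq^*=\langle g^{(q^m-1)/(q-1)}\rangle$, part (1) shows these indices are $tm\bmod(q-1)$, which sweep out the subgroup $d\,\Z_{q-1}$ (with $d=\gcd(q-1,m)$), each value attained $d$ times; and $\bigcup_{j\in d\Z_{q-1}}C_j^{(q-1,q^m)}=C_0^{(d,q^m)}$. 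Hence $\sum_{y\in\fq^*}\sum_{h\in H}\chi(yh)=d\,\eta_0^{(d,q^m)}$ and
\begin{equation*}
|D_0|=\tfrac1q\big(|H|+d\,\eta_0^{(d,q^m)}\big).
\end{equation*}

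To close, I would use an integrality observation. Because $|D_0|\in\Z$ and $|H|=\frac{q^m-1}{q-1}$, the identity forces $d\,\eta_0^{(d,q^m)}=q|D_0|-|H|\in\Z$; as $\eta_0^{(d,q^m)}$ is an algebraic integer that is now rational, it lies in $\Z$, so $d\,\eta_0^{(d,q^m)}\equiv 0\pmod d$. Since $\frac{q^{m-1}-1}{q-1}=\frac{|H|-1}{q}$, the target equality $|D_0|=\frac{q^{m-1}-1}{q-1}$ is equivalent to $d\,\eta_0^{(d,q^m)}=-1$; for $d>1$ this is impossible because $-1\not\equiv 0\pmod d$, while for $d=1$ the period reduces to $\sum_{x\in\fqm^*}\chi(x)=-1$ and equality holds. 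This yields the ``iff''. I expect the main obstacle to be the bookkeeping in the character step: correctly identifying which cyclotomic classes of order $q-1$ are swept out by $\fq^*$ (exactly where part (1) is indispensable) and collapsing the resulting period sum into the single order-$d$ period $\eta_0^{(d,q^m)}$; once that identity is in hand, the divisibility remark settles both directions simultaneously.
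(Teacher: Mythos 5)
Your proof is correct and follows the same overall route as the paper's: part (1) via the congruence $\frac{q^m-1}{q-1}\equiv m\pmod{q-1}$, part (2) by comparing orders of subgroups of the cyclic group $\fqm^*$ and invoking (1), and part (3) via the additive character identity $|D_0^{(m,q)}|=\frac1q\bigl(|H|+\sum_{y\in\fq^*}\chi(yH)\bigr)$. The one place where you genuinely go beyond the paper is the ``only if'' direction of (3). The paper simply asserts that $\sum_{y\in\fq^*}\chi(yH)=-1$ if and only if $\gcd(q-1,m)=1$, appealing to (2); but (2) directly yields only the ``if'' direction (when the product is direct, the multiset union of the $yH$ is exactly $\fqm^*$, each element once), and it does not explain why the sum cannot accidentally equal $-1$ when $d=\gcd(q-1,m)>1$. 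Your identification of the sum as $d\,\eta_0^{(d,q^m)}$ (each class $C_j^{(q-1,q^m)}$ with $d\mid j$ being hit exactly $d$ times), combined with the observation that this quantity is a rational algebraic integer, hence an integer divisible by $d$ and therefore never equal to $-1$ for $d>1$, supplies exactly the justification the paper leaves implicit. Your separate counting proof of the forward direction of (3), using the direct-product decomposition and the $\gf_q$-linearity of the trace, is also a clean alternative to rerunning the character computation; the whole argument is sound.
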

\begin{proof}
	(1) Obviously.

	(2) Note that $H=\langle \alpha \rangle=\langle g^{q-1} \rangle$ and $\fq^*=\langle g^{\frac{q^m-1}{q-1}}\rangle$.
	We only need to prove that $H\cap\fq^*=\{1\}$ if and only if $\gcd(q-1,m)=1$. This follows from
	$\gcd(q-1, \frac{q^m-1}{q-1})=\gcd(q-1, m)$.

	(3) Let $G=\Z_{(q^m-1)/(q-1)}$.
	Let $\xi_p$ be a primitive $p$-th root of unity and $\chi(x)=\xi_p^{\Tr_{q^m/p}(x)}$, where $x\in\fqm$.
    It is clear that $\chi$ is an additive character of $\fqm$. By (2) in this Lemma, we have
	$\sum\limits_{y\in\fqstar}\chi(yH)=\chi(\fqmstar)=-1$ if and only if $\gcd(q-1,m)=1$.
	Now
    \begin{eqnarray}
	\begin{array}{lll}
    |D_0^{(m,q)}|&=&\frac{1}q\sum\limits_{y\in\fq}\sum\limits_{x\in G}\xi_p^{\Tr_{q/p}(y\Tr_{q^m/q}(\alpha^{x}))} \\[2ex]
	         &=&\frac{1}q\sum\limits_{y\in\fq}\sum\limits_{x\in G}\chi(y\alpha^x) \\ [2ex]
			 &=&\frac{1}q(|G|+\sum\limits_{y\in\fqstar}\sum\limits_{x\in G}\chi(y\alpha^{x})) \\ [2ex]
			 &=&\frac{1}q\left(\frac{q^m-1}{q-1}+\sum\limits_{y\in\fqstar}\chi(yH)\right) \\ [2ex]
			 &=&\frac{1}q\left(\frac{q^m-1}{q-1}+\chi(\fqm^*)\right) \\ [2ex]
			 &=&\frac{1}q\left(\frac{q^m-1}{q-1}-1\right) \\ [2ex]
			 &=&\frac{q^{m-1}-1}{q-1}.
	\end{array}
    \end{eqnarray}
    The proof is completed.
\end{proof}

Now we describe a family of difference sets with Singer parameters.

\begin{theorem}\label{thm-new}
Let notations be as above. Then
$D_0^{(m,q)}$ is a $$\left(\frac{q^m-1}{q-1}, \frac{q^{m-1}-1}{q-1},\frac{q^{m-2}-1}{q-1}\right)$$
difference set in $\Zz_{(q^m-1)/(q-1)}$ if and only if $\gcd(q-1, m)=1$.
\end{theorem}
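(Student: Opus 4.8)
The plan is to show that $D_0^{(m,q)}$ is a difference set by computing, for each nonzero $w \in G = \Zz_{(q^m-1)/(q-1)}$, the difference function $d_{D_0}(w) = |D_0^{(m,q)} \cap (D_0^{(m,q)} + w)|$ and verifying it equals $\frac{q^{m-2}-1}{q-1}$ precisely when $\gcd(q-1,m)=1$. Since the cardinality condition $|D_0^{(m,q)}| = \frac{q^{m-1}-1}{q-1}$ already holds iff $\gcd(q-1,m)=1$ by Lemma \ref{pre}(3), the two halves of the ``if and only if'' will align. The natural tool is character sums, exactly as in the proof of Lemma \ref{pre}(3): I would express $d_{D_0}(w)$ using the indicator $\frac{1}{q}\sum_{y \in \fq} \chi(y \Tr(\cdot))$ that detects when the trace is zero, applied at both $\alpha^x$ and $\alpha^{x+w}$.

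The key steps, in order, are as follows. First, write
\begin{equation*}
 d_{D_0}(w) = \frac{1}{q^2} \sum_{x \in G} \sum_{y_1, y_2 \in \fq} \chi\bigl(y_1 \alpha^x + y_2 \alpha^{x+w}\bigr),
\end{equation*}
where $\chi$ is the canonical additive character of $\fqm$. Second, split the inner sum according to whether $(y_1, y_2)$ is $(0,0)$, has exactly one coordinate zero, or has both coordinates nonzero. The $(0,0)$ term contributes $|G|/q^2 = \frac{q^m-1}{(q-1)q^2}$. The terms with exactly one coordinate zero reduce to sums of the shape $\sum_{x \in G} \chi(y \alpha^x)$, which by Lemma \ref{pre}(2) equal $\chi(\fqm^*)/(q-1) = -1/(q-1)$ summed over the relevant $y$, giving a clean contribution under the hypothesis $\gcd(q-1,m)=1$. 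Third, for the main term with $y_1, y_2$ both nonzero, observe that $y_1 \alpha^x + y_2 \alpha^{x+w} = \alpha^x(y_1 + y_2 \alpha^w)$, so as $x$ ranges over $G$ the argument ranges over a coset structure governed by whether $y_1 + y_2 \alpha^w = 0$; collecting these via $\sum_{z \in \fqm^*} \chi(z) = -1$ and using $\fqm^* = H \times \fq^*$ yields the remaining contribution. Finally, I would assemble the pieces and simplify to reach $\frac{q^{m-2}-1}{q-1}$, using the relation $q^m - 1 = (q-1)\frac{q^m-1}{q-1}$ to clear denominators.

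The main obstacle I anticipate is the careful bookkeeping in the mixed term $y_1 + y_2 \alpha^w$: one must separate the subcase $y_1 = -y_2 \alpha^w$ (which forces $\alpha^w \in \fq^*$, i.e.\ $w$ lies in a distinguished coset and can be excluded or handled because $\alpha^w \notin \fq^*$ for $w \neq 0$ under the coprimality hypothesis, precisely by Lemma \ref{pre}(2)) from the generic subcase where the character sum collapses to $-1$. Keeping track of how many $(y_1, y_2)$ pairs fall into each subcase, and confirming that the answer is \emph{independent of} $w$ (which is exactly the difference-set property), is the delicate point; the independence from $w$ is where the hypothesis $\gcd(q-1,m)=1$ must be used a second time, since without it $\alpha^w$ could coincide with elements of $\fq^*$ and break the uniformity. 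The converse direction is comparatively easy: if $\gcd(q-1,m) > 1$ then $|D_0^{(m,q)}| \neq \frac{q^{m-1}-1}{q-1}$ by Lemma \ref{pre}(3), so $D_0^{(m,q)}$ cannot have the stated Singer parameters at all.
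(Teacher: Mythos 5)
Your proposal is correct in substance, but it takes a genuinely different route from the paper. The paper's proof is design-theoretic: it identifies $D_0^{(m,q)}$ with the block $D=\{x\in H:\Tr(x)=0\}$ of the multiplicative group $H=\langle\alpha\rangle$, observes that $T_\alpha(x)=x\alpha$ is an automorphism of the development $Dev(D)$ on which $\langle T_\alpha\rangle\cong H$ acts sharply transitively, and then invokes \cite[Theorem 1.6]{BJL} (a symmetric design with a regular automorphism group yields a difference set) to conclude; the cardinality condition from Lemma \ref{pre}(3) supplies both the block size and the ``only if'' direction. You instead compute $d_{D_0}(w)$ directly by expanding the indicator of $\Tr=0$ as $\frac{1}{q}\sum_{y\in\fq}\chi(y\,\cdot)$ at both $\alpha^x$ and $\alpha^{x+w}$ and evaluating the resulting character sums; I checked that your decomposition does assemble to $\frac{1}{q^2}\bigl(\frac{q^m-1}{q-1}-2-(q-1)\bigr)=\frac{q^{m-2}-1}{q-1}$, with the hypothesis $\gcd(q-1,m)=1$ entering exactly where you say it does (via $\fqm^*=H\times\fq^*$ and via $\alpha^w\notin\fq^*$ for $w\neq 0$). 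Your approach is longer but self-contained and explicit, and it is stylistically consistent with the paper's own character-sum computations in Lemma \ref{pre}(3) and Eq. (\ref{eqn-ki}); the paper's approach is shorter but leans on the unproved assertion that $Dev(D)$ is a symmetric $2$-design and on the cited black-box theorem. One small imprecision to fix in your write-up: for the terms with exactly one coordinate zero, the individual sums $\chi(yH)$ for $y\in\fqstar$ are Gaussian-period-type quantities that need not each equal $-1/(q-1)$; only their sum over $y\in\fqstar$ equals $\chi(\fqmstar)=-1$. Since your argument only uses the aggregate, this does not affect the conclusion, but the phrasing should be corrected.
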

\begin{proof}
	It is equivalent to proving that $D=\{x\in H| \tr(x)=0 \}$ is a
    $\left(\frac{q^m-1}{q-1}, \frac{q^{m-1}-1}{q-1},\frac{q^{m-2}}{q-1}\right)$
    difference set in $H$ if and only if $\gcd(q-1, m)=1$,
	where $H=\langle \alpha \rangle$ is the cyclic group generated by $\alpha$ in the multiplicative group of $\fqm$.
	By Lemma \ref{pre} (3), $|D|=\frac{q^{m-1}-1}{q-1}$ if and only if $\gcd(q-1,m)=1$.
    Now define a mapping $T_\alpha: H\rightarrow H$ by $T_\alpha(x)=x\alpha$.
	It is easy to verify that $T_\alpha$ is an automorphism of the development of $D$, say $\mathcal{D}=Dev(D)$.
	Clearly $\langle T_\alpha \rangle\cong H$ and $\langle T_\alpha \rangle$ acts sharply transitive on $\mathcal{D}$,
	then by \cite[Theorem 1.6]{BJL}, $D$ is a $\left(\frac{q^m-1}{q-1}, \frac{q^{m-1}-1}{q-1},\frac{q^{m-2}-1}{q-1}\right)$
       difference set.
	We complete the proof.
\end{proof}

The $D_0^{(m,q)}$ of Theorem \ref{thm-new} is a difference set only under the condition that  $\gcd(q-1, m)=1$ and
should be equivalent to the classical Signer difference set. However, we do not have a proof of the equivalence.

We assume $\gcd(q-1,m)=1$ from now on in this section.
Let $\frac{q^m-1}{q-1}=n\ell$ with $n>1$ and $\ell>1$. Define
$$
\Omega_i=\left\{\left(\frac{q^m-1}{q-1}j+(q-1)i\right) \bmod (q-1)\ell: 0\le j\le q-2 \right\}.
$$
Note that
$$
\gcd((q^m-1)/(q-1), q-1)=\gcd(q-1,m)=1.
$$
We have
$$
\gcd(\ell, q-1)=\gcd(n, q-1)=1.
$$
It can the be seen that
\begin{eqnarray}\label{eqn-bbb}
\Omega_i = \{(\ell j + i(q-1)) \bmod \ell (q-1): 0 \le j \le q-2\}.
\end{eqnarray}

Let $D=D_0^{(m,q)}$ in Theorem \ref{thm-new} and
for $0\le i\le n-1$ define
$$D_i=\{x: 0\le x \le n-1 | \Tr(\alpha^{\ell x+i})=0\}.$$
Let $\xi_p$ be a primitive $p$-th root of unity and $\chi(x)=\xi_p^{\Tr_{q^m/p}(x)}$, where $x\in\fqm$.
It is clear that $\chi$ is an additive character of $\fqm$. It then follows from (\ref{eqn-bbb}) that
\begin{eqnarray}\label{eqn-ki}
	\begin{array}{lll}
		k_i=|D_i|&=&\frac{1}q\sum\limits_{y\in\fq}\sum\limits_{x=0}^{n-1}\xi_p^{\Tr_{q/p}(y\Tr_{q^m/q}(\alpha^{\ell x+i}))} \\
	         &=&\frac{1}q\sum\limits_{y\in\fq}\sum\limits_{x=0}^{n-1}\chi(y\alpha^{\ell x+i}) \\
			 &=&\frac{1}q(n+\sum\limits_{y\in\fqstar}\sum\limits_{x=0}^{n-1}\chi(y\alpha^{\ell x+i})) \\
			 &=&\frac{1}q(n+\sum\limits_{y\in\fqstar}\sum\limits_{x=0}^{n-1}\chi(yg^{(q-1)i}\cdot g^{(q-1)\ell x})) \\
			 &=&\frac{1}q\left(n+\sum\limits_{y\in\fqstar}\sum\limits_{t\in C_{0}^{((q-1)\ell, q^m)}}\chi(yg^{(q-1)i}\cdot t)\right) \\
			 &=&\frac{1}q\left(n+\sum\limits_{j\in\Omega_i}\chi(C_{j}^{((q-1)\ell, q^m)})\right) \\
			 &=&\frac{1}q\left(n+ \chi(C_{i'}^{(\ell, q^m)})\right),
	\end{array}
\end{eqnarray}
where $i'=i(q-1) \bmod{\ell}$.

Now we are ready to determine the cardinalities $k_i$ in certain special cases.
First we have the following result.

\begin{lemma}
	\label{conditionofl}
    $(1)$ $\gcd(\ell, 2)$=1.

    $(2)$ $\sum\limits_{i=0}^{\ell-1}\chi(C_{i'}^{(\ell, q^m)})=-1$, where $i'=i(q-1) \bmod{\ell}$.
\end{lemma}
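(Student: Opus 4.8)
The plan is to prove both parts directly from the definitions and the structural hypothesis $\gcd(q-1,m)=1$, which we have assumed throughout this section. For part $(1)$, I would argue that since $\gcd((q^m-1)/(q-1), q-1)=\gcd(q-1,m)=1$ (the displayed identity just before \eqref{eqn-bbb}) and $\frac{q^m-1}{q-1}=n\ell$, every divisor of $n\ell$ is coprime to $q-1$; in particular $\gcd(\ell, q-1)=1$. The key observation is that $q-1$ is even whenever $q$ is a prime power exceeding $2$ (and when $q=2$ the statement $\gcd(\ell,2)=1$ must be checked separately, but then $q-1=1$ forces no constraint and $\ell$ divides $2^m-1$, which is odd). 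Thus $2\mid (q-1)$ in the generic case, and since $\gcd(\ell,q-1)=1$ we conclude $\gcd(\ell,2)=1$, i.e. $\ell$ is odd. I would state this cleanly by noting $\ell \mid \frac{q^m-1}{q-1}$ and $\gcd(\frac{q^m-1}{q-1}, q-1)=1$, so any common factor of $\ell$ and $2$ would be a common factor of $\ell$ and $q-1$ (as $2\mid q-1$), a contradiction.

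For part $(2)$, the plan is a character-sum computation. The quantity $\sum_{i=0}^{\ell-1}\chi(C_{i'}^{(\ell,q^m)})$, where $i'=i(q-1)\bmod \ell$, runs over cyclotomic classes indexed by $i'$. First I would verify that as $i$ ranges over $\{0,1,\dots,\ell-1\}$, the index $i'=i(q-1)\bmod\ell$ is a permutation of $\{0,1,\dots,\ell-1\}$: this is exactly where $\gcd(\ell,q-1)=1$ from part $(1)$ enters, since multiplication by $q-1$ is a bijection on $\Zz_\ell$ when $q-1$ is invertible mod $\ell$. Hence the sum equals $\sum_{j=0}^{\ell-1}\chi(C_j^{(\ell,q^m)})$, the sum over all cyclotomic classes of order $\ell$. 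Since the classes $C_0^{(\ell,q^m)},\dots,C_{\ell-1}^{(\ell,q^m)}$ partition $\gf_{q^m}^*$, we have
\begin{eqnarray*}
\sum_{j=0}^{\ell-1}\chi\left(C_j^{(\ell,q^m)}\right)=\sum_{x\in\gf_{q^m}^*}\chi(x)=\left(\sum_{x\in\gf_{q^m}}\chi(x)\right)-\chi(0)=0-1=-1,
\end{eqnarray*}
using that the full additive character sum over $\gf_{q^m}$ vanishes for the nontrivial character $\chi$ and $\chi(0)=1$.

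The main obstacle, such as it is, lies in making the index-permutation step airtight and in carefully handling the edge case $q=2$ in part $(1)$, where $q-1=1$ and the multiset identity is trivial but the parity argument degenerates. I would resolve the latter by observing that when $q=2$, $\ell$ divides $2^m-1$ which is odd, so $\ell$ is automatically odd. Once the permutation claim and the partition identity are established, the character-sum evaluation is routine and hinges only on the standard orthogonality fact that $\sum_{x\in\gf_{q^m}}\chi(x)=0$ for nontrivial $\chi$. I expect no genuine difficulty beyond bookkeeping, since both parts follow from the coprimality hypothesis $\gcd(q-1,m)=1$ feeding into elementary number-theoretic and character-theoretic facts already available in the section.
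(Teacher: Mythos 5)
Your argument for part $(1)$ contains a false step: you claim that $q-1$ is even whenever $q$ is a prime power exceeding $2$, but this fails for every even prime power $q=4,8,16,\ldots$, where $q-1$ is odd. Consequently the deduction ``$\gcd(\ell,q-1)=1$ and $2\mid q-1$, hence $\gcd(\ell,2)=1$'' breaks down precisely for those $q$. The repair is the same observation you already make for $q=2$: for \emph{any} even $q$, both $q^m-1$ and $q-1$ are odd, so $\frac{q^m-1}{q-1}$ is odd and therefore so is its divisor $\ell$. In other words, the correct case split is by the parity of $q$, not by whether $q=2$. For odd $q$ your route works (then $2\mid q-1$ and $\gcd(\ell,q-1)=1$ gives the claim). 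The paper argues differently but in the same spirit: it writes $\frac{q^m-1}{q-1}=1+q+\cdots+q^{m-1}$ and notes that for odd $q$ the hypothesis $\gcd(q-1,m)=1$ forces $m$ odd, so this is a sum of $m$ odd terms and hence odd, while for even $q$ the sum is trivially odd; in either case $\ell$ divides an odd number.

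Your part $(2)$ is correct and takes a genuinely different route from the paper. You observe that $i\mapsto i(q-1)\bmod \ell$ permutes $\Zz_\ell$ --- this uses $\gcd(\ell,q-1)=1$, which is established in the text just before the lemma (it is not the content of part $(1)$, which only gives $\gcd(\ell,2)=1$) --- so the sum runs once over each cyclotomic class of order $\ell$; since these classes partition $\gf_{q^m}^*$ and $\chi$ is nontrivial, the total is $\sum_{x\in\gf_{q^m}^*}\chi(x)=-1$. The paper instead back-solves: it combines the counting identity $\sum_{i}k_i=k=\frac{q^{m-1}-1}{q-1}$ from Lemma \ref{thm-relations} with the formula (\ref{eqn-ki}) expressing each $k_i$ in terms of $\chi(C_{i'}^{(\ell,q^m)})$, and extracts the value of the character sum from the resulting equation. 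Your version is more self-contained and does not depend on the earlier computation of the $k_i$; the paper's version reuses machinery already in place. Both are valid, so the only substantive issue to fix is the parity claim in part $(1)$.
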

\begin{proof}
	(1) If $q$ is odd, then clearly $m$ is odd as $\gcd(q-1, m)=1$.
	Now $\frac{q^m-1}{q-1}=1+q+\cdots+q^{m-1}$ is odd and $2\nmid \frac{q^m-1}{q-1}$.
	If $q$ is even, then $\frac{q^m-1}{q-1}$ is always an odd integer.
	The result follows.

	(2) By Theorem \ref{thm-relations} we  have that $\sum\limits_{i=0}^{\ell-1}k_i=k$. By Eq. (\ref{eqn-ki}),
	\begin{eqnarray*}
		\begin{array}{lll}
		|D|=\frac{q^{m-1}-1}{q-1}=\sum\limits_{i=0}^{\ell-1}k_i&=&\sum\limits_{i=0}^{\ell-1}\frac{1}q\left(n+ \chi(C_{i'}^{(\ell, q^m)})\right) \\
    	&=&\frac{1}q(\ell n+\sum\limits_{i=0}^{\ell-1}\chi(C_{i'}^{(\ell, q^m)}) \\
		&=&\frac{1}q(\frac{q^m-1}{q-1}+\sum\limits_{i=0}^{\ell-1}\chi(C_{i'}^{(\ell, q^m)})),
    	\end{array}
   \end{eqnarray*}
   where $i'=i(q-1) \bmod{\ell}$. It is easy to see that the above equation holds if and only if
   $\sum\limits_{k=0}^{\ell-1}\chi(C_{i'}^{(\ell, q^m)})=-1$.
\end{proof}

We can determine the cardinality $k_i$ for certain cases.
Using Lemma \ref{lem-semipri}, we have the following result.

\begin{theorem}
	\label{semiprimitive}
    Let $r=q^m$.
	Assume that $p$ is a prime, $\ell \ge 2$ is a positive integer, $r=p^{2j\gamma}$,
	where $\ell|(p^j+1)$ for some $j$ and $j$ is the smallest such positive integer. Then the
	cardinalities $k_i$ of the base blocks $D_i$ are given as follows.

	{\bf{Case A.}} If $\gamma, p, \frac{p^j+1}{\ell}$ are all odd, then
	\begin{eqnarray*}
		k_{i'} &=& \frac{r-1 + ((\ell-1)\sqrt{r}-1)(q-1)}{\ell (q-1)q},\ i'(q-1)\equiv \frac{\ell}{2}\bmod \ell, \\
	k_{i} &=& \frac{(r-1)-(1+\sqrt{r})(q-1)}{\ell (q-1)q},\ \mbox{for all}\ i(q-1)\not\equiv\frac{\ell}{2}\bmod \ell.
	\end{eqnarray*}

    {\bf{Case B.}} In all the other cases,
	\begin{eqnarray*}
	k_{0} &=& \frac{r-1 -(q-1)[(-1)^\gamma (\ell-1)\sqrt{r}+1]}{\ell (q-1)q},\\
	k_{i} &=& \frac{r-1 +(q-1)[(-1)^\gamma \sqrt{r}-1]}{\ell (q-1)q}, \ \mbox{for all}\ 0<i<\ell.
	\end{eqnarray*}
\end{theorem}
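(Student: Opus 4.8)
The plan is to read the cardinalities straight off the character-sum formula already established in (\ref{eqn-ki}) and to feed it the Gaussian-period values supplied by the uniform-cyclotomy result in Lemma \ref{lem-semipri}. Recall that (\ref{eqn-ki}) gives
\[
k_i = \frac{1}{q}\left(n + \chi\bigl(C_{i'}^{(\ell, q^m)}\bigr)\right), \qquad i' = i(q-1) \bmod \ell,
\]
where $n = \frac{q^m-1}{(q-1)\ell} = \frac{r-1}{(q-1)\ell}$. The quantity $\chi(C_{i'}^{(\ell, q^m)})$ is precisely the Gaussian period $\eta_{i'}^{(\ell, r)}$ of order $\ell$ over $\gf_r$ with $r = q^m$, so the whole problem reduces to evaluating these periods and simplifying.

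First I would verify that the hypotheses of Lemma \ref{lem-semipri} are met with $N=\ell$. These are exactly the standing assumptions of the theorem: $r = p^{2j\gamma}$ and $\ell \mid (p^j+1)$ with $j$ minimal, which place us in the semiprimitive (uniform cyclotomy) regime. Lemma \ref{lem-semipri} then produces the two-case formula for $\eta_{i'}^{(\ell,r)}$, matching the Case A / Case B dichotomy of the statement. In Case B, substituting $\eta_0^{(\ell,r)} = -(-1)^\gamma\sqrt{r} + \frac{(-1)^\gamma\sqrt{r}-1}{\ell}$ and $\eta_{i'}^{(\ell,r)} = \frac{(-1)^\gamma\sqrt{r}-1}{\ell}$ (for $i'\ne 0$) into the display and clearing denominators over the common denominator $\ell(q-1)q$ yields the stated $k_0$ and $k_i$; Case A is handled identically, with $\eta_{\ell/2}^{(\ell,r)} = \frac{(\ell-1)\sqrt{r}-1}{\ell}$ for the distinguished index and $\eta_{i'}^{(\ell,r)} = -\frac{1+\sqrt{r}}{\ell}$ otherwise.

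The one genuinely substantive point is the index bookkeeping. Since $\gcd(\ell, q-1)=1$ (established just before (\ref{eqn-bbb})), the map $i \mapsto i' = i(q-1) \bmod \ell$ is a bijection of $\{0,1,\dots,\ell-1\}$; hence $i'=0$ if and only if $i=0$, which is what isolates the block $k_0$ in Case B, and the distinguished index $i'=\ell/2$ pulls back to the congruence $i(q-1)\equiv \frac{\ell}{2} \bmod \ell$ recorded in Case A. One should make sure the special period (at index $0$ in Case B, at $\ell/2$ in Case A) is attached to the correct block $D_i$ under this reindexing.

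The remaining computation is routine algebra, so I expect no real obstacle beyond this bookkeeping. As a sanity check worth noting, under the section's standing hypothesis $\gcd(q-1,m)=1$ Lemma \ref{conditionofl}(1) forces $\ell$ to be odd; the Case A hypotheses ($p$ odd and $\frac{p^j+1}{\ell}$ odd) would then force $p^j+1=\ell\cdot\frac{p^j+1}{\ell}$ to be odd, contradicting that $p^j+1$ is even. Thus Case A is in fact vacuous in this Singer setting, and is recorded only for completeness, while Case B carries all the content.
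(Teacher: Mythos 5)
Your proposal matches the paper's proof exactly: the paper's entire argument for Theorem~\ref{semiprimitive} is the one-line observation that the conclusions follow from (\ref{eqn-ki}) and Lemma~\ref{lem-semipri}, and your substitution of the uniform-cyclotomy Gaussian periods into the character-sum formula, cleared over the common denominator $\ell(q-1)q$, reproduces all four stated expressions correctly. Your additional remarks --- that $i\mapsto i(q-1)\bmod\ell$ is a bijection because $\gcd(\ell,q-1)=1$, so the distinguished period attaches to the correct block, and that Case~A is in fact vacuous in this Singer setting because Lemma~\ref{conditionofl}(1) forces $\ell$ to be odd --- go beyond what the paper records and are both correct.
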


\begin{proof}
The conclusions follow from (\ref{eqn-ki}) and Lemma \ref{lem-semipri}.
\end{proof}

\begin{remark}
  The determination of $k_i$ for Singer difference sets are also discussed in \cite{kestenband}.
  However, the author of  \cite{kestenband} considered only the case $\ell=q+1$ and used geometric arguments, where $\ell$ is defined in Theorem \ref{semiprimitive}. Recently, we noticed that this problem is also discussed in \cite{momihara}. 
\end{remark}

Next we give an example.

\begin{example}
	Let $q=2, r=q^6,$ and $\ell=3$. Let $g$ be a primitive element of $\gf_r$ and $\alpha=g$.
    By Theorem \ref{thm-new}, the set $$D=\{i: 0\le i< q^6-1 | \tr(\alpha^i)=0 \}$$
	is a $(63,31,15)$ difference set in $\Z_{63}$. By Theorem \ref{thm-mainmain}, we get a
	$(21,\{9,13\},15;3)$ difference family in $\Z_{63}$ with the following base blocks:
	\begin{eqnarray*}
		\begin{array}{lll}
		D_0 &=& \{0, 3, 5, 6, 9, 10, 12, 13, 15, 17, 18, 19, 20\}, \\
	    D_1 &=& \{0, 1, 2, 3, 5, 9, 11, 13, 16\}, \\
	    D_2 &=& \{0, 1, 2, 4, 5, 6, 10, 11, 18\}.
	\end{array}
	\end{eqnarray*}
\end{example}

\subsubsection{The construction with twin-prime difference sets}

In this section, we determine the cardinalities $k_i$ of the blocks of the cyclic difference family obtained from the
the twin-prime difference sets.

\begin{theorem}
	\label{twinds}
Let $q$ and $q+2$ be prime powers. Then the set
$$D=\{ (x,y): x\in\gf_q, y\in\gf_{q+2} | x, y\ \mbox{both are nonzero squares or nonsquares or}\ y=0 \}$$
is a $(q(q+2), \frac{q^2+2q-1}2, \frac{q^2+2q-3}4)$ difference set in $(\gf_q, +)\times (\gf_{q+2}, +)$.
\end{theorem}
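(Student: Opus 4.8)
The plan is to verify directly that the multiset of differences from $D$ covers each nonzero element of $\gf_q \times \gf_{q+2}$ exactly $\lambda = (q^2+2q-3)/4$ times. First I would establish the size $|D| = (q^2+2q-1)/2$ by counting: there are $(q-1)/2$ nonzero squares in $\gf_q$ and $(q+1)/2$ nonzero squares in $\gf_{q+2}$ (note $q \equiv 3 \bmod 4$ forces $-1$ to be a nonsquare, but the counts here are purely by cardinality of the square/nonsquare classes). The set of pairs $(x,y)$ with both coordinates nonzero squares has size $\tfrac{q-1}{2}\cdot\tfrac{q+1}{2}$; both nonsquares contributes the same; and the set $\{(x,0): x \in \gf_q\}$ contributes $q$. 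Summing $2\cdot \tfrac{(q-1)(q+1)}{4} + q = \tfrac{q^2-1}{2}+q = \tfrac{q^2+2q-1}{2}$ confirms $k$.

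Next I would compute the difference function $d_D(a,b) = |D \cap (D+(a,b))|$ using the character-sum / quadratic-character machinery already set up in the paper. The natural tool is to write the indicator of ``$x$ a nonzero square or nonsquare, consistently paired with $y$'' in terms of the quadratic characters $\eta_q$ of $\gf_q$ and $\eta_{q+2}$ of $\gf_{q+2}$. Concretely, for $x,y \neq 0$ the condition ``both squares or both nonsquares'' is $\tfrac{1}{2}(1 + \eta_q(x)\eta_{q+2}(y))$, so membership in $D$ is governed by the product character together with the boundary contribution from the $y=0$ stratum. Expanding $d_D(a,b)$ then produces a main term of size roughly $k^2/v$ plus correlation sums of the shape $\sum_{x}\eta_q(x)\eta_q(x+a)$ and $\sum_y \eta_{q+2}(y)\eta_{q+2}(y+b)$, each of which is a standard Jacobi-type sum evaluating to $-1$ when the shift is nonzero and to $q-1$ (resp. $q+1$) when the shift is zero. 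The classical twin-prime-power construction is designed precisely so that these character-sum contributions, combined across the four cases ($a=0$ or not, $b=0$ or not), all collapse to the single constant $\lambda=(q^2+2q-3)/4$.

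The main obstacle is the careful case analysis of the four difference-type strata: $(a,b)$ with $a=0,b\neq0$; with $a\neq0,b=0$; with both nonzero; and ensuring the $y=0$ block interacts correctly in each. The additive shift on the $\gf_{q+2}$ coordinate can move a point into or out of the $y=0$ row, and it is exactly the choice $+q$ in the size count (the extra $q$ points on $y=0$) that compensates the defect $q+1$ versus $q-1$ between the two fields; reconciling these so that every one of the four cases yields the \emph{same} $\lambda$ is where the delicate bookkeeping lies. Once all four strata are shown to give $d_D(a,b)=\lambda$ for every $(a,b)\neq(0,0)$, the difference-set property follows, and the parameters $(q(q+2),(q^2+2q-1)/2,(q^2+2q-3)/4)$ are established. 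As a sanity check I would verify the identity $k(k-1)=\lambda(v-1)$ with $v=q(q+2)$, which the stated parameters indeed satisfy.
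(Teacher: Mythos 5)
The paper does not actually prove Theorem \ref{twinds}: it is quoted as the classical twin-prime-power construction and attributed to the literature (see the citations to Storer and Whiteman at Corollary \ref{cor-17}), so there is no in-paper argument to compare yours against. Judged on its own, your outline follows the standard cyclotomic/character-sum route: the cardinality count is correct, the indicator $\frac12\bigl(1+\eta_q(x)\eta_{q+2}(y)\bigr)$ for the ``both squares or both nonsquares'' condition is the right device, and the autocorrelation identity $\sum_x \eta_q(x)\eta_q(x+a)=-1$ for $a\ne 0$ is exactly what is needed.

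However, the part you defer as ``delicate bookkeeping'' contains the one step that is not routine, and your sketch neither performs nor identifies it. Write $D=E\cup L$ with $E=\{(x,y): x,y\ne 0,\ \eta_q(x)\eta_{q+2}(y)=1\}$ and $L=\gf_q\times\{0\}$. The cases $a=0,\,b\ne0$ and $a\ne0,\,b=0$ come out of your machinery with no surprises; in the latter, $L$ is translated onto itself and contributes $q$, which is indeed where the extra $q$ points of $|D|$ earn their keep, as you anticipate. But in the case $a\ne0,\ b\ne0$, expanding $|E\cap(E+(a,b))|$ produces, besides the main term and the product $\bigl(\sum_x\eta_q(x)\eta_q(x-a)\bigr)\bigl(\sum_y\eta_{q+2}(y)\eta_{q+2}(y-b)\bigr)=1$, two genuine cross terms, namely $\eta_q(a)\eta_{q+2}(b)$ and $\eta_q(-1)\eta_{q+2}(-1)\,\eta_q(a)\eta_{q+2}(b)$. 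These do not individually vanish, and $d_D(a,b)$ is constant only because they cancel, which happens precisely because exactly one of $q$ and $q+2$ is $\equiv 1\pmod 4$, so that $\eta_q(-1)\eta_{q+2}(-1)=-1$. This is where the hypothesis that the two prime powers differ by $2$ (rather than being an arbitrary coprime pair) actually enters the proof; your aside dismissing the relevance of the residue of $q$ modulo $4$ is valid for the cardinality count but not here. With that cancellation supplied, the three strata give $d_D(a,b)=q+\frac{(q+1)(q-3)}{4}$ or $(q-1)+\frac{(q-1)^2}{4}$, both equal to $\frac{q^2+2q-3}{4}$, and the argument closes.
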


Since we only consider cyclic difference sets, we assume $q$ and $q+2$ are both primes in this section.
Let $G=(\gf_q, +)\times (\gf_{q+2}, +)$, it is clear that $G\cong \Z_{q(q+2)}$ as $\gcd(q,q+2)=1$.
Let $$\phi: \Z_{q(q+2)}\rightarrow \Z_q\times\Z_{q+2}$$
defined by $\phi(x)=(x \bmod q, x \bmod (q+2))$. By the Chinese Remainder Theorem, $\phi$ is a group isomorphism.
Let $D$ be the twin-prime difference set in Theorem \ref{twinds}.
As in Section \ref{sec-dsDF}, let $s(t)$ be the characteristic sequence of $D$.
Assume $v=\ell n=q(q+2)$, then $(\ell,n)$ can either be $(q,q+2)$ or $(q+2,q)$.
For $0\le i\le \ell-1$, define the sequence $$s_i(t)=s(\ell t+i),\ 0\le t\le n-1.$$
It can be seen that $1=s_i(t)=s(\ell t+i)$ if and only if $\phi(\ell t+i)\in D$.
Now we give the following result.

\begin{theorem}
	\label{twin-size}
Let notations be as above in this section and $k_i=|D_i|$ for $0\le i\le \ell-1$.

(1) If $(\ell,n)=(q,q+2)$, then
\begin{eqnarray*}
k_i=\left\{ \begin{array}{ll}
	         \frac{q+3}{2} & \mbox{ if }\ i\ne 0  \\
             1           & \mbox{ if }\ i=0.
             \end{array}
     \right.
\end{eqnarray*}

(2) If $(\ell,n)=(q+2,q)$, then
\begin{eqnarray*}
k_i=\left\{ \begin{array}{ll}
             \frac{q-1}2 & \mbox{ if }\ i \ne 0  \\
             q           & \mbox{ if }\ i=0.
             \end{array}
     \right.
\end{eqnarray*}
\end{theorem}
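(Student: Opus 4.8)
The plan is to transport the whole computation through the Chinese Remainder Theorem isomorphism $\phi$ and to read off each cardinality $k_i$ as the number of points of $D$ lying in a single coset of the order-$n$ subgroup $H=\langle \ell\rangle$ of $\Z_{q(q+2)}$. Indeed, $\{\ell t+i : 0\le t\le n-1\}$ is exactly the coset $H+i$, so by the definition of $D_i$ in this section we have $k_i=|D_i|=|\{t:\phi(\ell t+i)\in D\}|=|\phi(H+i)\cap D|$, and it suffices to count points of $D$ in $\phi(H+i)$ inside $\gf_q\times\gf_{q+2}$. The first step is therefore to compute the image $\phi(H)$: since $\gcd(q,q+2)=1$, the residue $q$ generates $\Z_{q+2}$ and $q+2\equiv 2$ generates $\Z_q$ (both $q$ and $q+2$ being odd), whence $\phi(\langle q\rangle)=\{0\}\times\gf_{q+2}$ and $\phi(\langle q+2\rangle)=\gf_q\times\{0\}$. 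Thus in Case (1), where $\ell=q$, the coset $H+i$ maps to the ``column'' $\{(i\bmod q,\,y):y\in\gf_{q+2}\}$, while in Case (2), where $\ell=q+2$, it maps to the ``row'' $\{(x,\,i\bmod(q+2)):x\in\gf_q\}$.

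Next I would record the coset--coordinate correspondence: as $i$ runs over $0,\dots,\ell-1$ the fixed coordinate $i\bmod q$ (resp.\ $i\bmod(q+2)$) runs once through all of $\gf_q$ (resp.\ $\gf_{q+2}$), and in particular the index $i=0$ corresponds to the fixed coordinate being $0$. With this identification in hand, the determination of $k_i$ reduces to counting how many points of $D$ sit in a prescribed column or row, which I would carry out by a short case analysis on whether the fixed coordinate is zero, a nonzero square, or a nonsquare, using that $\gf_q$ has $\frac{q-1}{2}$ nonzero squares and $\frac{q-1}{2}$ nonsquares while $\gf_{q+2}$ has $\frac{q+1}{2}$ of each.

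For Case (1) the column over a fixed $x$-coordinate $a$ contains $(a,y)\in D$ precisely when $y=0$, or $a,y$ are both nonzero squares, or both nonsquares; hence $a=0$ yields only $(0,0)$, giving $k_0=1$, while any $a\neq0$ (square or nonsquare alike) yields the $\frac{q+1}{2}$ matching values of $y$ together with $y=0$, giving $k_i=\frac{q+3}{2}$. For Case (2) the row over a fixed $y$-coordinate $b$ contains $(x,b)\in D$ for all $x$ when $b=0$ (giving $k_0=q$), and for $b\neq0$ only for the $\frac{q-1}{2}$ values of $x$ of the same quadratic type as $b$ (giving $k_i=\frac{q-1}{2}$). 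Both stated formulas follow. The only genuinely delicate point is the bookkeeping of the coset--coordinate correspondence under $\phi$; once that is pinned down the argument is elementary counting, and as a consistency check one verifies that $\sum_i k_i=\frac{q^2+2q-1}{2}=k$ in each case, in agreement with part (1) of Lemma \ref{thm-relations}.
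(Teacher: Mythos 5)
Your proposal is correct and follows essentially the same route as the paper: both push the coset $\{\ell t+i\}$ through the CRT isomorphism $\phi$, observe that one coordinate is pinned to $i$ while the other sweeps out the full field, and then count the points of the twin-prime set $D$ in the resulting column or row by the quadratic-character case analysis. Your write-up is if anything slightly more complete, since you treat case (2) explicitly and verify $\sum_i k_i=k$ against Lemma \ref{thm-relations}, whereas the paper only works out case (1) and declares the other similar.
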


\begin{proof}
We only prove (1) as the proof of (2) is similar.
For $0\le i\le q-1$, we have $1=s_i(t)=s(qt+i)$ if and only if
$$\phi(qt+i)=((qt+i)\ \mbox{bmod}\ q, (qt+i)\ \mbox{bmod}\ (q+2))=(i, (qt+i)\ \mbox{bmod}\ (q+2) )\in D.$$
It is easy to verify that $\{ (qt+i)\ \mbox{bmod}\ (q+2): t\in\gf_{q+2} \}=\gf_{q+2}$.
We only need to prove that $(qt_1+i)\not\equiv (qt_2+i)\ \mbox{bmod}\ (q+2)$ for $t_1,t_2\in\gf_{q+2}, t_1\ne t_2$.
When $0\ne i\in\gf_q$, we have
\begin{eqnarray*}
	\begin{array}{lll}
    k_i&=&|\{ t\in\gf_{q+2} | s_i(t)=1 \}|  \\
       &=&|\{ t\in\gf_{q+2} | (i, (qt+i)\ \mbox{bmod}\ (q+2))\in D \}| \\
	   &=&\frac{q+1}2+1. \\
	\end{array}
\end{eqnarray*}
Similarly, $k_i=1$ when $i=0$.
This completes the proof.
\end{proof}

\begin{remark}
	\label{twin-cor}
	By Theorem \ref{twin-size},   $k_0=1$ when $(\ell,n)=(q,q+2)$. It is clear that we may
	delete $D_0$ and get a $(q+2, \frac{q+3}{2}, \frac{q^2+2q-3}4; \ell-1)$ difference family in $\Z_{q+2}$.

	Similarly, it can be seen that $D_0=\gf_q$ when $(\ell,n)=(q+2,q)$. After deleting $D_0$, we get
	a $(q, \frac{q-1}2, \frac{q^2+2q-3}4; \ell-1)$ difference family in $\Z_q$.
\end{remark}

\begin{example}
Let $q=11$, then
$$D=\{ (x,y):  x, y\ \mbox{both are nonzero squares or nonsquares or}\ y=0 \}$$
is a $(143, 71, 35)$ difference set in $\Z_{143}$.
If $(\ell,n)=(q,q+2)$, by Theorems \ref{thm-mainmain}, \ref{twin-size} and Remark \ref{twin-cor}
we have a $(13, 7, 35; 10)$ difference family in $\Z_{13}$ with the following base blacks:
\begin{eqnarray*}
	\begin{array}{lll}
    &\{ 0, 3, 4, 5, 6, 9, 11 \}, &\{ 1, 2, 4, 5, 6, 8, 9 \}, \\
    &\{ 0, 2, 4, 7, 8, 9, 10 \}, &\{ 0, 1, 2, 4, 5, 10, 11 \}, \\
    &\{ 0, 1, 6, 7, 9, 10, 11\}, &\{ 0, 2, 3, 4, 6, 7, 12 \},  \\
    &\{ 0, 1, 3, 4, 5, 7, 8 \},  &\{ 1, 2, 3, 4, 7, 9, 11 \}, \\
    &\{ 0, 1, 2, 5, 7, 9, 12 \}, &\{ 0, 1, 2, 3, 6, 8, 10 \}. \\
    \end{array}
\end{eqnarray*}

If $(\ell,n)=(q+2,q)$, we have a  $(11, 5, 35; 12))$ difference family in $\Z_{11}$ with the following
base blocks:
\begin{eqnarray*}
	\begin{array}{lllll}
    & \{ 0, 4, 5, 6, 8 \},  &\{ 0, 1, 2, 4, 7 \},  & \{ 0, 1, 2, 4, 7 \},  &\{ 0, 2, 3, 4, 8 \},  \\
    & \{ 0, 1, 5, 8, 10 \}, &\{ 0, 3, 7, 8, 9 \},  & \{ 1, 3, 4, 5, 9 \},  &\{ 0, 1, 2, 6, 9 \},  \\
	& \{ 0, 1, 3, 6, 10 \}, &\{ 1, 2, 3, 5, 8 \},    & \{ 0, 3, 5, 6, 7 \},  &\{ 1, 4, 6, 7, 8 \}.
	\end{array}
\end{eqnarray*}
\end{example}

\section{The construction with disjoint difference families}\label{sec-disj}

The fifth construction of difference families is presented in this section.
Similar to Section 4, we first use near-resolvable block design to obtain more general construction
of difference families and then apply it on disjoint difference families to obtain a specific construction.

A \textit{partial parallel class} is a set of blocks that contain no point of the design more than once.
A \textit{near parallel class} is a partial parallel class missing a single point.
A $(v,k,k-1)$ \textit{near-resolvable block design} (NRB) is a $(v,k,k-1)$-BIBD
with the property that the blocks can be partitioned into near-parallel classes.
For such a design, every point is absent from exactly one class, which implies that
it has $v$ near parallel classes.

\begin{theorem}
  \label{NRBconstruction}
  Let $\mathcal{D}=(\mathcal{V},\mathcal{B},\mathcal{R})$ be a $(kn+1,k,k-1)$-NRB and, for any fixed $s$ between $1$ and $n$, let $\mathcal{R}_s$
  be the set of all possible unions of $s$ pairwise distinct blocks belonging to the same near parallel class of $\mathcal{F}$.
  Then $\mathcal{D}^\cup=(\mathcal{V},\mathcal{R}_s)$ is a $(kn+1,ks,\lambda_s)$-BIBD with 
  $\lambda_s=(k-1)\left( n-1 \atop{s-1} \right) +(kn-k)\left(n-2 \atop{s-2}\right)$.
\end{theorem}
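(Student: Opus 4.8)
The plan is to verify directly that every pair of distinct points of $\mathcal{V}$ lies in exactly $\lambda_s$ members of $\mathcal{R}_s$ and that every member has constant size $ks$; by the definition of a BIBD (with $r,b$ then determined) this suffices. First I would record the combinatorial structure of a $(kn+1,k,k-1)$-NRB. Since $\lambda=k-1$ and $v=kn+1$, one gets $r=kn$ and $b=(kn+1)n$; each near-parallel class omits a single point $z$, and its blocks, being pairwise disjoint, partition the remaining $kn$ points into exactly $n$ blocks of size $k$. As every point is omitted by exactly one class, there are precisely $kn+1$ near-parallel classes. Consequently any union of $s$ pairwise distinct blocks from a single class is a disjoint union, so every member of $\mathcal{R}_s$ has size $ks$, as required.

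Next, fix two distinct points $x,y\in\mathcal{V}$ and count the members of $\mathcal{R}_s$ containing both. I would classify the $kn+1$ near-parallel classes by their omitted point $z$. If $z=x$ or $z=y$, the class omits one of the two points, so no union of its blocks can contain both; these two classes contribute nothing. For each of the remaining $kn-1$ classes (those with $z\ne x,y$), both $x$ and $y$ are covered, so there is a unique block of that class containing $x$ and a unique block containing $y$, and the decisive question is whether these two blocks coincide.

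The crucial counting step is to determine how many of these $kn-1$ classes place $x$ and $y$ in the same block. Here I would invoke the underlying BIBD property: $x$ and $y$ appear together in exactly $\lambda=k-1$ blocks of $\mathcal{B}$. Each such block lies in a unique near-parallel class, and that class necessarily omits a point distinct from $x$ and $y$; conversely, within any class the blocks are disjoint, so a class contributes at most one block through $\{x,y\}$. Hence exactly $k-1$ of the $kn-1$ classes place $x,y$ in a common block, and the remaining $kn-1-(k-1)=kn-k$ classes place them in distinct blocks. In a same-block class one selects the common block together with any $s-1$ of the other $n-1$ blocks, giving $\binom{n-1}{s-1}$ unions; in a distinct-block class one selects both required blocks together with any $s-2$ of the remaining $n-2$ blocks, giving $\binom{n-2}{s-2}$ unions. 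Summing over the two types yields
$$\lambda_s=(k-1)\binom{n-1}{s-1}+(kn-k)\binom{n-2}{s-2},$$
which is independent of the chosen pair, establishing the claim.

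I expect the main obstacle to be the justification that the number of same-block classes equals $\lambda=k-1$ exactly, i.e.\ the bijection between blocks of $\mathcal{B}$ through $\{x,y\}$ and near-parallel classes in which $x$ and $y$ share a block. This rests on two facts that must be stated carefully: each block belongs to precisely one near-parallel class, and within a class the blocks are pairwise disjoint. Everything else is routine binomial bookkeeping, and the case $s=1$ (where $\binom{n-2}{-1}=0$, so $\lambda_1=k-1$) gives a useful consistency check, since $\mathcal{R}_1=\mathcal{B}$ recovers the original $(kn+1,k,k-1)$-BIBD.
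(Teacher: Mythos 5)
Your proof is correct and follows essentially the same route as the paper's: classify the $kn+1$ near-parallel classes by whether they omit $x$ or $y$, split the remaining $kn-1$ classes into the $k-1$ that place $x,y$ in a common block versus the $kn-k$ that do not, and count $\binom{n-1}{s-1}$ or $\binom{n-2}{s-2}$ unions accordingly. Your write-up is in fact slightly more careful than the paper's, making explicit that the partition into classes and the disjointness of blocks within a class justify the key count of $k-1$ same-block classes.
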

\begin{proof}
  We only need to prove that for any two points $x,y\in\mathcal{V}$, there are exactly $\lambda_s$ blocks containing them.
  Notice that every near parallel class of $\mathcal{R}$ has $n$ blocks.
  Clearly if either $x$ or $y$ are not in a near parallel class $\mathcal{C}$ of $\mathcal{R}$, then
  all possible unions of $s$ pairwise distinct blocks belonging to $\mathcal{C}$ do not contain $x,y$.
  Since $\mathcal{D}$ has $v=kn+1$ near parallel classes and every point is absent from exactly one class,
  there are $v-2=kn-1$ near parallel classes containing both $x,y$.
  Next, we consider the following two cases of the near parallel class $\mathcal{C}$ containing $x,y$ of $\mathcal{R}$:
  $(1)$ $\mathcal{C}$ contains the block $B\in\mathcal{B}$ such that $x,y\in B$.
  Clearly there are $k-1$ such classes and for each such class there are $\left(n-1\atop{s-1}\right)$ choices
  of the other $s-1$ blocks of $\mathcal{D}$.
  So altogether there are $(k-1)\left(n-1 \atop{s-1}\right)$ blocks of $\mathcal{D}^\cup$ containing $x,y$ in this case;
  $(2)$ $\mathcal{C}$ does not contain the block $B\in\mathcal{B}$ such that $x,y\in B$.
  Now it is not difficult to see that there are $(v-2)-(k-1)=(kn-1)-(k-1)=kn-k$ such near parallel classes of $\mathcal{D}$
  and every class has $\left(n-2\atop s-2\right)$ choices of the other $s-2$ blocks of $\mathcal{D}$.
  Therefore, for this case there are $(kn-k)\left(n-2\atop {s-2}\right)$ blocks of $\mathcal{D}^\cup$ containing $x,y$.
  Combining cases $(1)$ and $(2)$, we prove the result.
\end{proof}

Now let $D=\{D_1, D_2, \ldots, D_u\}$ be a $(v,k,k-1)$ disjoint difference family of $G$.
Then the block design $\calD=(G,\mathcal{B})$ is a $(v,k,k-1)$-NRB, where $\mathcal{B}=\{D_ix:0\le i\le u, x\in G\}$.

\begin{corollary}
  \label{construction4}
Let $G$ be a group of order $v$, and let $\calD=\{D_1, D_2, \ldots, D_u\}$
be a collection of $k$-subsets of $G$. For each $1\le s\le u-1$, let $\Omega_s=\{x\in 2^U, \ |x|=s\}$, $\mu=\left( u\atop s\right)$
and $t_i=\left( u-i \atop s-i \right)$ for $i=1,2$,
where $U=\{1,2,\cdots,u\}$ and $2^U$ is the power set of $U$. If $\calD$ is a partition of $G\setminus\{1_G\}$ and
a $(v, k, k-1;u)$-DDF in $G$, then
\begin{eqnarray*}\label{eqn-lastone}
\left\{\bigcup_{i\in x}D_i : x\in \Omega_s \right\}
\end{eqnarray*}
is a $(v, sk, t_1(k-1)+t_2(v-k-1); \mu)$-DF in $G$.
\end{corollary}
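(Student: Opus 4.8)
Corollary \ref{construction4} — proof proposal.

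The plan is to realize this corollary as a direct application of Theorem \ref{NRBconstruction}, combined with Lemma \ref{dfstabilizer1}. The bridge is the observation, already noted in the text preceding the statement, that a $(v,k,k-1;u)$-DDF which partitions $G\setminus\{1_G\}$ gives rise to a near-resolvable block design. So the first step is to verify this precisely: taking the development $\mathcal{B}=\{D_ix: 1\le i\le u,\ x\in G\}$, I would check that $\mathcal{D}=(G,\mathcal{B})$ is a $(v,k,k-1)$-NRB. The DDF condition makes it a $(v,k,k-1)$-BIBD, and the partition hypothesis $\calD$ partitions $G\setminus\{1_G\}$ is exactly what forces the near-parallel structure: for each fixed $x\in G$, the translates $\{D_ix: 1\le i\le u\}$ cover every point of $G$ except the single point $x$ (since $\bigcup_i D_i=G\setminus\{1_G\}$ translates to $\bigcup_i D_ix=G\setminus\{x\}$), so each such set of $u$ blocks is a near-parallel class missing $x$. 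Here $v=ku+1$, so in the notation of Theorem \ref{NRBconstruction} we have $n=u$, and the classes each contain $u$ blocks as required.

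The second step is to feed this NRB into Theorem \ref{NRBconstruction} with the same parameter $s$. The theorem then yields that $\mathcal{D}^\cup=(G,\mathcal{R}_s)$ is a $(ku+1,ks,\lambda_s)$-BIBD with
$$\lambda_s=(k-1)\binom{u-1}{s-1}+(ku-k)\binom{u-2}{s-2}.$$
Rewriting with the corollary's notation $t_1=\binom{u-1}{s-1}$ and $t_2=\binom{u-2}{s-2}$, and substituting $ku-k=v-k-1$ (since $v=ku+1$), this is precisely $\lambda_s=t_1(k-1)+t_2(v-k-1)$, matching the claimed frequency. The third step is to pass from the BIBD back to a difference family. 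The candidate base blocks are $\{\bigcup_{i\in x}D_i: x\in\Omega_s\}$, of which there are $\binom{u}{s}=\mu$, each of size $ks$; these are exactly one representative near-parallel class of the NRB (the class missing $1_G$), and $\mathcal{R}_s$ is their translation-development. By Lemma \ref{dfstabilizer1}, this collection is a $(v,ks,\lambda_s;\mu)$ difference family provided the blocks $\bigcup_{i\in x}D_i$ all have trivial $G$-stabilizer.

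The main obstacle I anticipate is precisely this stabilizer condition required to invoke Lemma \ref{dfstabilizer1}. A union $\bigcup_{i\in x}D_i$ could in principle be fixed by some nontrivial $g\in G$ under right translation, which would break the clean correspondence between the BIBD blocks and the developed difference family. I would handle this by arguing that any such $g$ would have to permute the base blocks $D_i$ (or at least fix the union setwise), and then use the disjointness together with the partition property to derive a contradiction — for instance, a nontrivial stabilizer would force $g$ to lie in the stabilizer of the whole point set $\bigcup_{i\in x}D_i\subseteq G\setminus\{1_G\}$, constraining $|g|$ to divide $ks$; one would then need the additional implicit assumption (as in the coprimality hypotheses of Section \ref{sec-threethree}) that no such nontrivial translation exists. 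In the generic situation where the $D_i$ themselves generate blocks with trivial stabilizer and the group order is coprime to the relevant block sizes, this holds automatically, and the corollary follows. I would state this cleanly and, if a gap remains, flag that the trivial-stabilizer hypothesis is needed for the final invocation of Lemma \ref{dfstabilizer1}.
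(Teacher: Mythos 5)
Your proposal follows exactly the route the paper intends: the paper states Corollary \ref{construction4} as an immediate consequence of Theorem \ref{NRBconstruction}, preceded by precisely your identification of the development $\{D_i x : 1\le i\le u,\ x\in G\}$ of the partitioning DDF as a $(ku+1,k,k-1)$-NRB whose near-parallel class missing $x$ is $\{D_i x\}_{i=1}^{u}$, and your parameter translation $n=u$, $ku-k=v-k-1$ is correct. The stabilizer obstacle you raise at the end is not actually there: the pair count in the proof of Theorem \ref{NRBconstruction} runs over pairs (near-parallel class, $s$-subset of its blocks), i.e., over the \emph{multiset} $\bigl\{\bigcup_{i\in x}D_i y : x\in\Omega_s,\ y\in G\bigr\}$, which is the multiset development of the $\mu$ base blocks; since the number of pairs $(j,g)$ with $\{a,b\}\subseteq B_j g$ always equals the number of representations of $ab^{-1}$ as a difference within the base blocks, this multiset statement is directly equivalent to the difference-family condition, and Lemma \ref{dfstabilizer1} (with its trivial-stabilizer hypothesis) is never needed.
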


An example of $(v,k,k-1)$-DDF is as follows.
Let $q$ be a power of an odd prime, and let $\alpha$ be a generator of $\gf_{q}^*$.
Assume that $q-1=el$, where $e >1$ and $l >1$ are integers. Define
$C_0^{(e)}$ to be the subgroup of $\gf_{q}^*$ generated by $\alpha^e$, and let $C_i^{(e)}
:=\alpha^i C_0^{(e)}$ for each $i$ with $0 \leq i \leq e-1$. These $C_i^{(e)}$
are called {\it cyclotomic classes} of order $e$ with respect to $\gf_{q}^*$.

\begin{lemma}\label{lem-DF1} {\rm (Wilson \cite{Wilson})}
Let $q-1=el$ and let $q$ be a power of an odd prime. Then
$\calD:=\{C_0^{(e)}, \ldots, C_{e-1}^{(e)}\}$ is a $(q, (q-1)/e,
(q-1-e)/e, e)$-DDF in $(\gf_{q},+)$.
\end{lemma}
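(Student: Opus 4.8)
The plan is to exploit the multiplicative action of the subgroup $H := C_0^{(e)}$ on $\gf_q^*$, since the cyclotomic classes are precisely its cosets $C_i^{(e)} = \alpha^i H$. First I would record the two easy facts: the cosets $C_0^{(e)}, \ldots, C_{e-1}^{(e)}$ are pairwise disjoint (distinct cosets of a subgroup never meet), which gives the \emph{disjoint} part of the DDF, and each has cardinality $|H| = l = (q-1)/e$, matching the claimed block size $k = (q-1)/e$. It then remains to verify the difference property, i.e.\ that every nonzero $g \in \gf_q$ is represented exactly $\gamma = (q-1-e)/e$ times in the multiset $\sum_{i=0}^{e-1}\{x-y : x,y\in C_i^{(e)},\ x\neq y\}$.

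The key device is the counting function $f(r) := \#\{(u,w)\in H\times H : u-w=r\}$ on $\gf_q^*$. I would first show $f$ is constant on each coset of $H$: for $h\in H$ the substitution $(u,w)\mapsto(hu,hw)$ is a bijection of $H\times H$ carrying difference $r$ to difference $hr$, whence $f(hr)=f(r)$. Next, writing $x=\alpha^i u$ and $y=\alpha^i w$ with $u,w\in H$, the differences arising from the class $C_i^{(e)}=\alpha^i H$ are exactly $\alpha^i(u-w)$, so the number of representations of $g$ coming from $C_i^{(e)}$ equals $f(\alpha^{-i}g)$. Summing over $i$ gives the total multiplicity $N(g)=\sum_{i=0}^{e-1} f(\alpha^{-i}g)$.

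Now I would observe that $\{\alpha^{-i} : 0\le i\le e-1\}$ is a complete set of coset representatives of $H$ in $\gf_q^*$, and multiplication by the fixed $g$ permutes the cosets; hence $\{\alpha^{-i}g : 0\le i\le e-1\}$ again meets every coset exactly once. Because $f$ is constant on cosets, $N(g)=\sum_{i}f(\alpha^{-i}g)$ equals $\sum_{j=0}^{e-1} f_j$, where $f_j$ is the common value of $f$ on $C_j^{(e)}$, a quantity manifestly independent of $g$. Finally I would evaluate this constant: summing $f$ over all of $\gf_q^*$ counts ordered pairs of distinct elements of $H$, so $\sum_{r\in\gf_q^*}f(r)=|H|(|H|-1)=l(l-1)$; since each coset has $l$ elements and $f$ is constant on it, $l\sum_{j=0}^{e-1}f_j=l(l-1)$, giving $N(g)=\sum_j f_j=l-1=(q-1-e)/e=\gamma$, as required.

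The routine parts (block size, the arithmetic $l-1=\gamma$) are immediate; the only point demanding care is the transversal argument, namely verifying that $\{\alpha^{-i}g\}_i$ hits each coset exactly once and that the coset-invariance of $f$ makes the $g$-dependence drop out. That is where I expect the main (though modest) effort to lie, and it is exactly the structural reason the construction works for \emph{every} divisor $e$ of $q-1$.
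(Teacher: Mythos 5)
Your proof is correct. The paper offers no proof of this lemma at all---it is simply quoted from Wilson \cite{Wilson}---so there is nothing to compare against, but your argument is a complete and self-contained verification: the coset structure gives disjointness and the block size $l=(q-1)/e$; the invariance $f(hr)=f(r)$ for $h\in H$ together with the fact that $\{\alpha^{-i}g\}_{i=0}^{e-1}$ is a transversal of $H$ in $\gf_q^*$ makes the total multiplicity $N(g)=\sum_j f_j$ independent of $g$; and the global count $l\sum_j f_j=l(l-1)$ pins it down to $l-1=(q-1-e)/e$. All steps check out, including the handling of the $x\neq y$ condition (excluded automatically since $g\neq 0$).
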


\begin{corollary}
Let $q-1=el$ and let $q$ be a power of an odd prime. Let
$\calD:=\{C_0^{(e)}, \ldots, C_{e-1}^{(e)}\}$ be the $(q, (q-1)/e,
(q-1-e)/e, e)$-DDF in $(\gf_{q},+)$.
For each $1\le s\le e-1$, let $\Omega_s=\{x\in 2^U, \ |x|=s\}$, $\mu=\left( e\atop s\right)$
and $t_i=\left( e-i \atop s-i \right)$ for $i=1,2$,
where $U=\{1,2,\cdots,e-1\}$ and $2^U$ is the power set of $U$. Then
\begin{eqnarray*}
\left\{\bigcup_{i\in x}D_i : x\in \Omega_s \right\}
\end{eqnarray*}
is a $(q, s(q-1)/e, t_1(q-1-e)/e+t_2(eq-q+1-e)/e; \mu)$-DF in $(\gf_{q},+)$.
\end{corollary}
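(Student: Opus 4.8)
The plan is to obtain this corollary as an immediate specialization of Corollary \ref{construction4}, feeding it the disjoint difference family supplied by Wilson's Lemma \ref{lem-DF1}. First I would check that the cyclotomic family $\calD=\{C_0^{(e)},\ldots,C_{e-1}^{(e)}\}$ meets every hypothesis of Corollary \ref{construction4}. By construction each $C_i^{(e)}=\alpha^i\langle\alpha^e\rangle$ is a coset of the subgroup $\langle\alpha^e\rangle$ of $\gf_q^*$, and since $q-1=el$ the subgroup $\langle\alpha^e\rangle$ has order $l=(q-1)/e$; hence the $e$ cosets $C_0^{(e)},\ldots,C_{e-1}^{(e)}$ are pairwise disjoint, each of size $(q-1)/e$, and their union is exactly $\gf_q^*=\gf_q\setminus\{0\}=G\setminus\{1_G\}$. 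Thus $\calD$ is a partition of $G\setminus\{1_G\}$ into $e$ blocks, which is precisely the structural requirement of Corollary \ref{construction4}.

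The second thing to verify is that Lemma \ref{lem-DF1} presents $\calD$ in the exact form required by Corollary \ref{construction4}, namely as a $(v,k,k-1;u)$-DDF. By Lemma \ref{lem-DF1}, $\calD$ is a $(q,(q-1)/e,(q-1-e)/e,e)$-DDF, so with $v=q$, $k=(q-1)/e$, and $u=e$ one has
\[
\gamma=\frac{q-1-e}{e}=\frac{q-1}{e}-1=k-1 .
\]
Therefore $\calD$ is of the type $(v,k,k-1;u)$ with $u=e$, and Corollary \ref{construction4} applies verbatim, with $s$ ranging over $1\le s\le u-1=e-1$ exactly as in the statement.

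It then remains only to evaluate the parameters produced by Corollary \ref{construction4} at these values. The block size is $sk=s(q-1)/e$; the number of base blocks is $\mu=\binom{e}{s}$, matching $\binom{u}{s}$; and the frequency is $t_1(k-1)+t_2(v-k-1)$ with $t_i=\binom{e-i}{s-i}$. The first summand is $t_1(q-1-e)/e$, and the only nonroutine bookkeeping is simplifying the second: since $k=(q-1)/e$,
\[
v-k-1=q-\frac{q-1}{e}-1=\frac{eq-(q-1)-e}{e}=\frac{eq-q+1-e}{e},
\]
which yields the stated frequency $t_1(q-1-e)/e+t_2(eq-q+1-e)/e$. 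I expect no genuine obstacle here: the entire argument is a substitution plus this one-line algebraic simplification, and all the conceptual content lies in the previously established Corollary \ref{construction4} and Lemma \ref{lem-DF1}. The only point demanding a moment's care is matching the index set $U$ against the $e$ cyclotomic classes, but since both the range of $s$ and the binomial coefficients $\mu,t_i$ are governed by the total number $e$ of blocks, the specialization is consistent.
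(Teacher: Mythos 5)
Your proposal is correct and is essentially identical to the paper's proof, which simply cites Corollary \ref{construction4} and Lemma \ref{lem-DF1}; your verification that the cyclotomic classes partition $\gf_q\setminus\{0\}$ and that $\gamma=(q-1-e)/e=k-1$, together with the simplification $v-k-1=(eq-q+1-e)/e$, is exactly the routine substitution the paper leaves implicit.
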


\begin{proof}
  The conclusion follows from Corollary \ref{construction4}  and Lemma \ref{lem-DF1}.
\end{proof}

\section{The construction with almost difference sets}\label{sec-adscons}

Finally, we describe the construction of difference families using almost difference sets.
Let $D$ be a $(v,k,\lambda,t)$ almost difference set in $G$. Suppose that the differences
$$d_1-d_2, d_1,d_2\in D, d_1\ne d_2$$
represent each element in $T$ exactly $\lambda$ times and
each nonzero element in $G\setminus T$ exactly $\lambda+1$ times.
For each $t\in T$ define the subset
$$\Delta_t=\{d-t: d \in D| d-t\not\in D\},$$
and $\delta_t$ is defined to be an arbitrary element in $\Delta_t$.
Now for each $x\in G^*$,
\begin{eqnarray}
\bar{D}_x=\left\{ \begin{array}{ll}
                  D_x & \mbox{ if } x\not\in T \\
                  D_x \cup \{\delta_x\} & \mbox{ if } x\in T, \\
                  \end{array}
          \right.
\end{eqnarray}
where $D_x=D\cap (D+x)$.

\begin{theorem}\label{adsdf}
The set $\{\bar{D}_x: x\in G^*\}$ is a $(v, \lambda+1, \lambda(\lambda+1); v-1)$
difference family in $G$ if and only if 
for each $t\in T$,
$$|\{x\in G^* | \delta_t, \delta_t+t\in \bar{D_x} \}|=\lambda.$$
\end{theorem}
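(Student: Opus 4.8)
The plan is to reduce the difference-family property to a multiplicity count on each group element, split that count into a part coming from pairs already inside the blocks $D_x=D\cap(D+x)$ and a part coming from the appended points $\delta_x$, and finally match the appended part against the stated condition. First I would record that $|\bar D_x|=\lambda+1$ for every $x\in G^*$: for $x\notin T$ one has $|D_x|=d_D(x)=\lambda+1$, while for $x\in T$ one has $|D_x|=\lambda$ and the single appended point $\delta_x$ (which satisfies $\delta_x\notin D$ and $\delta_x+x\in D$ by the definition of $\Delta_x$) enlarges the block by one. Together with the $v-1$ blocks this makes the parameter string $(v,\lambda+1,\lambda(\lambda+1);v-1)$ internally consistent through $uk(k-1)=\gamma(v-1)$.

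Next, writing $\Delta$ for the multiset $\sum_{x\in G^*}\{a-b:a,b\in\bar D_x,\ a\neq b\}$, I would split $\Delta=M_1+M_2$, where $M_1$ collects the differences of pairs lying inside $D_x$ and $M_2$ collects the differences using the appended point. The term $M_1$ is computed exactly as in the proof of Theorem \ref{dsdf}, the only change being that $|D\cap(D+w)|$ now takes two values: a fixed $g\in G^*$ occurs in $M_1$ with multiplicity $d_D(g)\bigl(d_D(g)-1\bigr)$, which is $\lambda(\lambda-1)$ when $g\in T$ and $\lambda(\lambda+1)$ when $g\in G^*\setminus T$. Hence the family is a difference family (i.e.\ $\Delta=\lambda(\lambda+1)(G\setminus\{1\})$) if and only if $M_2$ supplies the deficit $2\lambda$ to every $t\in T$ and $0$ to every $g\notin T$.

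The third step turns this into a clean on-$T$ statement. Since each $x\in T$ contributes exactly $2|D_x|=2\lambda$ ordered differences to $M_2$ (the point $\delta_x$ paired with the $\lambda$ elements of $D_x$, in both orientations), the total mass of $M_2$ is $2\lambda|T|$, which is precisely the mass of the multiset $2\lambda\cdot T$. Consequently the vanishing of $M_2$ off $T$ is automatic once every $t\in T$ receives multiplicity exactly $2\lambda$: the family is a difference family if and only if $M_2$ assigns multiplicity $2\lambda$ to each $t\in T$. This reduction is routine and is the part I am most confident about.

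Finally I would relate the on-$T$ multiplicities of $M_2$ to the stated counting condition. The multiplicity of $t$ in $M_2$ equals the number of blocks containing a pair $\{\delta_x,\delta_x+t\}$ plus the number containing $\{\delta_x,\delta_x-t\}$, the two orientations realizing $\pm t$. The stated quantity $N(t)=|\{x\in G^*:\delta_t,\delta_t+t\in\bar D_x\}|$ instead isolates the pair anchored at the fixed representative $\delta_t$ together with its companion $\delta_t+t\in D$, and one checks that $\delta_t\in\bar D_x$ forces $x$ to contribute its appended coordinate. Matching these two descriptions, so that requiring $N(t)=\lambda$ for every $t\in T$ becomes equivalent to $M_2$ placing multiplicity $2\lambda$ on every $t$, is the delicate point: one count ranges over the appended points $\delta_x$ of all $x\in T$, whereas the other is tied to the single chosen $\delta_t$, and passing between the orientations $t$ and $-t$ uses the symmetry $T=-T$ of the difference function. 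I expect this reconciliation, rather than the bookkeeping in the earlier steps, to be the main obstacle, and it is where I would concentrate the careful argument.
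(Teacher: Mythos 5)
Your first three steps are correct, and your count of $M_1$ (each $g$ occurring $d_D(g)\bigl(d_D(g)-1\bigr)$ times, i.e.\ $\lambda(\lambda-1)$ on $T$ and $\lambda(\lambda+1)$ off $T$) together with the mass argument reducing everything to ``$M_2$ gives multiplicity $2\lambda$ to each $t\in T$'' is actually tighter bookkeeping than the paper's own proof, whose Case~2 asserts that each $D\times D$ pair realizing $t\in T$ lies in $\lambda$ blocks when the correct count is $d_D(t)-1=\lambda-1$. But your proposal stops exactly where the proof has to happen, and the pointwise equivalence you hope to establish there --- $N(t):=|\{x:\delta_t,\delta_t+t\in\bar D_x\}|=\lambda$ for all $t$ being equivalent to $M_2(t)=2\lambda$ for all $t$ --- is not available. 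What is true, and what the ``if'' direction needs, is only an inequality: since $\delta_t\notin D$, the membership $\delta_t\in\bar D_x$ forces $x\in T$ with $\delta_x=\delta_t$, so the pair $\{\delta_t,\delta_t+t\}$ lives entirely in $M_2$ and lies in exactly $N(t)$ blocks; the difference $+t$ is realized by the two ordered pairs $(\delta_t+t,\delta_t)$ and $(\delta_{-t},\delta_{-t}-t)$, which are distinct because $\delta_t+t\in D$ while $\delta_{-t}\notin D$, whence $M_2(t)\ge N(t)+N(-t)=2\lambda$ under the hypothesis (using $T=-T$). One must then run the mass count a second time, at the level of these lower bounds, to force $M_2(t)=2\lambda$ on $T$ and $M_2=0$ off $T$; your step three assumed exact values of $M_2$ and cannot simply be quoted here.

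For the ``only if'' direction the gap is worse: from $M_2(t)=2\lambda$ you can only extract $N(t)+N(-t)\le 2\lambda$, since other appended-point pairs may also realize $t$ and nothing splits the sum evenly between $t$ and $-t$. A configuration with $N(t)=\lambda+1$, $N(-t)=\lambda-1$ and no other contributions satisfies $M_2(\pm t)=2\lambda$ while violating the stated condition, so the conclusion $N(t)=\lambda$ does not follow from your reduction; this direction needs a genuinely separate argument (the paper itself dismisses it with ``follows easily,'' which is not a proof either). In short: the decomposition and the reduction to an on-$T$ statement are right and well executed, but the decisive fourth step is missing, and in the symmetric ``matching'' form you propose it cannot be completed.
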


\begin{proof}
Let $G^*=G\setminus\{0\}$. Since $D$ is a $(v,k,\lambda;t)$ almost difference set,
$|D_x|=\lambda$ for every $x \in T$ and $|D_x|=\lambda+1$ for every $x \in G^*\setminus T$.
By the definition of $\bar{D_x}$ we see that $\bar{D_x}=\lambda+1$ for all $x\in G^*$.

We now compute the multiset $M:=\{ d_1-d_2: d_1,d_2\in \bar{D_x}, x\in G^* | d_1\ne d_2 \}$.
Clearly there are $\lambda(\lambda+1)(v-1)$ differences $d_1-d_2$ in $M$.

Case 1: For each $g\in G^*\setminus T$, there are $\lambda+1$ pairs $(d_1,d_2)\in D\times D$
such that $g=d_1-d_2$. Let $(d_1,d_2)$ be such a pair, note that $D_x\subseteq \bar{D_x}$ for all
$x\in G^*$. Using the same argument in Theorem \ref{dsdf} we see that $d_1-d_2$ appears at least
$(\lambda+1)-1=\lambda$ times in $M$. Hence the multiplicity of $g$ in $M$ is at least
$\lambda(\lambda+1)$. Since $|G^*\setminus T|=v-t-1$, at least $(v-t-1)\lambda(\lambda+1)$ differences
$d_1-d_2$ are used in $M$ in this case.

Case 2: For each $g\in T$, there are $\lambda$ pairs $(d_1,d_2)\in D\times D$
such that $g=d_1-d_2$. Similarly, for each such pair $(d_1,d_2)$, the difference $d_1-d_2$ appears
at least $\lambda$ times.
Now by the definition of $\bar{D_g}$, we have $g=(\delta_g+g)-\delta_g$. Since $\delta_g\not\in D$
and $\delta_g+g=(d-g)+g\in D$ for some $d\in D$, the difference $g=(\delta_g+g)-\delta_g$ is not
included in the above $\lambda$ differences.
Now we compute the multiplicity of $(\delta_g+g)-\delta_g$ in $M$.
The number of $x \in G^*$ such that
$$
\delta_g+g \in \bar{D_x} \mbox{ and } \delta_g \in \bar{D_x}
$$
is equal to $\lambda$ by the assumption.
Hence, the multiplicity of $g$ in $M$ is at least $\lambda+\lambda^2=\lambda(\lambda+1)$
and at least $\lambda(\lambda+1)t$ differences in $M$ are used in this case.

Combining the above two cases, we use at least
$(v-t-1)\lambda(\lambda+1)+\lambda(\lambda+1)t=\lambda(\lambda+1)(v-1)$ differences
with the form in $M$. However, there are $\lambda(\lambda+1)(v-1)$
differences altogether in $M$. Therefore, the multiplicity of each element in $G^*$ is $\lambda(\lambda+1)$
and we prove the necessary part.

Conversely, if $\{\bar{D}_x: x\in G^*\}$ is a $(v, \lambda+1, \lambda(\lambda+1); v-1)$ difference family,
the conclusion follows easily.

\end{proof}

Let $q \equiv 1 \bmod{4}$ be a power of a prime. Let $D$ be the set of all quadratic
residues in $(\gf_{q}, +)$.
It is known that $D$ is a $(q, (q-1)/2, (q-5)/4; (q-1)/2)$ almost difference set in $(\gf_{q}, +)$.

For each $x \in \gf_{q}^*$, define
$$
D_x=D \cap (D +x),
$$
where $D+x=\{d+x: d \in D\}$. It follows from the cyclotomic numbers of order two that
$|D_x|=(q-5)/4$ if $x$ is a quadratic residue in $\gf_{q}^*$ and $|D_x|=(q-1)/4$ if $x$
is a quadratic nonresidue in $\gf_{q}^*$ \cite{Stor}.

We now define for each $x \in \gf_{q}$
\begin{eqnarray}
\bar{D}_x=\left\{ \begin{array}{ll}
                  D_x & \mbox{ if $x$ is not a square} \\
                  D_x \cup \{0\} & \mbox{ if $x$ is a square.}
                  \end{array}
          \right.
\end{eqnarray}

\begin{theorem}\label{thm-lll}
Then $\{\bar{D}_x: x \in \gf_{q}^*\}$ is a $(q, (q-1)/4, (q-1))(q-5)/16); q-1)$
difference family in $(\gf_{q}, +)$.
\end{theorem}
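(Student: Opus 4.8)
The plan is to recognise Theorem \ref{thm-lll} as a direct application of Theorem \ref{adsdf} to the quadratic-residue almost difference set, so that the whole task reduces to verifying the single counting condition appearing in that theorem. First I would set up the dictionary: here $D$ is the $(q,(q-1)/2,(q-5)/4;(q-1)/2)$ almost difference set of quadratic residues, so in the notation of Theorem \ref{adsdf} we have $\lambda=(q-5)/4$, and the set $T$ on which the difference function takes the smaller value $\lambda$ is exactly the set of nonzero squares. This is precisely the cyclotomic fact quoted just before the statement, namely that $|D_x|=(q-5)/4$ when $x$ is a nonzero square and $|D_x|=(q-1)/4$ otherwise. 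The choice $\delta_x=0$ built into the construction is legitimate because for a nonzero square $x$ we have $x\in D$ while $0=x-x\notin D$, so $0\in\Delta_x$; hence $\bar D_x=D_x\cup\{0\}$ for squares and $\bar D_x=D_x$ otherwise, which matches the definition used here. Granting the condition of Theorem \ref{adsdf}, the parameters $(q,\lambda+1,\lambda(\lambda+1);q-1)=(q,(q-1)/4,(q-1)(q-5)/16;q-1)$ follow at once.

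The crux is therefore to establish that for every square $t$,
$$\big|\{x\in G^*:\ 0\in\bar D_x\ \text{and}\ t\in\bar D_x\}\big|=\lambda=\frac{q-5}{4}.$$
I would first pin down when $0\in\bar D_x$: since $0\notin D$, we have $0\notin D_x=D\cap(D+x)$ for every $x$, so $0$ can only enter $\bar D_x$ through the adjoined element, which happens exactly when $x$ is a nonzero square. Thus the count runs over nonzero squares $x$, and for such an $x$ the (nonzero) element $t$ lies in $\bar D_x=D_x\cup\{0\}$ iff $t\in D_x$, i.e. iff $t\in D$ (automatic, as $t$ is a square) and $t-x\in D$. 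So the quantity to compute is the number of nonzero squares $x$ for which $t-x$ is again a nonzero square.

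To evaluate this I would normalise by the fixed square $t$: writing $x=tu$, the conditions ``$x$ a nonzero square'' and ``$t-x$ a nonzero square'' become ``$u$ a nonzero square'' and ``$1-u$ a nonzero square'', since multiplication by the square $t$ preserves quadratic character. Setting $v=-u$ and using that $-1$ is a square for $q\equiv 1\bmod 4$, these read ``$v\in C_0^{(2,q)}$'' and ``$1+v\in C_0^{(2,q)}$'', so the count equals the cyclotomic number $(0,0)_2$. The classical value $(0,0)_2=(q-5)/4$ for $q\equiv 1\bmod 4$ then closes the argument. The main obstacle is precisely this last reduction: one must carry out the quadratic-character bookkeeping carefully (the role of $-1$ being a square, and the fact that the degenerate values $u=0$ and $u=1$ are automatically discarded because $0\notin D$) so that the combinatorial count lands on $(0,0)_2$ rather than on a neighbouring cyclotomic number. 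Once the hypothesis of Theorem \ref{adsdf} is verified in this way, no further computation is needed and the conclusion follows.
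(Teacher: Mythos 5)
Your proposal is correct and follows exactly the route the paper takes: specialize Theorem \ref{adsdf} to the quadratic-residue almost difference set with $T=D$ and $\delta_x=0$, so that everything reduces to the counting condition $|\{x\in G^*: 0,t\in \bar{D}_x\}|=(q-5)/4$. The only difference is that where the paper dismisses this condition as ``clearly satisfied,'' you actually verify it by reducing the count to the cyclotomic number $(0,0)_2=(q-5)/4$ for $q\equiv 1\bmod 4$, which is a welcome filling-in of the omitted detail rather than a different argument.
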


\begin{proof}
In the proof of Theorem \ref{adsdf}, let $T=D$ and let $\delta_x=0$ for all $x\in D$.
The condition $|\{x\in G^* | 0, t\in \bar{D_x} \}|=\frac{p-5}4=\lambda$ is clearly satisfied.
Then the conclusion of this theorem follows from that of Theorem \ref{adsdf}.
\end{proof}

\begin{example}
Let $q=13$. Then the following blocks form a $(13, 3, 6; 12)$ difference family in
$(\gf_{13},+)$:
\begin{eqnarray*}
&& \{ 0, 4, 10 \},
\{ 1, 3, 12 \},
\{ 0, 4, 12 \},
\{ 0, 1, 3 \},
\{ 1, 4, 9 \},
\{ 3, 9, 10 \},
\\
&& \{ 3, 4, 10 \},
\{ 4, 9, 12 \},
\{ 0, 10, 12 \},
\{ 0, 1, 9 \},
\{ 1, 10, 12 \},
\{ 0, 3, 9 \}.
\end{eqnarray*}
\end{example}

\section{Concluding remarks}\label{sec-final} 

In this paper, we presented six constructions of difference families. Four of them employ difference sets, 
one uses disjoint difference families and the other one uses almost difference sets. All constructions do not require 
that $G$ be Abelian, except the forth one. The third and fifth constructions yield more general constructions 
of difference families using BIBDs and NRBs. These new constructions yield many new parameters of 
difference families. Furthermore, they established new bridges between these combinatorial objects and 
difference families.

As a summary, Table \ref{tab-222222} lists all parameters of difference families
obtained in this paper. Unlikely Table \ref{tab-111111}, we only give the parameters from the generic constructions,
i.e., using the parameters of the difference sets, disjoint difference families, or almost difference sets.
In Table \ref{tab-222222}, the symbol $\star$ denotes the constructions use difference sets, $\clubsuit$ denotes the ones use
disjoint difference families and $\spadesuit$ denotes the ones use almost difference sets.
The parameters $v,k,\lambda$ are the ones of the corresponding combinatorial objects.

{\footnotesize{
\begin{table}[ht]
   \centering
   \caption{Summary of parameters of difference families in this paper, where $k$ is the block size, $\gamma$ is the frequency}\label{tab-222222}
   \begin{tabular}{|c|c|c|c|c|c|} \hline
    $v$ & $k$       & $\gamma$               & Conditions    & reference & \\ \hline \hline
    $v$ & $\lambda$ & $\lambda(\lambda-1)$   &               & Thm. \ref{dsdf}  &$\star$\\ \hline
    $v$ & $\lambda$ & $\lambda(\lambda-1)/2$ &$\begin{array}{lll}
                                                 && v\  \mbox{odd and}\ \gcd(v,\lambda)=1,\ \mbox{or} \\
						 && D\ \mbox{is a skew Hadamard}\\
						 && \mbox{difference set} \\
					       \end{array}$                 & Thm. \ref{construction2} &$\star$\\ \hline
    $v$ & $k+s$ &$\begin{array}{lcc}
                    &&\lambda\left(v-k \atop{s}  \right)+2(k-\lambda)\left(v-k-1 \atop{s-1}\right)\\
		    &&+(v-2k+\lambda)\left( v-k-2 \atop{s-2} \right) \\
		  \end{array}$  &$1\le s\le v-k-1$ &Cor. \ref{construction3} &$\star$\\ \hline
    $v$ & $k-s$ &$\lambda\left(k-2 \atop{s-2} \right)$ &$1\le s\le k-1$ & Cor. \ref{construction5} &$\star$ \\ \hline
    $n$ & $K$ & $\lambda$ &$\begin{array}{lll}
                             && G\ \mbox{is Abelian},\ \forall n\mid v \\
			     && K\ \mbox{is the set of the block sizes}\\
			     && \mbox{depends on }\ D \\
			   \end{array}$			     &Thm. \ref{thm-mainmain} &$\star$\\ \hline
    $v$ & $sk$ & $(k-1)\left( n-1 \atop{s-1} \right) +(kn-k)\left(n-2 \atop{s-2}\right)$  &$D$ is a $(v,k,k-1)$-DDF in $G$ &Cor. \ref{construction4} &$\clubsuit$ \\ \hline
    $v$ & $\lambda$ &$\lambda(\lambda+1)$ &$\begin{array}{lll}
			                                            && D\ \mbox{is}\ (v,k,\lambda,t)\mbox{-ADS in}\ G, \\ 
                                                                    && |\{x\in G^* | \delta_t, \delta_t+t\in \bar{D_x} \}|=\lambda \\
								    && \mbox{for each}\ t\in T, \\
			                                        \end{array}$  &Thm. \ref{adsdf} &$\spadesuit$\\ \hline
			 \end{tabular}
\end{table}
}}

\end{document}